\newtheorem{theorem}{Theorem}[section]
\newtheorem{claim}[theorem]{Claim}
\newtheorem{proposition}[theorem]{Proposition}
\newtheorem{lemma}[theorem]{Lemma}
\newtheorem{fact}[theorem]{Fact}
\theoremstyle{definition}
\newtheorem{definition}[theorem]{Definition}
\newcommand{\E}{\mathbb{E}}
\newcommand{\T}{\mathcal{T}}
\newcommand{\eps}{\varepsilon}
\newcommand{\EDTC}{EDTC\xspace}
\newcommand{\PEDTC}{PEDTC\xspace}
\newcommand{\uniform}{{\underset{U}{\gets}}}
\theoremstyle{plain}
        \pgfpointadd{\pgfpointdecoratedinputsegmentlast}{\pgfpoint{1pt}{1pt}}
\title{\textbf{Coding for interactive communication  correcting \\ insertions and deletions}}
\author{Mark Braverman\thanks{Department of Computer Science, Princeton University, email: mbraverm@cs.princeton.edu. Research supported in part by an   an NSF CAREER award (CCF-1149888), NSF CCF-1215990, a 
Turing Centenary Fellowship, a Packard Fellowship in Science and Engineering, and the Simons Collaboration on Algorithms and Geometry.}
\and 
Ran Gelles\thanks{Department of Computer Science, Princeton University, email: rgelles@cs.princeton.edu} 
\and
Jieming Mao\thanks{Department of Computer Science, Princeton University, email: jiemingm@cs.princeton.edu}
\and
Rafail Ostrovsky\thanks{Department of Computer Science and department of Mathematics, UCLA, email: rafail@cs.ucla.edu. Work supported in part by NSF grants 09165174, 1065276, 1118126 and 1136174, DARPA, US-Israel BSF grant 2008411, OKAWA Foundation Research Award, IBM Faculty Research Award, Xerox Faculty Research Award, B. John Garrick Foundation Award, Teradata Research Award, and Lockheed-Martin Corporation Research Award.  The views expressed are those of the author and do not reflect the official policy or position of the Department of Defense or the U.S. Government.}}
\begin{document}

\maketitle


\begin{abstract}
We consider the question of interactive communication, in which two remote parties perform a computation while their communication channel is (adversarially) noisy. 
We extend here the discussion into a more general and stronger class of noise,
namely, we allow the channel to perform \emph{insertions} and \emph{deletions} of symbols.
These types of errors may bring the parties ``out of sync'', 
so that there is no consensus regarding the current round of the protocol.

In this more general noise model, we obtain the first interactive coding scheme 
that has a constant rate and tolerates noise rates of up to~$1/18-\eps$.
To this end we develop a novel primitive we name 
\emph{edit distance tree code}. The edit distance tree code is designed to replace the Hamming distance constraints in Schulman's tree codes (STOC~93), with a stronger edit distance requirement. However, the straightforward generalization of tree codes to edit distance does not seem to yield a primitive that suffices for communication in the presence of synchronization problems. Giving the ``right'' definition of edit distance tree codes is a main conceptual contribution of this work. 
\end{abstract}



\section{Introduction}

In the setting of interactive communication
two remote parties, Alice and Bob, wish to run a distributed protocol utilizing a noisy communication channel. The study of this problem was initiated by the seminal work of Schulman~\cite{schulman92,schulman93,schulman96} who showed a coding scheme for interactive protocols in which 
the communication complexity of the resilient protocol 
is larger than the communication of the input (noiseless) protocol
by only a constant factor.
Schulman's coding schemes tolerates \emph{random noise} where each bit is flipped with a small probability, as well as some \emph{adversarial noise} where the only restriction on the noise is the amount of bits being flipped by the adversary.
Subsequently, many works considered the question of interactive communication, obtaining   coding schemes  reaching optimality in terms of their
computational efficiency~\cite{GMS11,GMS14,BK12,BN13,BKN14,GH14},
communication efficiency~\cite{KR13,haeupler14, GH15}, 
and 
noise resilience,
both in the standard setting~\cite{BR11,BR14,BE14,GH14}, and in various other noise models and settings~\cite{ORS05,ORS09,FGOS13,FGOS15,GHS14,GH14,AGS13,BNTTU14,EGH15}.

The recent successes in developing the theory of interactive error-correcting codes brought the study of two-way interactive coding to nearly match what we know about good codes against adversarial noise in the one-way setting. 

So far, all works focused on either substitutions (where Eve can substitute a sent symbol with a different symbol from the alphabet) or erasures (where Eve can substitute a sent symbol with a $\bot$). 
In this work we extend the question of coding for interactive communication over noisy channels to a more general type of noise. Namely, we consider channels with \emph{insertions and deletions} (indels). In the one-way setting, this corresponds to the insertion and deletion model, where Eve is allowed to completely remove transmitted symbols, or inject new symbols. 
Note that this model is stronger than the substitution model, since 
a substitution can always be implemented as a deletion followed by an insertion. Even in the one-way regime, this model is more difficult to analyze than the model with substitution errors. 
As an example, Schulman and Zuckerman~\cite{SZ99} gave a polynomial-time encodable and decodable codes for insertion/deletion errors. Their code can tolerate around $\frac{1}{100}$ fraction of insertion/deletion errors. This should be contrasted with efficient codes in the standard noise setting, e.g.,~\cite{justesen72}, tolerating about~$\frac14$ fraction of bit flips.

The major additional challenge in dealing with indels in the interactive setting compared to the non-interactive indel model and the interactive substitutions model, is that we can no longer assume that Alice and Bob are synchronized: at a given time they may be at different stages of their sides of the protocol! Indeed, if Eve deletes Alice's transmission to Bob and additionally injects a `spoofed' reply from Bob back to Alice, then while Bob has received no message and assumes the protocol hasn't advanced yet, 
Alice has received a spoofed reply, and proceeds to the next step of her protocol. From this point and on, unless the insertion/deletion is detected, the parties are unsynchronized, as they run different steps of the protocol. The challenge in dealing with this model is to design a protocol that manages to succeed even without knowing whether the two parties are synchronized.

\subsection{Modeling insertions and deletions}
Some care is required when dealing with insertion and deletion noise patterns, as certain choices make the model too strong or too weak. For example, consider the case where Alice and Bob send each other symbols in an alternating way.
Then, even if a single deletion of a symbol is allowed the noise can cause the protocol to ``hang'': Bob will be waiting for a symbol from Alice, while Alice will be waiting for Bob's response. Clearly, such a model is too strong for our purpose, and we should restrict the allowed noise patterns to preserve the protocol's liveliness.

There are two main paradigms for distributed protocol in which parties are not fully-synchronized. The first is a \emph{message-driven} paradigm, in which each party ``sleeps'' until the arrival of a new message that triggers it into performing some computation and sending a message to the other party. The second is \emph{clock-driven}, where each party holds a clock: each clock tick, the party wakes up, checks the incoming messages queue, performs some computation, and sends a message to the other side. The issue here is that different parties may have mismatching or skewed clocks. Then, instead of acting in an alternating manner, one party may wake up several times while the other party is still asleep. 

We emphasize that if the parties have matching clocks, then no insertions and deletions are possible---channel corruption in this case has either the effect of changing one symbol to another (as in a standard noisy channel), or causing a detectible corruption, i.e., an erasure. Both these types of noise are substantially weaker than insertions and deletions, and were already analyzed in previous work (e.g.,~\cite{schulman96,BR14,FGOS15,EGH15}).

Our noise model, which we describe shortly, makes sense for both the above paradigms: it guarantees liveliness in a message-driven setting; 
for the clock-driven model, we can show that any such settings reduces to our model, that is, any resilient protocol in our model can be used to obtain a resilient protocol in the clock-driven setting. The skewness of the clocks in that case, is related to the noise-resilience of the protocol in our model. See the full version for the complete details.

\smallskip
In this work we assume a message-driven setting, where the parties normally speak in alternating manner. Any corruption in our model must be a deletion which is followed by an insertion. 
We name each such tampering as an \emph{edit corruption}.
\begin{definition}[Edit Corruption]
An edit-corruption is a single deletion followed by a single insertion (whether the inserted symbol is aimed at the same or the opposite party as the deleted message). 
\end{definition}

This gives rise to two types of attacks Eve can perform: 
(i) delete a symbol and replace it with a different symbol (insert a symbol at the same direction as the deleted symbol; a substitution attack);  
(ii) delete a symbol, and insert a spoofed reply from the other side (insert a symbol at the opposite direction of the deleted symbol). 
The second type of corruption has an effect of making the parties `out-of-sync': one party advances one step in the protocol, while the other does not; see Figure~\ref{fig:attacks}. 
Note that a substitution has a cost a single corruption, i.e., it is counted as a single deletion followed by a single insertion. 
Also note that 
although an outside viewer can split Eve's attack into pairs of deletion-and-insertion, 
the string that a certain party receives,
from that party's own view, 
suffers an \emph{arbitrary} pattern of insertions/deletions.

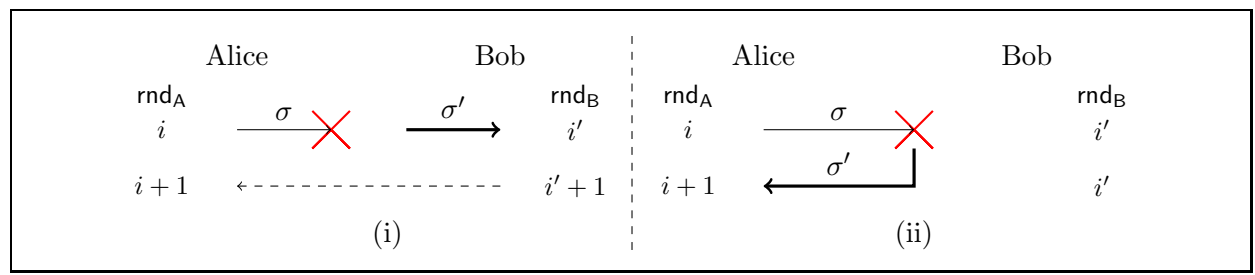
\begin{figure}[ht]
\vspace{1ex plus 1ex}
\begin{center}
\begin{minipage}{1.0\columnwidth}
\begin{framed}
\begin{center}
\begin{tikzpicture}
\draw[->,very thick] (2.25,2.75) -- node[above]{$\sigma'$} (3.5,2.75) ;
\draw[thick,red] (1.25,2.75) -- +(-0.25,-0.25) -- +(.25,.25);
\draw[thick,red] (1.25,2.75) -- +(0.25,-0.25) -- +(-.25,.25);
\draw[->]  (0,2.75)  -- node[above] {$\sigma$} (1.25,2.75);

\draw[thick,red] (9,2.75) -- +(-0.25,-0.25) -- +(.25,.25);
\draw[thick,red] (9,2.75) -- +(0.25,-0.25) -- +(-.25,.25);
\draw[->]  (7,2.75)  -- node[above] {$\sigma$} (9,2.75);
\draw[->,very thick]  (9,2.5) -- (9,2) -- node[above] {$\sigma'$} (7,2);

\draw[dashed] (5.25,1.15) -- (5.25,4);
\node at (0,3.75) {Alice};
\node at (3.5,3.75) {Bob};
\node at (7,3.75) {Alice};
\node at (10.5,3.75) {Bob};

\node at (2,1.35) {(i)};
\node at (9,1.35) {(ii)};

\node at (-1,3.2)  {\small $\mathsf{rnd_A}$};
\node at (-1,2.75)   {\small $i$};
\node at (-1,2)   {\small $i+1$};
\draw[->,dashed] (3.5,2) -- (0,2);  
\node at (4.5,3.2)  {\small $\mathsf{rnd_B}$};
\node at (4.5,2.75)   {\small $i'$};
\node at (4.5,2)   {\small $i'+1$};

\coordinate (x) at (7,0);  
\node at ($(x)+(-1,3.2)$)  {\small $\mathsf{rnd_A}$};
\node at ($(x)+(-1,2.75)$)   {\small $i$};
\node at ($(x)+(-1,2)$)   {\small $i+1$};
\node at ($(x)+(4.5,3.2)$)  {\small $\mathsf{rnd_B}$};
\node at ($(x)+(4.5,2.75)$)   {\small $i'$};
\node at ($(x)+(4.5,2)$)   {\small $i'$};

\end{tikzpicture}
\end{center}\vspace{-1.5em}
\end{framed}
\end{minipage}
\end{center}\vspace{-.5em}
\caption{\small An illustration of the two insertion/deletion attacks: (i) a deletion followed by an insertion in the same direction (a substitution); (ii)  a deletion followed by an insertion to the opposite direction (an out-of-sync attack).
The deleted transmission is marked with a cross, and the inserted transmission is marked with a bold arrow. The dashed arrow denotes a possible (non-interrupted) reply.}
\label{fig:attacks}
\end{figure}


\subsection{Our Results and Techniques}
\textbf{Tree codes with edit distance.} When only a single message is to be transmitted (i.e., in the one-way setting), codes that withstand insertions and deletions were first considered by Levenshtein~\cite{levenshtein66}. In such codes, each two codewords must be far away in their \emph{edit distance}, a notion of distance that captures the amount of insertions/deletions it takes to convert one codeword to another. The edit distance replaces the Hamming distance, which essentially counts the amount of bit flips required to turn one codeword to another.

A key ingredient in interactive-communication schemes is the \emph{tree code}~\cite{schulman96},
a labeled tree such that the labels on each descending path in the tree can be seen as a codeword. 
The tree code is parametrized by a distance parameter $\alpha$, and it holds that any two codewords whose paths diverge from the same node, are at least $\alpha$-apart in their Hamming distance. 
Encoding a message via a tree code allows a party to \emph{eventually} obtain the message sent by the other side, as long as not too many errors have happened.
This in turn allows the parties to correct errors that previously occurred in the simulation, and  revert the simulation back into a correct state~\cite{schulman96,BR14}. 
In order to keep the communication overhead a constant factor when tree code encoding is used in interactive schemes, it is required that each label of the tree comes from an alphabet of constant size (that is, the size of the alphabet is independent of the tree's depth and thus independent of the length of protocol to simulate).

It is only natural to believe that we could obtain interactive-communication schemes that withstand insertions/deletions by replacing tree codes with a stronger notion of codes, namely, \emph{edit distance tree codes}. 
In edit distance tree codes, each two codewords (possibly of different lengths) which diverge at a certain point,  are required to be far apart in their edit distance rather than their Hamming distance. 
Yet, since the parties are not synchronized, new difficulties arise.
To give a simple example, assume Alice sends one of the two
following encodings
$s_1=ABC AAA BBB$ and $s_2=ABC ABC AAA$, and assume Bob has
receives the string $ABC BBB$.
If Bob knew that Alice thinks she is in round $6$ of the protocol,
he would decode to~$s_2$;
if he knew that Alice thinks she is in round~$9$, he would decode
to~$s_1$. Alas, he does not know which is the case!


To mitigate situations in which not being synchronized may hurt us,  we require an even stronger property, namely, we wish that the suffixes (of arbitrary lengths) of any two overlapping codewords will have appropriately large edit distance (See Definitions~\ref{def:lambda} and~\ref{def:edtc} for the precise condition). 
This stronger property 
guarantees that two ``branches'' in the tree are far apart in their edit distance, even when they are shifted with respect to each other due to lack of synchronization possibly caused by previous indels. 
We can then show that as long as not too many indels occurred in the \emph{suffix} of the received codeword, the tree-code succeeds to recover the entire sent message.
Crucial in this approach is a notion of distance 
we call \emph{suffix distance} (Definition~\ref{def:SD}),
that measures the amount of noise in a codewords' suffix.
This generalizes a distance measure by Franklin et al.~\cite{FGOS13,FGOS15} (see also Braverman and Efremenko~\cite{BE14}) to the case where the received word may be misaligned with respect to the sent word, due to indels.


%
Alas, while (Hamming distance) tree codes over a constant alphabet were shown to exist by Schulman~\cite{schulman96}, it is not clear if such trees exist for edit distance, and if so, for which distance parameter~$\alpha$, as Schulman's proof doesn't carry over to the edit distance case. Our first result (Section~\ref{sec:edtc}) shows the existence of edit distance tree codes, for any distance parameter~$\alpha$,
\begin{theorem}[Informal]\label{thm:edtc-exist}
For any $\alpha <1$ and any $d,n\in \mathbb{N}$ 
there exists a $d$-ary edit distance tree code of depth~$n$ over a constant-size alphabet.
\end{theorem}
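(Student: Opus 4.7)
The plan is to use a probabilistic argument: assign each non-root node of the $d$-ary tree an independent uniformly random label from an alphabet $\Sigma$ of size $q$ to be chosen as a sufficiently large constant depending only on $\alpha$ and $d$, and then show that with positive probability the resulting random tree satisfies the (suffix) edit distance conditions that define an edit distance tree code.

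First I would control the probability that a single pair of random strings violates the edit distance condition. Using the standard edit-ball bound---the number of strings at edit distance at most $k$ from a fixed string of length $\ell$ is bounded by $\binom{\ell+k}{k}^{2} q^{k}$, which for $k=\alpha\ell$ is at most $2^{O(H(\alpha)\ell)}\, q^{\alpha\ell}$---I get that the probability that two i.i.d. random strings of lengths $\ell_1,\ell_2$ lie within edit distance $\alpha\max(\ell_1,\ell_2)$ decays as $c^{-\max(\ell_1,\ell_2)}$ for some constant $c>1$, provided $\log q$ is taken larger than roughly $2H(\alpha)/(1-\alpha)$. The same analysis applies, with only polynomial overhead, when I union-bound over all suffix lengths to handle the stronger suffix-distance requirement, since each individual bad event still decays exponentially in the relevant length.

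Second, I would turn these per-pair probability bounds into an existence statement about the whole tree. The difficulty is that the number of pairs of diverging paths with divergence node $v$ at depth $h$ and suffix lengths $\ell_1,\ell_2$ is roughly $d^{h+\ell_1+\ell_2}$, so a naive union bound gives expected bad-event count of order $d^n$ even after the $c^{-\max(\ell_1,\ell_2)}$ decay makes the per-$v$ contribution $O(1)$. To circumvent this, I would apply the (asymmetric) Lovász Local Lemma: the bad event attached to the pair of paths $v\to u_1,\; v\to u_2$ depends only on the labels on those two branches, so two bad events are mutually independent whenever their label supports are disjoint. Assigning each event a weight tuned to $\max(\ell_1,\ell_2)$ and using the exponential decay of bad-event probabilities to absorb the $d^{\ell_1+\ell_2}$-type combinatorial growth of the dependency neighborhoods, I would verify the LLL condition for $q$ a sufficiently large constant, concluding that with positive probability no bad event occurs.

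The main obstacle is the LLL verification step, and in particular the combinatorial accounting of the dependency graph: given a bad event associated to a pair of paths of sizes $(\ell_1,\ell_2)$ rooted at some $v$, I must enumerate (by depth and by suffix lengths) the bad events that share at least one labelled node with it and check that the weighted sum of their probabilities is bounded by a constant. This calls for a careful case split on whether the shared node lies on the common part above $v$, on one of the two branches, or in the subtree containing $v$ as an interior vertex, and for balancing the suffix-length decay carefully enough that the constant $q(\alpha,d)$ absorbs the resulting polynomial-in-$\ell$ overhead introduced by the suffix-distance variant of the definition. An alternative path that I would keep in reserve is a level-by-level greedy argument in which one proves that at each newly added node a valid label exists because the number of labels forbidden by existing partial bad events remains smaller than $q$; but the LLL route seems cleaner for obtaining a clean constant-alphabet bound independent of $n$.
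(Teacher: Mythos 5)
Your approach is genuinely different from the paper's. The paper does not use the Local Lemma: it builds the tree inductively in the depth, gluing $d$ excellent trees of depth $n-1$ under a fresh random root level, and the whole point is a strengthened induction hypothesis --- a code is \emph{excellent} if, besides being $\alpha$-bad-lambda free, its \emph{potential probability} $P_i(C)$ (the probability that hanging $C$ below a random path of length $i$ creates a bad lambda crossing the old root) satisfies $P_i(C)\cdot(d c_1)^i<1$ for every $i$. That quantitative control over how a finished subtree interacts with labels yet to be chosen above it is exactly what substitutes for your dependency-graph accounting; the induction shows a constant fraction ($\ge 1/11$) of the glued trees stay excellent, using a near-uniformity lemma for codewords of a random excellent tree (Lemma~\ref{lem:disuni}) together with the edit-distance tail bound (Lemma~\ref{lem:edpr}). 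Your route replaces this with a one-shot uniform labelling analyzed by the asymmetric LLL. Both approaches must confront the same obstruction --- the first moment over the $d^n$ divergence nodes diverges --- and you identify that correctly.

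Whether your route closes depends entirely on the verification you defer, and two points there need more care than your sketch gives them. First, the per-pair probability estimate should be the LCS-based bound of Lemma~\ref{lem:edpr}, namely $\Pr[ED\le\alpha m]\le|\Sigma|^{-(1-\alpha)m/2}$ with $m=\max(|AD|,|BE|)$; the edit-ball bound you quote has exponent $\ell_2-\alpha\max(\ell_1,\ell_2)$, which can be negative when $\alpha>1/2$ and the lengths differ, so you also need the observation that $ED\ge\big|\,|AD|-|BE|\,\big|$ forces $|BE|\ge(1-\alpha)|AD|$, keeping the two lengths comparable. Second, the worst-case number of max-length-$m'$ bad events through a fixed edge is $\Theta(d^{2m'})$ (attained when the edge sits near the divergence node, not at a typical deep edge), so the weights must decay strictly faster than $d^{-2m'}$; with $x=y^{m'}$, $y<d^{-2}$, each of the $2m$ edges of an event contributes an $O(1)$ weighted neighborhood sum, the product over the neighborhood is $e^{-O(m)}$, and the condition reduces to $|\Sigma|^{-(1-\alpha)/2}\le y\,e^{-O(1)}$, i.e.\ a constant alphabet of size roughly $(Cd^2)^{2/(1-\alpha)}$. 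The enumeration must also include the $A\ne B$ configurations of Definition~\ref{def:lambda} (suffixes starting strictly above the divergence node), which your case split anticipates. So the plan appears completable, but be aware that the step you label the ``main obstacle'' is essentially the entire proof, and as written it is asserted rather than carried out.
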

In section~\ref{sec:edtc} we prove the existence of such trees and prove the above lemma. Similarly to \cite{schulman96}, the proof is inductive in the depth of the tree, but the inductive statement needs to be further strengthened in order for the induction to go through.

As in the case of standard tree codes, finding an \emph{efficient} construction for such trees remains an important open question.
Building on the techniques of Gelles, Moitra and Sahai~\cite{GMS11,GMS14} we give 
in Appendix~\ref{sec:potent} an efficient randomized construction of a relaxed notion for edit-distance tree codes, we call a \emph{potent edit distance} tree codes  (see Definition~\ref{def:PEDTC} for the precise definition of potent edit-distance tree codes). 
These trees satisfy the edit-distance guarantee \emph{almost} everywhere, 
and are good enough to replace the tree-code notion of Theorem~\ref{thm:edtc-exist} in most applications.
\begin{theorem}[Informal]\label{thm:pedtc-exist}
For any $\alpha <1$ and any $d,n \in \mathbb{N}$ 
there exists a construction of a $d$-ary potent edit distance tree code of length~$n$ over a constant-size alphabet. The construction is efficient and succeeds with overwhelming probability (in~$n$).
\end{theorem}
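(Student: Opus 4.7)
The plan is to construct the tree at random and show, via a union bound argument in the style of Gelles--Moitra--Sahai, that with overwhelming probability the resulting tree satisfies the potent edit-distance property. Concretely, I would label each edge of the complete $d$-ary tree of depth $n$ by an independent uniform draw from a fixed alphabet $\Sigma$, whose size is a constant depending only on $\alpha$ and $d$. The constant $|\Sigma|$ will be chosen large enough that the edit-distance ball of radius $\alpha \ell$ around any fixed length-$\ell$ string is a negligible fraction of $|\Sigma|^{\ell}$.

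The key combinatorial estimate is a single-pair bound: for two strings $x,y \in \Sigma^{\ell}$ sampled uniformly and independently, the probability that $\ED(x,y) \le \alpha \ell$ is at most $\exp(-\Omega(\ell))$ once $|\Sigma|$ is large enough in terms of $\alpha$. This follows from the standard volume bound $|\{y : \ED(x,y)\le \alpha\ell\}| \le \binom{\ell}{\le \alpha\ell}^{2}|\Sigma|^{\alpha\ell}$, divided by $|\Sigma|^{\ell}$; choosing $\log|\Sigma|$ to dominate $2H(\alpha)+\alpha\log|\Sigma|$ by a positive constant gap makes the ratio decay exponentially in $\ell$. I would then extend this to the suffix-distance variant used in the edit-distance tree-code definition: by quantifying over all ``shifts'' of the two suffixes (at most $O(\ell^{2})$ of them), the same exponential decay survives a polynomial factor loss.

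Next, following the GMS skeleton, I would call a node $v$ of the tree \emph{bad at scale $\ell$} if there exist two descending paths of length $\ell$ from $v$ through distinct children whose suffix-edit-distance fails the required $\alpha$-bound. The number of such ordered pairs out of $v$ is at most $d^{2\ell}$, so by the single-pair estimate and a union bound,
\[
\Pr[v \text{ bad at scale } \ell] \le d^{2\ell}\cdot \mathrm{poly}(\ell)\cdot \exp(-\Omega(\ell)),
\]
which is exponentially small provided $|\Sigma|$ is chosen so $\log|\Sigma|$ beats $2\log d$ by a sufficient margin. Summing over scales $\ell$ shows each node is bad with probability $\exp(-\Omega(\text{depth below } v))$. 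Declaring a node $v$ to be ``\potent-bad'' if more than an $\eps$-fraction of its descending paths participate in a distance violation, one uses Markov/Chernoff over the (essentially independent) labellings of disjoint subtrees to conclude that the fraction of \potent-bad nodes in the whole tree is $\exp(-\Omega(n))$ with the required overwhelming probability. Efficiency is immediate from the construction: sampling $O(dn)$ uniform symbols.

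The main obstacle is that, unlike Hamming distance, edit distance does not factor across coordinates, so the single-pair tail bound needs a careful alignment-based argument rather than a direct Chernoff inequality. The suffix-distance strengthening compounds this, since the union must range simultaneously over pairs of prefixes \emph{and} over all feasible alignments of their suffixes induced by prior insertions/deletions. Making this union bound summable without blowing up $|\Sigma|$ beyond a constant is the technical heart of the proof, and is exactly where the GMS machinery, adapted from Hamming to edit distance via the volume bound above, is doing the real work.
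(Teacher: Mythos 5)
Your overall strategy---uniform random labels over a constant alphabet, an LCS/volume bound showing $\Pr[\ED(x,y)\le\alpha\ell]\le|\Sigma|^{-\Omega((1-\alpha)\ell)}$, and a union bound---is the same skeleton the paper uses (its Lemma~\ref{lem:edpr} is exactly your single-pair estimate). But there are three genuine gaps. First, your notion of ``bad at scale $\ell$'' compares two equal-length paths diverging at a common node $v$, whereas the property that must fail rarely is the $\alpha$-bad \emph{lambda} of Definition~\ref{def:lambda}: the comparison is between $AD$ and $BE$ where $A$ may be a strict ancestor of $B$ and the two strings may have different lengths. You acknowledge ``shifts'' in passing, but your union bound never enumerates the choices of $A$, $B$, $D$, $E$; the paper counts these explicitly (at most $2|l_i|\,d^{|l_i|+1}$ lambdas inducing a given interval on a given path). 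Second, and more seriously, the quantity that must be controlled is a \emph{per-path} one: for every root-to-leaf path $Q$, the union of all bad intervals touching $Q$ must have length $<\delta n$. Your plan---bound the probability a single node is bad, then apply Markov/Chernoff to the fraction of bad nodes in the whole tree---does not deliver this: bad events at different nodes along one path share edge labels and are not independent, and a small global fraction of bad nodes is compatible with some path being almost entirely bad. The paper's fix is Schulman's interval lemma (Lemma~\ref{lem:intervals}): from any family of bad intervals covering $\delta n$ of a path it extracts a \emph{disjoint} subfamily covering $\delta n/2$; disjointness makes the corresponding lambda events independent, so their probabilities multiply, and this product is then union-bounded against the $d^n$ paths and the at most $4^n$ configurations of disjoint intervals. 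Without this step your union bound is not summable.

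Third, the efficiency claim ``sampling $O(dn)$ uniform symbols'' is wrong: a complete $d$-ary tree of depth $n$ has $\Theta(d^n)$ edges. The paper obtains efficiency by derandomizing the label choices with a small-biased sample space, giving a succinct seed from which any individual label is computable in polynomial time; some such device is necessary for the ``efficient construction'' part of the theorem.
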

While in the rest of the paper (namely, for our coding scheme) 
we assume the edit-distance notion of Theorem~\ref{thm:edtc-exist}, all our schemes work the same when the tree is replaced with a potent one, see Appendix~\ref{sec:PEDTCpro} for further details.
\\
\\
\textbf{Interactive-communication schemes tolerating insertions/deletions.}
Equipped with edit distance tree codes, 
we show a protocol that  
solves the \emph{pointer jumping problem} over a noisy channel with insertions and deletions and exhibits linear communication complexity in the noiseless communication.
Since the pointer jumping problem is complete for two-party interactive communication, this implies a  coding scheme that can simulate any protocol over a channel that may introduce insertion/deletions. 
Specifically, in Sections~\ref{sec:poly} and~\ref{sec:constant}
we show that for any $\eps>0$ and any noiseless protocol~$\pi$ and inputs $x,y$, 
there is a scheme that correctly simulates $\pi$ (that is, produces the transcript $\pi(x,y)$ at both parties), withstands $1/18-\eps$ fraction of edit-corruptions, and has a linear communication complexity with respect to the communication of the protocol~$\pi$.

\begin{theorem}\label{thm:main}
For any $\eps>0$, and for any binary (noiseless) protocol~$\pi$ with communication $CC(\pi)$, there exists a noise-resilient coding scheme with communication $O_\eps(CC(\pi))$ that succeeds in simulating~$\pi$ as long as the adversarial 
edit-corruption rate is at most~$1/18-\eps$.
\end{theorem}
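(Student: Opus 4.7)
The plan is to reduce to the pointer jumping problem (which is complete for binary interactive protocols) and then design a simulation in which Alice and Bob jointly trace the correct root-to-leaf path corresponding to $\pi(x,y)$. In each simulation round, each party encodes its current ``view'' of the simulation (its guess at the shared transcript prefix so far, together with the next bit dictated by its input and computation on that prefix) using the edit distance tree code guaranteed by Theorem~\ref{thm:edtc-exist}, and transmits the next symbol of this codeword. Since the EDTC alphabet is constant, each simulation round contributes $O_\eps(1)$ bits, so the total communication will be $O_\eps(CC(\pi))$ provided the number of simulation rounds is $O(CC(\pi))$.

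Upon receiving the partner's stream so far, a party decodes by finding the descending path in the partner's EDTC whose codeword minimizes the \emph{suffix distance} (Definition~\ref{def:SD}) to the received string. The EDTC guarantee ensures that whenever the suffix distance between the received string and the true sent codeword is below roughly half the distance parameter, the decoded path agrees with the true one on a long prefix. The decoder then compares this prefix against its own view and either extends the current consensus by one step or rewinds to the latest mutually consistent prefix. The key invariant will be that correct progress along the target path is lower-bounded by the number of simulation rounds minus a constant multiple of the number of ``bad'' rounds, i.e.\ rounds in which at least one party's decoding is incorrect.

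To close the proof one must bound the number of bad rounds in terms of the total edit-corruption budget of the adversary. The crucial subtlety is that an out-of-sync edit attack causes the two parties' round counters to drift, so the ``skipped'' party sees its received stream misaligned relative to the codeword its partner is emitting; this is exactly the scenario that motivates the strengthened EDTC definition requiring suffix edit-distance guarantees across overlapping descendants of any node. Such misalignment costs at most a factor-two blow-up in the effective suffix error rate, which combined with the factor of $2$ in the decoding threshold and a further factor arising from the two communication directions (each corruption can confuse at most one direction at a time, but may cause both parties to misbehave) yields the tolerable raw noise rate $1/18-\eps$.

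The main obstacle is controlling the compounding drift: a sequence of out-of-sync attacks can cause the parties to persistently disagree on the current round, making naive suffix-distance decoding unstable. Section~\ref{sec:poly} handles this first with a polynomial-communication scheme that pads each round with a short synchronization symbol, so that the suffix-distance analysis cleanly charges every bad round to a linear number of corruptions; Section~\ref{sec:constant} then trims the overhead to linear by running the above simulation in blocks and stacking a meta-level EDTC on the block outcomes to handle cross-block resynchronization. The final accounting shows that a $(1/18-\eps)$-fraction corruption budget leaves enough correct progress to complete the pointer-jumping traversal within $O_\eps(CC(\pi))$ rounds, yielding Theorem~\ref{thm:main}.
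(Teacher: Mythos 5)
Your skeleton matches the paper's: reduce to pointer jumping, encode the evolving set of edges with an edit-distance tree code, decode by minimizing suffix distance, and account for progress by counting ``good'' versus ``bad'' decodings. But there are two genuine gaps. First, the constant-rate step is not what you describe and your substitute is underdeveloped. The paper does not pad rounds with synchronization symbols in the polynomial scheme (the blow-up there is in the \emph{alphabet}, because an edge is encoded as a pointer $(n,s)$ with $n$ ranging over all previous rounds), and it does not achieve constant rate by ``running in blocks and stacking a meta-level EDTC on block outcomes.'' Instead, each edge is given a variable-length binary encoding of $(i-n,s)$ whose total length is $O(T\log\frac{N}{T})$ by concavity, and these encodings are transmitted \emph{in parallel} through an edge table with liveliness counters (cycles and pages), so that adversarially injected edges with long encodings cannot stall progress. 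Your block/meta-EDTC proposal would have to explain how the two parties agree on block boundaries under out-of-sync attacks --- exactly the synchronization problem you are trying to solve --- and how a spurious long edge does not block a whole block; as stated it does not go through.

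Second, the constant $1/18$ is asserted, not derived: multiplying ``a factor-two blow-up,'' ``the factor of $2$ in the decoding threshold,'' and ``a further factor'' from the two directions does not produce $18$. In the paper the bound comes from a concrete count: a marking procedure on the string matching shows each party's bad decodings are at most $\frac{2}{\alpha}\bigl(sc(\tau_1)+sc(\tau_2)\bigr)+sc(\tau_1)-sc(\tau_2)$ (and symmetrically), giving $b_A+b_B\le \frac{8\rho N}{\alpha}$; combining this with the progress lemma and the requirement that each party outputs already at its round $N(1-2\rho)$ forces $1-\frac{16\rho}{\alpha}-2\rho>0$, i.e.\ $\rho<\frac{\alpha}{16+2\alpha}\to\frac{1}{18}$. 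Without this accounting (and without the early-output constraint, which you do not mention), the threshold in your write-up is unsupported.
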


Our coding scheme and analysis follows ideas by Braverman and Rao~\cite{BR14} and by Braverman and Efremenko~\cite{BE14}---first focusing on channels with  polynomial-size alphabet and then generalizing to channels with constant-size alphabet---however, the analysis in the light of insertions and deletions is more complicated and subtle. In particular, similar to~\cite{BE14}, our analysis uses the notion of suffix distance  
for relating the effect of the noise to the progress of the simulation.

We note again that due to out-of-sync attacks, 
it is possible that the parties' belief of the ``current'' round of the protocol is different.
In the worst case, while Alice reaches the end of the coding protocol (say, round~$N$), it is possible that Bob has only reached round~$(1-2\rho)N$, e.g., due to $2\rho N$ deletions in his received communication ($\rho$ is the fraction of edit-corruptions in that instance).
Therefore, if we wish to tolerate a $\rho$-fraction of edit-corruptions, 
it is imperative that the parties output the correct answer already at round $(1-2\rho)N$. 
Our coding scheme (Theorem~\ref{thm:main}) satisfies even this more strict requirement.

Finally, in Appendix~\ref{sec:upperbound} we show that for a family of rigid protocols, in which we require \emph{both} parties to output the correct value at round $(1-2\rho)N$, 
then $\rho=1/6$ is an upper bound on the admissible edit-corruption rate.
\begin{theorem}[Informal]\label{thm:main-upper}
If both parties are required to give output at round $(1-2\rho)N$, then no coding scheme of length $N$ can tolerate an edit-corruption rate of~$\rho=1/6$. 
\end{theorem}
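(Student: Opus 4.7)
The plan is to prove the upper bound by an adversarial indistinguishability argument on a particularly simple hard instance. Take $f(x,y)=y$ (so Alice must output Bob's one-bit input), and suppose for contradiction that some rigid coding scheme $\Pi$ of length $N$ computes $f$ correctly whenever the edit-corruption rate is at most $\rho=1/6$ and both parties output at their local round $(1-2\rho)N = 2N/3$. At her local round $2N/3$ Alice has participated in $2N/3$ channel events, half of which are receives, so she has received exactly $L := (1-2\rho)N/2 = N/3$ symbols from Bob; her output is a deterministic function of her (fixed) input and of this received sequence alone.

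I would construct two adversarial strategies $\sigma_0,\sigma_1$, each using at most $\rho N = N/6$ substitution attacks on the Bob-to-Alice channel (a substitution is a valid edit corruption: a deletion followed by an insertion in the same direction), such that Alice receives the \emph{same} sequence of symbols in the two executions $(y=0,\sigma_0)$ and $(y=1,\sigma_1)$. Since her output depends only on this received sequence, she would then be forced to output the same bit in both scenarios, contradicting $f(0)=0 \ne 1=f(1)$. Concretely, fix a target sequence $V^* \in \Sigma^L$ and in scenario $y$ override Bob's real output at each position where it differs from $V^*[i]$. Let $W_y := B_y(A_x(V^*))$ denote the sequence Bob actually transmits when Alice is being fed $V^*$ (and so sends $A_x(V^*)$ back in response). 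The cost of $\sigma_y$ is exactly $d_H(W_y, V^*)$, and by the triangle inequality $d_H(W_0, V^*) + d_H(W_1, V^*) \ge d_H(W_0, W_1)$. Choosing $V^*$ to be a midpoint of $W_0, W_1$ balances the two costs at $\lceil d_H(W_0,W_1)/2 \rceil \le L/2 = N/6 = \rho N$, exactly matching the adversary's budget.

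The main obstacle is the circular dependence of $W_0, W_1$ on $V^*$ through Alice's induced replies $A_x(V^*)$: when we change $V^*$ to pursue a better midpoint, both $W_y$'s shift as well. I would resolve this by instantiating the retrieval task with a worst-case hard protocol in which Alice is essentially a passive receiver (for instance, Bob transmits an error-correcting encoding of $y$ whose positions do not depend on Alice's replies), so that $W_y$ is a function of $y$ alone and $d_H(W_0,W_1)\le L$, making the midpoint explicit. Alternatively, a minimax/probabilistic-method argument shows that if one samples $V^*$ from the distribution Alice would see under either adversarial strategy, the expected sum $\mathbb{E}[d_H(W_0,V^*)+d_H(W_1,V^*)]$ is at most $L$, so a deterministic good $V^*$ exists. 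Either way, substituting $L=(1-2\rho)N/2$ into the threshold $L/2 \le \rho N$ yields $1-2\rho \le 4\rho$, i.e.\ $\rho \ge 1/6$, which is precisely the claimed bound.

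Putting everything together, at $\rho = 1/6$ the midpoint substitution attack forces Alice's view at round $(1-2\rho)N$ to coincide across the two Bob-input scenarios, so Alice outputs the same value in both and cannot match both $f(0)$ and $f(1)$, contradicting correctness of $\Pi$. Note that the factor of $2$ in the early-output constraint $(1-2\rho)N$---a structural artifact of the out-of-sync attacks available in the indel model and the reason rigid output is necessary in the first place---is precisely what sharpens the classical $\rho \le 1/4$ Hamming-distance upper bound down to the stricter $\rho \le 1/6$ threshold of the theorem.
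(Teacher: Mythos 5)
Your overall architecture is right---an indistinguishability attack on Alice's view at her forced-output round, with the budget arithmetic $(1-2\rho)/4\le\rho\Rightarrow\rho\ge 1/6$---and you correctly flag the central obstacle. But the obstacle is not resolved, and both of your proposed resolutions fail, so there is a genuine gap exactly where you placed the difficulty. Fix (1) (``instantiate the task with a protocol in which Bob is a passive sender'') commits a quantifier error: an impossibility result must defeat \emph{every} protocol for the task, and the protocol designer is free to make Bob's transmissions depend adaptively on Alice's replies precisely to break a fixed-midpoint attack; you get to choose the function, not the protocol. Fix (2) does not give what you need: under the natural reading (sample $V^*$ as the honest $y=0$ transcript or the honest $y=1$ transcript with probability $1/2$ each), one does get $\E[d_H(W_0,V^*)+d_H(W_1,V^*)]\le L$, but this bounds the \emph{sum}, and a $V^*$ achieving sum $\le L$ may split as $(0,L)$, blowing the budget in one of the two worlds. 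The adversary needs the \emph{maximum} of the two costs to be at most $L/2$, and neither a probabilistic-method nor a generic minimax argument delivers that balance.

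The standard (and the paper's) way out is to construct the common view $V^*$ explicitly by a half-and-half simulation rather than as a Hamming midpoint: define $V^*$ round by round so that its first half equals the transmissions of a simulated $B_1$ responding to Alice's induced replies, and its second half equals those of $B_0$. This is well-defined by induction on rounds (no fixed-point issue), and by construction $V^*$ agrees with $W_1(V^*)$ on the first half and with $W_0(V^*)$ on the second half, so each world costs at most $L/2$. The paper implements this with out-of-sync corruptions---for the first third of Alice's rounds Eve deletes Alice's outgoing symbols (so the real Bob never advances) and inserts a spoofed $B_1$'s replies, then stops, so that the real $B_0$ subsequently runs ``from scratch'' against Alice's later transmissions; the matching attack in the $y=1$ world spoofs a fresh $B_0$ in the second third. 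Each scenario uses exactly the rate-$1/6$ budget and yields identical views for Alice at round $(1-2\rho)N$. So the theorem is proved not by a midpoint plus probabilistic existence, but by an explicit two-phase simulation; if you replace your fixes (1)/(2) with that construction (a substitution-only variant of it also works), your argument goes through.
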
 

Closing the gap between the upper bound of $1/6$, and the resilience~$1/18$ achieved by the scheme of Theorem~\ref{thm:main} is left for future work.

\section{Preliminaries}
\label{sec:prelim}

For any finite set $S$, we denote by $x\uniform S$ the case that $x$ is uniformly distributed over~$S$. All logarithms are taken to base~$2$. We denote the set $\{1,2,\ldots, n\}$ by~$[n]$. For a set $\Sigma$ we denote $\Sigma^{\le n} = \cup_{0\le i\le n}\Sigma^i$, and $\Sigma^*= \cup_{i\ge 0}\Sigma^i$.
Let $s \in \Sigma^l$ be a string of length $|s|=l$.
For $1 \leq i \leq j \leq l$, we use $s[i]$ to denote the $i$-th symbol of $s$ and $s[i..j]$ to denote the string $s[i] \circ s[i+1] \circ \cdots \circ s[j]$. 

\begin{definition}[Pointer Jumping Problem]
Any communication protocol of $T$ rounds where the parties alternately exchange bits can
be reduced to the following \emph{pointer jumping problem}~$PJP(T)$: 
Let $\T$ be a  binary tree of depth $T$. 
Alice's input $X$ is a set of \emph{consistent} edges leaving vertices at even depths. 
Bob's input~$Y$ is a set of consistent edges leaving odd-level vertices. A set of edges is \emph{consistent} on a specified set of vertices, if it contains exactly one edge leaving every vertex in that specified set.  
Due to being consistent on all the nodes, $X\cup Y$ defines a unique root-to-leaf path. The parties' goal is to output this unique path. Note that $T$ alternating rounds of communication suffice to compute this path, assuming noiseless channels.
\end{definition}

\begin{definition}[Protocols]
An interactive protocol~$\pi$ for a function $f(x,y)$ is a distributed algorithm that dictates for each party, at every round, the next message (symbol) to send given the party's input and the messages received so far. Each transmitted symbol is assumed to be out of a fixed alphabet~$\Sigma$. The protocol runs for $N$ rounds (also called the length of the protocol), after which the parties give output. An instance of the protocol, on inputs $x,y$ is said to be \emph{correct} if both parties output $f(x,y)$ at the end of the protocol.
\end{definition}

In this work we focus on \emph{alternating} protocols, where as long as there is no noise, 
the protocol runs for $2N$ rounds in which Alice and Bob send symbols alternately.
In the presence of $\rho$-fraction of edit corruptions (i.e., at most $2\rho N$ insertion/deletion errors), it is possible that some party receives only $N(1-2\rho)$ symbols throughout the protocol. We say that the protocol is correct in presence of $\rho$-fraction of edit corruption if both parties output $f(x,y)$ 
by round $N(1-2\rho)$ (according to their own round counting, which may differ from the count of the other party).  

\begin{definition}[String Matching]
We say that $\tau = (\tau_1, \tau_2)$ is a string matching between a sent message~$sm$ and a received message~$rm$ (denoted $\tau:sm \rightarrow rm$), if $|\tau_1| = |\tau_2|$, $del(\tau_1) = sm$, $del(\tau_2) = rm$, and $\tau_1[i] \approx \tau_2[i]$ for all $i = 1,\ldots,|\tau_1|$. Here $del$ is a function that deletes all the $*$'s in the string and two characters $a$ and $b$ satisfy $a \approx b$ if $a=b$ or one of $a$ and $b$ is $*$. We assume that $*$ is a special symbol that does not appear in $sm$ and $rm$.
\end{definition}

\begin{definition}[Edit Distance]
The edit distance of between $sm\in \Sigma^*$ and $rm \in \Sigma^*$ is defined as $
ED(sm,rm) = \min_{\tau:sm \rightarrow rm} sc(\tau_1) + sc(\tau_2).
$
Here $sc(\tau_1)$ is the number of $*$'s in $\tau_1$.
\end{definition}


\begin{fact}[Triangle Inequality]
\label{fact:ed+}
For any three strings $x,y,z$, 
$
ED(x,y) \leq ED(x,z) + ED(y,z).
$
\end{fact}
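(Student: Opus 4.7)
The plan is to prove the triangle inequality by explicitly constructing a witness matching $\mu: x \to y$ out of optimal matchings $\sigma: x \to z$ and $\tau: z \to y$, in such a way that the cost of $\mu$ is bounded by $sc(\sigma_1)+sc(\sigma_2)+sc(\tau_1)+sc(\tau_2) = ED(x,z)+ED(y,z)$. The key observation is that $\sigma_2$ and $\tau_1$ both ``contain'' $z$ after removing their stars, so they can be used as a synchronizing middle strand on which to splice the two matchings together.

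First I would take any optimal matchings $\sigma:x\to z$ and $\tau:z\to y$, and then build $\mu$ by a two-pointer scan through $(\sigma_1,\sigma_2)$ and $(\tau_1,\tau_2)$ with pointers $p$ and $q$. At each step I would use the following rule: if $\sigma_2[p]=*$, then $\sigma_1[p]$ is a character of $x$ (or a star) not matched to any letter of $z$, so I append $(\sigma_1[p], *)$ to $\mu$ and advance $p$ only; symmetrically, if $\tau_1[q]=*$, I append $(*, \tau_2[q])$ and advance $q$ only; otherwise both $\sigma_2[p]$ and $\tau_1[q]$ are non-star letters, and the invariant that equal numbers of non-star characters have been consumed from $\sigma_2$ and $\tau_1$ forces them to be the same letter of $z$, so I append $(\sigma_1[p],\tau_2[q])$ and advance both. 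One then checks that $del(\mu_1)=x$, $del(\mu_2)=y$, and that $\mu_1[i]\approx\mu_2[i]$ always holds, the last point using that if $\sigma_2[p]=\tau_1[q]=c$ then both $\sigma_1[p]$ and $\tau_2[q]$ must lie in $\{c,*\}$.

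For the cost bound I would analyze each case in turn and show that the contribution of every step of the scan is dominated by the cost of the positions it consumes in $\sigma$ and $\tau$. In the first case the single appended position contributes $[\sigma_1[p]{=}*]+1$, exactly equal to the cost of position $p$ in $\sigma$; the symmetric case is identical for $\tau$; in the third case the appended position contributes $[\sigma_1[p]{=}*]+[\tau_2[q]{=}*]$, which matches the sum of the costs of position $p$ in $\sigma$ and position $q$ in $\tau$. Summing over the scan gives $sc(\mu_1)+sc(\mu_2) \le (sc(\sigma_1)+sc(\sigma_2))+(sc(\tau_1)+sc(\tau_2)) = ED(x,z)+ED(y,z)$, so by minimality $ED(x,y)\le ED(x,z)+ED(y,z)$.

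The only subtle point is the well-definedness of the third case: I need to be confident that whenever both current symbols of $\sigma_2$ and $\tau_1$ are non-star, they are in fact the same letter of $\Sigma$. This is where the loop invariant matters, and I would prove it by induction on the number of iterations, observing that the two star-skipping cases preserve the counts of non-star symbols consumed from $\sigma_2$ and $\tau_1$ while the matched case increments both by one, so the non-star symbol currently under $p$ in $\sigma_2$ and under $q$ in $\tau_1$ are always the same index of $z$ and therefore equal. Beyond that, everything is a routine case check.
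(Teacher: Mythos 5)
Your proof is correct. The paper states Fact~\ref{fact:ed+} without proof (it is treated as a standard property of edit distance), so there is no argument to compare against; your splicing construction is the canonical way to prove it from the string-matching definition, and every step checks out against the paper's Definitions of string matching and $ED$. In particular, the loop invariant (equal numbers of non-$*$ symbols consumed from $\sigma_2$ and $\tau_1$) is exactly what is needed to justify the matched case, since $del(\sigma_2)=del(\tau_1)=z$ forces the two current non-$*$ symbols to be the same position of $z$, and your case-by-case cost accounting gives $sc(\mu_1)+sc(\mu_2)\le ED(x,z)+ED(z,y)=ED(x,z)+ED(y,z)$ as claimed. Two boundary points that you leave implicit but that do follow from your invariant: (i) when one pointer is exhausted, all remaining symbols on the other side's middle strand must be $*$ (all $|z|$ non-$*$ symbols have been consumed on both sides), so the scan terminates by repeatedly applying the appropriate star-skipping case; and (ii) a column $(*,*)$ can in principle appear in $\mu$ (the paper's relation $\approx$ permits it), but it is still a valid column and your cost bound already charges it correctly, so nothing breaks.
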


\begin{fact}
\label{fact:lcs}
For any two strings $x,y$,
$
ED(x,y) = |x| + |y| - 2\cdot LCS(x,y).
$
Here $LCS(x,y)$ is the longest common substring of $x$ and $y$.
\end{fact}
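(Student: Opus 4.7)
The plan is to set up a direct correspondence between (essentially-optimal) string matchings $\tau:x\to y$ and common subsequences of $x$ and $y$, such that the cost $sc(\tau_1)+sc(\tau_2)$ is determined by the length of the corresponding common subsequence. Both directions of the desired identity then drop out by minimising/maximising over this correspondence.

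For the $\ge$ direction, I would take any matching $\tau=(\tau_1,\tau_2)$ with $|\tau_1|=|\tau_2|=L$ and first reduce to the case in which no column $i$ has $\tau_1[i]=\tau_2[i]=*$: any such column can be deleted without changing $del(\tau_1)=x$ or $del(\tau_2)=y$ and strictly decreases the cost, so it is absent from any optimal matching. Let $k$ be the number of columns in which both $\tau_1[i]$ and $\tau_2[i]$ are non-$*$; by the $\approx$ condition these two symbols must coincide, and reading them off left-to-right gives a common subsequence of $x$ and $y$ of length $k$, so $k\le LCS(x,y)$. Setting $a=sc(\tau_1)$ and $b=sc(\tau_2)$, the relations $|x|=L-a$, $|y|=L-b$ and $L=k+a+b$ (every column is either matched or contributes exactly one $*$) combine to give $a+b=|x|+|y|-2k \ge |x|+|y|-2\,LCS(x,y)$. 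Minimising over $\tau$ gives $ED(x,y)\ge |x|+|y|-2\,LCS(x,y)$.

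For the $\le$ direction, I would fix a longest common subsequence of $x$ and $y$ of length $k^{*}=LCS(x,y)$ and build an explicit matching achieving cost $|x|+|y|-2k^{*}$: align the $k^{*}$ common symbols in ``matched'' columns in their natural order, and for every remaining symbol of $x$ introduce a column with that symbol in $\tau_1$ and a $*$ in $\tau_2$ (symmetrically for the remaining symbols of $y$). The resulting pair $(\tau_1,\tau_2)$ has length $L=|x|+|y|-k^{*}$, satisfies $del(\tau_1)=x$, $del(\tau_2)=y$, and $\tau_1[i]\approx\tau_2[i]$ in every column, with total star count $a+b=(|x|-k^{*})+(|y|-k^{*})=|x|+|y|-2k^{*}$. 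Hence $ED(x,y)\le |x|+|y|-2\,LCS(x,y)$, and combined with the previous inequality this yields the equality.

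The entire argument is just careful bookkeeping; the only ``step'' worth isolating is the identity $L=k+a+b$ obtained after the no-$(*,*)$-column reduction, which is what forces the cost to depend linearly on the length of the extracted common subsequence. There is no real obstacle here — this is the standard alignment/subsequence dictionary specialised to the indel-only edit distance used in the paper.
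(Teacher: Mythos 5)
Your proof is correct; the paper states this as a Fact without proof, and your argument is the standard alignment--subsequence correspondence that one would supply. The bookkeeping checks out (after discarding $(*,*)$ columns, $a=|y|-k$, $b=|x|-k$, hence cost $|x|+|y|-2k$), and you are right to read the paper's ``longest common substring'' as longest common \emph{subsequence}, which is the only reading under which the identity holds.
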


\begin{lemma}
\label{lem:edpr}
Let $x \in \Sigma^m$ be some given string, $m \geq n$ and $|\Sigma| \geq 4$. For any constant $\alpha \in [0,1]$, 
$
Pr_{y \uniform \Sigma^n} [ED(x,y) \leq \alpha \cdot m] \leq |\Sigma|^{-\frac{(1-\alpha) m}{2}}.
$
\end{lemma}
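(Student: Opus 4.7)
The plan is to convert the edit-distance event into a longest-common-subsequence event via Fact~\ref{fact:lcs}, and then apply a union bound over the possible witnesses for a long common subsequence. By Fact~\ref{fact:lcs}, $ED(x,y) \leq \alpha m$ is equivalent to $LCS(x,y) \geq L$, where
\[
L := \left\lceil \frac{(1-\alpha)m + n}{2} \right\rceil.
\]
If $L > n$ (i.e., if $n < (1-\alpha) m$), the event is vacuous, since $LCS(x,y) \leq \min(m,n) = n$, and the bound holds trivially; so in what follows I will assume $L \leq n$.

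Next, I would observe that $\{LCS(x,y) \geq L\}$ holds if and only if there exist size-$L$ subsets $S_x \subseteq [m]$ and $S_y \subseteq [n]$ such that $x|_{S_x} = y|_{S_y}$, where $x|_{S_x}$ denotes the length-$L$ subsequence of $x$ obtained by reading the symbols at the positions in $S_x$ in increasing order. For any fixed pair $(S_x, S_y)$, the subsequence $y|_{S_y}$ is uniform in $\Sigma^L$, so it equals the specific string $x|_{S_x}$ with probability exactly $|\Sigma|^{-L}$. A union bound over the $\binom{m}{L}\binom{n}{L}$ such pairs then yields
\[
\Pr[ED(x,y) \leq \alpha m] \;\leq\; \binom{m}{L}\binom{n}{L}\,|\Sigma|^{-L}.
\]

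The remaining task is to show $\binom{m}{L}\binom{n}{L} \leq |\Sigma|^{L - (1-\alpha)m/2}$. By the definition of $L$ we have $L - (1-\alpha)m/2 \geq n/2$, and since $|\Sigma|\geq 4$ gives $|\Sigma|^{n/2} \geq 2^n$, it suffices to bound $\binom{m}{L}\binom{n}{L} \leq |\Sigma|^{n/2}$. I would handle this with standard binomial-entropy estimates such as $\binom{k}{\ell} \leq 2^{k H(\ell/k)}$ (with $H$ the binary entropy), possibly together with Vandermonde's inequality $\binom{m}{L}\binom{n}{L} \leq \binom{m+n}{2L}$. This final arithmetic step is the main obstacle I expect: the slack between the union bound and the target is narrow, so the proof must leverage $|\Sigma|\geq 4$ together with $m \geq n$ in precisely the right way, rather than settling for a crude binomial estimate such as $\binom{k}{\ell} \leq 2^k$.
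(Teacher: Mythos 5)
Your reduction to the LCS event via Fact~\ref{fact:lcs} and your treatment of the degenerate case $L>n$ match the paper, but the union bound over \emph{pairs} $(S_x,S_y)$ is where the argument diverges, and the final inequality you defer is not merely delicate --- it is false. Take $m=n$, $\alpha=1/2$, $|\Sigma|=4$, so $L=\lceil 3n/4\rceil$. Your target inequality $\binom{m}{L}\binom{n}{L}\le|\Sigma|^{L-(1-\alpha)m/2}$ becomes $\binom{n}{\lceil 3n/4\rceil}^2\le 4^{n/2}=2^n$, whereas $\binom{n}{3n/4}=\binom{n}{n/4}=2^{(H(1/4)+o(1))n}\approx 2^{0.81n}$, so the left-hand side is about $2^{1.62n}$. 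In fact your union bound itself evaluates to roughly $2^{1.62n}\cdot 4^{-3n/4}=2^{0.12n}$, which exceeds $1$: it cannot certify any nontrivial probability, let alone the claimed $|\Sigma|^{-(1-\alpha)m/2}=2^{-n/2}$. The first-moment count is this lossy because a single $y$ with $LCS(x,y)\ge L$ typically admits exponentially many witnessing pairs $(S_x,S_y)$, so the expected number of witnesses vastly overestimates the probability of the event.

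The paper's proof sidesteps this by counting bad strings $y$ rather than witnesses: for each length $l\ge L$ it charges only $\binom{n}{l}\,|\Sigma|^{n-l}$ --- the positions of the matched symbols inside $y$, times a free choice of the remaining $n-l$ symbols --- with no factor $\binom{m}{l}$ for positions in $x$. The arithmetic then collapses to $\sum_l\binom{n}{l}\le 2^n\le|\Sigma|^{n/2}$, which is exactly where $|\Sigma|\ge4$ enters. (The paper's count implicitly treats the $l$ matched symbols of $y$ as determined once their positions in $y$ are fixed, i.e., it does not charge for the number of distinct length-$l$ subsequences of $x$; the write-up is terse on this point, but that is the step that must replace your $\binom{m}{L}$ factor.) To salvage your route you would need a canonical-witness argument --- for instance, charging each bad $y$ only for its leftmost embedding --- rather than a union bound over all pairs of position sets.
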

\begin{proof}
By Fact \ref{fact:lcs}, we know that $ED(x,y) \leq \alpha \cdot m \Leftrightarrow LCS(x,y) \geq \frac{n+(1-\alpha)m}{2}$. If we want to upper bound the number of $y \in \Sigma^n$ such that $LCS(x,y)\geq \frac{n+(1-\alpha)m}{2}$, we can just enumerate locations of the common substring. Then,
\(
\Pr_{y \uniform \Sigma^n} [ED(x,y) \leq \alpha \cdot m]  
= \Pr_{y \uniform \Sigma^n} \left[LCS(x,y) \geq \frac{n+(1-\alpha)m}{2}\right] 
\leq \sum_{l =\frac{n+(1-\alpha)m}{2}}^n \left(\binom{n}{l} |\Sigma|^{n - l}\right)/|\Sigma|^n
\leq 2^{n} \cdot |\Sigma|^{- \frac{n+(1-\alpha)m}{2}} 
\leq |\Sigma|^{- \frac{(1-\alpha)m}{2}}.
\)
\end{proof}


\section{Edit-distance tree code}
\label{sec:edtc}

In this section we recall the notion of tree-codes~\cite{schulman96} and provide a novel primitive, namely, the edit-distance tree-code. 

\begin{definition}[Prefix Code]
A prefix code $C:\Sigma_{in}^n \rightarrow \Sigma_{out}^n$ is a code such that $C(x)[i]$ only depends on $x[1..i]$.  $C$ can be also considered as a $|\Sigma_{in}|$-ary tree of depth $n$ with symbols written on edges of the tree using an alphabet size $|\Sigma_{out}|$. On this tree, each tree path from the root of length $l$ corresponds to a string $x \in \Sigma_{in}^l$ and the symbol written on the deepest edge of this path corresponds to $C(x)[l]$. 
\end{definition}

\begin{definition}[$\alpha$-bad Lambda]
\label{def:lambda}
We say that a prefix code $C$ contains an $\alpha$-bad lambda if when you consider this prefix code as a tree, there exist four tree nodes $A$,$B$,$D$,$E$ such that $B \neq D$, $B \neq E$, $B$ is $D$ and $E$'s common ancestor, $A$ is $B$'s ancestor or $B$ itself, and  $ED(AD,BE) \leq \alpha \cdot \max(|AD|, |BE|)$. Here $AD$ and $BE$ are strings of symbols along the tree path from $A$ to $D$ and the tree path from $B$ to $E$. See Figure~\ref{fig:lambda} for an illustration.
\end{definition}

\begin{definition}[Edit-distance Tree Code]\label{def:edtc}
We say that a prefix code $C$: $\Sigma_{in}^n \rightarrow \Sigma_{out}^n$ is a $\alpha$-edit-distance tree code if $C$ does not contain an $\alpha$-bad lambda.
\end{definition}

Our main theorem in this section is the existence of edit-distance tree codes, 
\begin{theorem}
\label{thm:edtc}
For any $d \geq 2$, $n >0$ and $0<\alpha <1$, there exists an
$\alpha$-edit-distance tree code  of depth~$n$ 
with alphabet size
$|\Sigma_{in}|=d$ and 
$|\Sigma_{out}|=(176 \cdot d)^{4/(1-\alpha)}$.
\end{theorem}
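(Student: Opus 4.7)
The plan is to use the probabilistic method. I label each edge of the complete $d$-ary tree of depth $n$ independently and uniformly at random from $\Sigma_{out}$, where $|\Sigma_{out}|=s := (176d)^{4/(1-\alpha)}$, and aim to show that the expected number of $\alpha$-bad lambdas is strictly less than $1$; then some realization in the probability space avoids all bad lambdas and is the desired $\alpha$-edit-distance tree code.

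The per-lambda probability bound is the core ingredient. Fix a candidate lambda $(A,B,D,E)$ and write $a=|AB|$, $p=|BD|$, $q=|BE|$, so that $|AD|=a+p$, $|BE|=q$, and $m := \max(a+p,q)$. In the generic case where $B$ is the lowest common ancestor of $D$ and $E$, the paths $A \to B \to D$ and $B \to E$ use disjoint edges, so $AD$ and $BE$ are independent uniform strings of lengths $a+p$ and $q$; conditioning on the longer one and applying Lemma~\ref{lem:edpr} yields $\Pr[ED(AD,BE) \le \alpha m] \le s^{-(1-\alpha)m/2}$. When $B$ is a proper ancestor of the LCA $B^{\ast}$ of $D,E$ (so the paths $B\to D$ and $B \to E$ share a common prefix of length $c := |BB^{\ast}|\ge 1$), the decomposition $AD = X_1\,Y\,Z_1$ and $BE = Y\,Z_2$ with $X_1,Y,Z_1,Z_2$ independent uniform substrings that share only the block $Y$ of length $c$ lets one reduce, via the triangle inequality (Fact~\ref{fact:ed+}), to a generic-case lambda $(A,B^{\ast},D,E)$ at the cost of a small constant-factor loss in the effective distance parameter.

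The combinatorial count comes next. For fixed parameters $h := \mathrm{depth}(B)$, $a$, $p$, $q$ there are at most $d^{h+p+q}$ lambdas: $d^h$ choices of $B$, a unique $A$ per value of $a$, $d^p$ descendants of $B$ at distance $p$ for $D$, and $d^q$ for $E$. Substituting $s^{(1-\alpha)/4}=176d$ into $\sum_{a,p,q} d^{p+q} s^{-(1-\alpha)\max(a+p,q)/2}$ makes the joint geometric series converge to $O(1)$; however, the outer factor $d^h$ grows with the tree depth and blocks a one-shot union bound. I would therefore execute the probabilistic argument inductively on depth, maintaining as invariant that the first $k$ levels already form an $\alpha$-edit-distance tree code, and bounding the probability that the fresh random labels at level $k+1$ introduce a new bad lambda whose deepest endpoint lies in the new level. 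The constant $176$ is tuned so that the per-layer failure probabilities sum to at most a value $<1$ over all $k=1,\ldots,n-1$, thereby closing the induction.

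The main obstacle I anticipate is this inductive step: each fresh layer contributes only $d^{k+1}$ new random symbols—typically just one along each path reaching the new level—so Lemma~\ref{lem:edpr} cannot be applied pointwise to each lambda but must be averaged collectively over the new edges, and the inductive invariant must be strengthened beyond the bare ``no bad lambda'' statement to carry enough edit-distance slack to survive the addition of the new layer. Handling the non-LCA case via the decomposition $AD=X_1 Y Z_1$, $BE=Y Z_2$ is the other delicate point, and together these two forces calibrate both the constant $176$ and the exponent $4/(1-\alpha)$ in the final alphabet size.
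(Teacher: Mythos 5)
Your proposal correctly identifies the two structural facts about this proof --- that a one-shot union bound over all lambdas fails because of the $d^{h}$ factor for the position of $B$, and that an induction on depth with a strengthened invariant is needed --- but it stops exactly where the actual work begins. The strengthened invariant is not a detail to be "calibrated" later; it is the entire content of the theorem. The paper's invariant is the \emph{excellence} property (Definition~\ref{def:exc}): beyond being $\alpha$-bad-lambda-free, a tree $C$ must satisfy $P_i(C)\cdot(dc_1)^i<1$ for all $i$, where $P_i(C)$ is the probability that prepending a uniformly labeled path of length $i$ above the root creates a bad lambda anchored at the junction. The induction is then run \emph{top-down}: a depth-$n$ tree is formed by hanging $d$ excellent depth-$(n-1)$ trees under a fresh root, and one shows that at least a constant fraction $c_2=1/11$ of such compositions remain excellent (via Lemma~\ref{lem:disuni}, which shows the codeword distribution under this non-uniform ensemble is within $c_2^{1-j}$ of uniform, so that Lemma~\ref{lem:edpr} still applies). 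Your induction goes in the opposite direction --- adding a new layer of \emph{leaves} --- and this direction does not work with the tools you cite: a new bad lambda whose deepest endpoint lies in the fresh layer has strings $AD$ and $BE$ that differ from already-fixed strings in at most one or two symbols, so there is essentially no fresh randomness per lambda and Lemma~\ref{lem:edpr} gives no gain. You acknowledge this ("must be averaged collectively over the new edges") but offer no mechanism for the averaging; the top-down composition and the potential-probability bookkeeping are precisely that mechanism, because there the fresh symbols sit at the \emph{top} of $AD$ and the subtrees' potential probabilities quantify the risk of lambdas crossing the junction.

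A secondary problem: your reduction of the non-LCA case (where $B$ properly precedes the branching point $B^{\ast}$) via the triangle inequality does not close. Writing $BE=Y\circ Z_2$ with $Y=BB^{\ast}$, the inequality gives $ED(AD,B^{\ast}E)\le ED(AD,BE)+|Y|$, and $|Y|$ can be as large as $\max(|AD|,|BE|)$ while $\max(|AD|,|B^{\ast}E|)$ shrinks, so the "effective distance parameter" degrades to something exceeding $1$ rather than by a constant factor. As it stands, the proposal is a correct diagnosis of the difficulty rather than a proof.
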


\begin{proof}
We prove this theorem by induction on~$n$. To this end we define 
a slightly stronger notion than $\alpha$-bad lambda free, which we call ``excellent". Intuitively, if a tree-code $C$ is excellent, then with a good probability, $C$ will not cause a bad lambda in a tree-code that contains $C$ as a subtree. This would allow us to  construct lambda-free trees of length $n$ building on lambda-free trees with a smaller depth as subtrees.

\begin{definition}[Potential Probability]
For any prefix code $C:\Sigma_{in}^n \rightarrow \Sigma_{out}^n$, and any $i\ge0$,
consider $C$ as a tree and define a new (non-regular) tree $C'$ by connecting a simple path of length $i$ to the root of $C$, making the other end of this path the root of $C'$. Label each edge along the new path with a symbol from~$\Sigma_{out}$ chosen  uniformly and independently.

The potential probability $P_i(C)$ is defined as the probability that the new tree $C'$ has a $\alpha$-bad lambda with $A=$ the root of $C'$ and $B=$ the root of $C$. See Figure~\ref{fig:potential} for an illustration.
\end{definition}

\begin{definition}[Excellent]
\label{def:exc}
For a constant $c_1>1$, which we will fix shortly, we say that a prefix code~$C$ is excellent if $\forall i \geq 0$, 
$P_i(C) \cdot (d \cdot c_1)^i < 1$  and $C$ is $\alpha$-bad-lambda-free. 
\end{definition}

In the proof, we are going to construct a set called $S_n$ which would contain only excellent $d$-ary prefix codes of depth~$n$ with alphabet size $d^{O_{\alpha}(1)}$. Since excellent is a stronger notion than $\alpha$-bad lambda-free, if we can construct $S_n$ and show that for any $n>0$ it is non-empty, then we are done. 
In each $S_n$, all codes use the same $\Sigma_{in}$ and $\Sigma_{out}$. We have $|\Sigma_{in}| = d$ and $|\Sigma_{out}| =s$
where the conditions we put on~$s$ thorough the following proof are given by:
\[
s = \max \left\{ \frac{\alpha d}{(1-\alpha)} \cdot (d\cdot c_1)^{\frac{\alpha}{1-\alpha}} + 1, d^2, (\frac{4d}{c_2})^{\frac{4}{1-\alpha}}, (c_1 \cdot d \cdot 4)^{\frac{4}{1-\alpha}} \right\}.
\]
Here $c_1 = 44$, and $c_2 = \frac{1}{11}$.  Therefore, taking $s\ge (176d)^{4/(1-\alpha)}$ satisfies all the above conditions. 

\medskip
Let us now inductively construct~$S_n$. For notation convenience, let $S_0 = \{ \text{a single node}\}$. For $n > 0$, let
\[
\tilde{S}_n =\left\{
T = 
\begin{tikzpicture}[baseline=(current bounding box.center)]
\node[circle,fill=black,minimum size=4pt,inner sep=0pt] at (0,0) {};
\node[isosceles triangle,shape border rotate=90,draw=black] (first) at (-2,-1.69)  {$T_1$};
\node[isosceles triangle,shape border rotate=90,draw=black] (second) at (-0.5,-1.69)  {$T_2$};
\node[isosceles triangle,shape border rotate=90,draw=black] (third) at (2,-1.69)  {$T_d$};
\node at (0.9, -1.2) {$\cdots$};
\draw [shorten >=0pt] (0,0)  --  node[above left] {$\sigma_1$ } ($(first.north)-(0,0.015)$);
\draw (0,0) -- node[below right] {$\sigma_2$} ($(second.north)-(0,0.015)$) ;
\draw (0,0) -- node[above right] {$\sigma_d$} ($(third.north)-(0,0.015)$) ;
\end{tikzpicture}
   \ \middle|\ T_1,T_2,...,T_d \in S_{n-1} \text{ and } \sigma_1,\sigma_2,...,\sigma_d\in \Sigma_{out}\right\}.
\]
and define
$
S_n = \{C \in \tilde{S}_n | ~C \text{ is excellet}\}.
$
From this definition, we directly have that every $C \in S_n$ is excellent, and we are only left to show that $S_n$ is non-empty. Actually, we are going to prove the following claim by induction.\\ 

\begin{claim}
For all $n$, $\frac{|S_n|}{|\tilde{S}_n|} \geq c_2 = \frac{1}{11}$.
\end{claim}

\noindent\textbf{Base case ($n=1$):} Consider the following set.
\[
S'_1 = \{C \mid C \text{ is a $d$-ary prefix code of depth 1 
and $d$ different codewords}\}
\]
We want to show that $S'_1 \subseteq S_1$. For any $C\in S'_1$, it is clear that $C$ does not have any $\alpha$-bad lambda. Because in this depth 1 tree, one can only pick $A=B$ to be the root, and $D,E$ to be some different leaves. Then $ED(AD,BE) = 2$ and $|AD| = |BE| =1$. So $ED(AD,BE) > \alpha \cdot \max(|AD|,|BE|)$. Now let's consider the potential probability $P_i(C)$. Suppose there exists an $\alpha$-bad lambda in the tree after adding a path of length $i$. Then in this $\alpha$-bad lambda, $B$ is the original root,  $AB$ is the path added to the root and $D$ and $E$ are two different leaves of the tree. There are two cases to consider:
\begin{enumerate}
\item $i = |AB| > \alpha / (1-\alpha)$: In this case, since $|BE| = 1$, $(|AD| - |BE| )/ |AD| = 1 - \frac{1}{1+|AB|} > \alpha$. So it is not possible that $ED(AD,BE) < \alpha \cdot |AD|$. Thus  $P_i(C) = 0$. 
\item $i = |AB| \leq \alpha / (1-\alpha)$: In this case, let $W$ be the event that 
one of the labels along the path $AB$ equals to one of the $d$ codewords of $C$. 
Clearly, when $W$ does not happen, $ED(AD,BE) = |AD|+|BE| > \alpha \cdot |AD|$. 
It is also immediate that $\Pr[W] \leq i\cdot\frac{d}{s}$. We require $s >  \frac{\alpha d}{(1-\alpha)} \cdot (d\cdot c_1)^{\frac{\alpha}{1-\alpha}} $, and get that
$
P_i(C) \cdot (d\cdot c_1)^i \leq \Pr[W]\cdot (d\cdot c_1)^i \leq \frac{\alpha d}{(1-\alpha)s} \cdot (d\cdot c_1)^{\frac{\alpha}{1-\alpha}} < 1.
$
\end{enumerate}
Therefore, we have proved that all the prefix codes in $S'_1$ are excellent and therefore in $S_1$. Then because $s \geq d^2$, we get
$
\frac{|S_1|}{|\tilde{S}_1|} \geq \frac{|S'_1|}{|\tilde{S}_1|} = \frac{s(s-1) \times \cdots \times (s-d+1)}{s^d} \geq (1-1/d)^d > 1/11 = c_2.
$\\

\noindent\textbf{Inductive step:} Suppose we already know that for $1 \leq i < n$, $\frac{|S_i|}{|\tilde{S}_i|} \geq c_2$, and let us prove that $\frac{|S_n|}{|\tilde{S}_n|} \geq c_2$.
We will use the following lemma, whose proof 
is quite straightforward and can be found in Appendix~\ref{app:treecode}.\looseness=-1
\begin{lemma}
\label{lem:disuni}
If for all $1 \leq i < n$, $\frac{|S_i|}{|\tilde{S}_i|} \geq c_2$. Then for any $x \in \Sigma_{out}^j$ where $ 0< j \leq n$ and $y \in \Sigma_{in}^n$, it holds that
$
\Pr_{C \uniform \tilde{S}_n} [ C(y)[1..j] = x] \leq c_2^{1-j} s^{-j}.
$
\end{lemma}

In order to show that $\frac{|S_n|}{|\tilde{S}_n|} \geq c_2$, we can choose $C$ randomly from $\tilde{S}_n$, and show that ${\Pr[C \text{ is excellent}] \geq c_2}$. 
To this end, consider the conditions of being excellent in turn:
\begin{enumerate}[itemsep=3pt]
\item Let's first consider $\Pr[P_i(C) \cdot (d\cdot c_1)^i \geq 1]$. By Lemma \ref{lem:disuni} and Lemma \ref{lem:edpr}, we get
\begin{eqnarray*}
&&\E_{C \uniform \tilde{S}_n}[P_i(C)] \\
&\leq& \sum_{1 \leq n_1, n_2 \leq n} \ \sum_{\substack{x \in \Sigma_{in}^{n_1}, y \in \Sigma_{in}^{n_2}, \\ x[1] \neq y[1]}} \Pr_{\substack{z \uniform \Sigma_{out}^i \\ C\uniform \tilde S_n}}[ED(z\circ C(x), C(y)) \leq \alpha \cdot \max(i + n_1, n_2)]\\
&\leq& \sum_{1 \leq n_1, n_2 \leq n} d^{n_1 + n_2} \cdot c_2^{2-n_1-n_2}\Pr_{x \uniform \Sigma_{out}^{n_1},\, y \uniform \Sigma_{out}^{n_2},\, z \uniform \Sigma_{out}^i}[ED(z\circ x, y) \leq \alpha \cdot \max(i + n_1, n_2)] \\
&\leq&  \sum_{1 \leq n_1, n_2 \leq n} \left(\frac{d}{c_2}\right)^{n_1 + n_2} s^{-\frac{1-\alpha}{2} \cdot \frac{n_1 + n_2 + i}{2}} \leq \sum_{1 \leq n_1, n_2 \leq n} \left(\frac{d}{c_2}\cdot s^{-\frac{1-\alpha}{4}}\right)^{n_1 + n_2}  s^{-\frac{(1-\alpha)i}{4}}\\
&\leq&  \sum_{1 \leq n_1, n_2 \leq n}  \left(\frac{1}{4}\right)^{n_1+n_2} (c_1 \cdot d \cdot 4)^{-i} \leq \sum_{j =2}^{\infty}\frac{j}{4^j} \cdot (c_1 \cdot d \cdot 4)^{-i} = \frac{7}{36} (c_1 \cdot d \cdot 4)^{-i} 
\end{eqnarray*}
By Markov's inequality, 
$
\Pr[{P_i(C) \cdot (d\cdot c_1)^i \geq 1}] \leq \frac{7}{36} \cdot \frac{1}{4^i}.
$
 Then 
$
\sum_{i=0}^{\infty} \Pr[P_i(C) \cdot (d\cdot c_1)^i \geq 1] \leq \frac{7}{36} \sum_{i=0}^{\infty} \frac{1}{4^i} = \frac{7}{27} < \frac{5}{11} = \frac{1 - c_2}{2}.
$
\item Now let's consider the event that $C$ has an $\alpha$-bad lambda with $A=B=root$. It is easy to see that this event is equivalent to $P_0(C) \geq 1$. Therefore it is covered by the previous case.
\item Now let's consider the event that $C$ has an $\alpha$-bad lambda with $A=root$ and $B \neq root$. By Lemma \ref{lem:disuni} and Definition \ref{def:exc}, we have
\begin{multline*}
Pr[C \text{ has an $\alpha$-bad lambda with $A=root$, $B \neq root$}]  \\
\leq \sum_{n_1 \leq n} d^{n_1} \cdot c_2^{1-n_1} \cdot (d \cdot c_1)^{-n_1}  \leq \sum_{n_1 = 1}^{\infty} (\frac{1}{c_1c_2})^{n_1} = \sum_{n_1 = 1}^{\infty} (\frac{1}{4})^{n_1}  = \frac{1}{3} < \frac{5}{11}  = \frac{1 - c_2}{2}.
\end{multline*}
\item Finally let's consider the case that $C$ has an $\alpha$-bad lambda with $A \neq root$. It is easy to see that this event never happens because $C$'s subtrees rooted at depth 1 in $C$ are all excellent.
\end{enumerate}
Therefore by union bound,  $\Pr[C \text{ is excellent}]$ is at least
\begin{align*}
\ge & 1 -\sum_{i=0}^{\infty} \Pr[P_i(C) \cdot (d\cdot c_1)^i \geq 1] - \Pr[C \text{ has an $\alpha$-bad lambda with $A=root$, $B \neq root$}]\\
\ge&  1 - \frac{1 - c_2}{2} -\frac{1 - c_2}{2} = c_2. \qedhere
\end{align*}
\end{proof}

\subsection{Decoding of edit-distance tree codes}
The decoding of a codeword $rm\in \Sigma^j_{out}$ via an edit-distance tree-code~$C$ amounts to finding a message 
whose encoding minimizes the suffix distance to~$rm$, i.e.,
$
\textsf{DEC}(rm) = \text{arg\,min}_{m\in\Sigma^{\le n}_{in}} \ SD( rm, C(m)),
$
where the suffix distance $SD(\cdot,\cdot)$ is defined as follows.
\begin{definition}[Suffix Distance]
\label{def:SD}
Given any two strings $sm,rm\in \Sigma^*$,
the suffix distance between $sm$ and $rm$ is  
$
SD(sm,rm) = \min_{\tau:sm \rightarrow rm} \max_{i=1}^{|\tau_1|} \frac{sc(\tau_1[i..|\tau_1|]) + sc(\tau_2[i..|\tau_2|])}{|\tau_1| - i + 1 -  sc(\tau_1[i..|\tau_1|])}.
$
\end{definition}

The following lemma, which plays an important role in the analysis of the simulation protocols that we present in the next sections, 
shows that if a message $sm$ is encoded by some $\alpha$-edit-distance tree code and the received message $rm$ satisfies ${SD(sm,rm) \leq \frac{\alpha}{2}}$, then the receiver can recover the entire sent message correctly. 
\begin{lemma}
\label{lem:unidec}
Let $C$: $\Sigma_{in}^n \rightarrow \Sigma_{out}^n$ be an $\alpha$-edit-distance tree code, and let $rm \in \Sigma_{out}^m$ ($m$ can be different from $n$). There exists at most one $sm \in \cup_{i=1}^n C(\Sigma_{in}^n)[1..i]$ such that $SD(sm,rm) \leq \frac{\alpha}{2}$.
\end{lemma}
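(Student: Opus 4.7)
My plan is to argue by contradiction. Suppose $sm_1,sm_2 \in \bigcup_{i=1}^n C(\Sigma_{in}^n)[1..i]$ are two distinct strings that both satisfy $SD(sm_k,rm)\le\alpha/2$; I will exhibit an $\alpha$-bad lambda in $C$, contradicting the hypothesis that $C$ is an $\alpha$-edit-distance tree code. The lambda I will exploit is the simplest possible degenerate one: $A=B=\text{root}$, with $D,E$ the nodes at the ends of the root-to-node paths whose edge labels spell $sm_1$ and $sm_2$.

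First I would translate the suffix-distance hypothesis into a plain edit-distance bound. For each $k\in\{1,2\}$, pick a matching $\tau^{(k)}:sm_k\to rm$ that realizes $SD(sm_k,rm)$. Evaluating the suffix ratio at index $i=1$, the denominator $|\tau^{(k)}_1|-sc(\tau^{(k)}_1)$ collapses to $|sm_k|$ (because $del(\tau^{(k)}_1)=sm_k$), so the hypothesis becomes $sc(\tau^{(k)}_1)+sc(\tau^{(k)}_2)\le \alpha|sm_k|/2$. Since this is a valid matching cost, it upper-bounds $ED(sm_k,rm)$. Assuming WLOG $|sm_1|\le|sm_2|$, the triangle inequality (Fact~\ref{fact:ed+}) then gives
\[
ED(sm_1,sm_2)\ \le\ ED(sm_1,rm)+ED(sm_2,rm)\ \le\ \tfrac{\alpha}{2}(|sm_1|+|sm_2|)\ \le\ \alpha\,\max(|sm_1|,|sm_2|).
\]

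Finally I would check that the choice $A=B=\text{root}$, $D=D_1$, $E=D_2$ meets all lambda prerequisites: $D_1\ne D_2$ because distinct output labels come from distinct tree paths, the root is trivially a common ancestor of both endpoints, $A=B$ is an allowed case, and $B\ne D_k$ because $|sm_k|\ge 1$ places $D_k$ strictly below the root. Under these choices $AD=sm_1$ and $BE=sm_2$, so the displayed inequality reads $ED(AD,BE)\le \alpha\max(|AD|,|BE|)$, which is exactly the $\alpha$-bad-lambda condition on $C$, contradicting the tree-code property. I do not anticipate any serious obstacle; the only non-bookkeeping observation is that the $i=1$ term of the suffix-distance expression is precisely (a matching realizing) $ED(sm_k,rm)/|sm_k|$, which is what allows the suffix-distance hypothesis to feed, via the triangle inequality, into the degenerate root-to-root lambda and deliver the contradiction.
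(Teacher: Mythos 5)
Your proposal has a genuine gap, and it is located exactly where the difficulty of the lemma lives. The choice $A=B=\mathrm{root}$, $D,E$ the endpoints of $sm_1,sm_2$ is not in general a valid $\alpha$-bad lambda: in Definition~\ref{def:lambda} (see Figure~\ref{fig:lambda} and the way the definition is used throughout the proof of Theorem~\ref{thm:edtc}, e.g.\ the restriction $x[1]\neq y[1]$ in the potential-probability computation), $B$ is the \emph{branching point} of $D$ and $E$, i.e.\ their least common ancestor, not an arbitrary common ancestor. Indeed, under your reading no $\alpha$-edit-distance tree code of depth $\ge 3$ could exist for $\alpha$ close to $1$: take two codewords sharing a common prefix of length $p$ with divergent suffixes of length $q$ each; then $ED(sm_1,sm_2)\le 2q$ while $\max(|sm_1|,|sm_2|)=p+q$, so the quantity you derive is small whenever $p\gg q$. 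Consequently the inequality $ED(sm_1,sm_2)\le\alpha\max(|sm_1|,|sm_2|)$ that you obtain is \emph{not} a contradiction — it holds trivially for any two codewords with a long shared prefix — and the argument collapses.

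Relatedly, using only the $i=1$ term of the suffix distance throws away precisely the information the lemma needs: that term is just the globally normalized edit distance, which cannot distinguish errors concentrated near the end of the transmission. The paper's proof instead evaluates the suffix-distance bound at the index $i$ where the remaining sent suffix equals $BD$ (resp.\ $BE$), with $B$ the true divergence point; it then observes that the two corresponding received suffixes $del(\tau_2[i..l])$ and $del(\tau_2'[i'..l'])$ are both suffixes of $rm$, so one contains the other, and it extends the shorter sent suffix backwards to an ancestor $A$ of $B$ so that both matchings cover the \emph{same} suffix of $rm$. Only then does the triangle inequality yield $ED(AD,BE)\le\frac{\alpha}{2}(|AD|+|BE|)\le\alpha\max(|AD|,|BE|)$ for a genuine lambda with $B$ the branching point. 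This asymmetric $A$-versus-$B$ alignment step is the heart of the lemma (and the reason the lambda definition has the node $A$ at all); your proof needs it and currently has no substitute for it.
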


The proof appears in Appendix~\ref{app:treecode}.




\section{A coding scheme with a polynomial alphabet size}
\label{sec:poly}
In this section, we show a protocol~$\pi$ 
that 
solves $PJP(T)$ in $O(T)$ rounds over channels with
alphabet size $poly(T)$,  
and is resilient to (a constant fraction of) insertion/deletion errors. 
Since the $PJP(T)$ is complete for interactive communication, this implies that any  binary protocol with $T$ rounds can be simulated in $O(T)$ rounds over a channel with polynomially-large alphabet that corrupts at most a fraction $1/18-\eps$ of the transmissions. 
While this protocol does not exhibit a constant rate, it contains all the main ideas for the constant-rate protocol of  Theorem~\ref{thm:main}, and thus we focus on this simple variant first. Then, in Section \ref{sec:constant} we discuss how to reduce the alphabet size and achieve a protocol with $O(T)$ communication complexity with the same resilience guarantees. 

\medskip
Assume $\pi$ has $2N$ alternating rounds, that is, Alice and Bob send $N$ symbols each, assuming there are no errors.
We would like the  protocol to resist a fraction of $\rho$ edit-corruptions, that is, the protocol should succeed as long as there are at most 
$2\rho N$ insertion/deletion errors. Due our assumption that the adversary never causes the protocol to ``get stuck'', this amounts to at most $2\rho N$ deletions, where each deletion is followed by an insertion. 

We assume that Alice and Bob share some fixed $\alpha$-edit-distance tree code $C:\Sigma_{in}^{N} \rightarrow \Sigma_{out}^{N}$ given by Theorem \ref{thm:edtc}. We will set the values of $N$, $\alpha$, and $\Sigma_{in}$ later. Currently we only need to know $N = poly(T)$. 

Let us begin with a high-level outline of the protocol~$\pi$.
The protocol  basically progresses by sending edges in the tree $\T$ of the underlying~$PJP(T)$, interactively constructing the joint path (similar to~\cite{BR14}). 
%
To communicate an edge~$e$, the parties encode it as a pair of numbers $(n,s)$, where
$0 \leq n \leq N$ and $s\in\{ 1,2,3,4\}$.
The value $n$ indicates the number of some previous round in which some edge~$e'$  was sent, and the value $s$ determines $e$ as the $s$-grandchild of $e'$.
That is, we always send an edge $e$ by linking it to an edge $e'$ that was previously sent, such that $e'$ is located two levels above in unique path leading from the root to $e$. 
If $e$ does not have a grandparent (e.g., it is the at the first or second level in~$\T$), we will set $n = 0$. 
Sometimes the parties have no edge to send, 
in which case they set $n = N$ and say that $e$ is an empty edge.
We take $\Sigma_{in}$ to be all the possible encodings $(n,s)$. As $ 0 \leq n \leq N$ and $1 \leq s \leq 4$, we have $|\Sigma_{in}| = poly(N) = poly(T)$.  As $|\Sigma_{out}|$ is polynomial in $|\Sigma_{in}|$ (Theorem~\ref{thm:edtc}), we also have $|\Sigma_{out}| = poly(N) =poly(T)$.



The protocol~$\pi$ is described in Protocol~\ref{pro:pi}.
The description is for Alice side; Bob's part is symmetric. Here we explain more than the pseudocode on how to get $E(d_A)$. Basically $d_A$ is a string of symbols in $\Sigma_{in}$ and those symbols are in the form $(n,s)$. $E(d_A)$ will be the set of edges these symbols in $d_A$ represent. To get the edge each symbol $(n,s)$ represents, we will first find the edge sent in $n$-th round and get its proper grandchild according to $s$. If $n$ is not in the correct range then we consider $d_A$ as not valid. 
\begin{Protocol}
\caption{The protocol~$\pi$} 
\label{pro:pi}
\small
Let $\T$ be given by $PJP(T)$. Recall that Alice's input is $X$. Assume the parties share some fixed $\alpha$-edit-distance tree code $C:\Sigma_{in}^{N} \rightarrow \Sigma_{out}^{N}$. 

\medskip
Initially we set the counter $i = 0$. 
For any leaf node~$v$ in~$\T$ we initialize a counter $s(v)=0$.
Run the following for $N$ times. 
\begin{enumerate}
\item $ i \leftarrow i  + 1$. 
\item Receive a symbol $r_A[i]$ from the other party. (For Alice, if $i = 1$, skip this step)
\item Find $d_A \in \Sigma_{in}^*$ which minimizes $SD(d_A,r_A[1...i])$. 
\item If $E(d_A) \cup X$ has a unique path from the root in~$\T$, do the following.  Here $E(d_A)$ is the set of edges indicated by $d_A$, if $d_A$ is not a valid string of symbols, $E(d_A) = \emptyset$. 
\begin{enumerate}
\item If this path reaches a leaf node~$v$, then $s(v) \leftarrow s(v) + 1$. 
\item Let $e$ be the deepest edge on the the unique path from root. If $e \in X$ , and $e$ is either an edge in the first or second level of $\T$ or $e$'s grandparent has been sent, set $s_A[i]$ to be encoding of $e$, otherwise set $s_A[i]$ to be encoding of an empty edge. 
\end{enumerate} \vspace{-0.5em}
\item If $E(d_A) \cup X$ does not have a unique path from the root in~$\T$, set $s_A[i]$ to be encoding of an empty edge.
\item If $i = N (1- 2\rho)$, output the leaf node $v$ with the largest $s(v)$. 
\item Send $C(s_A[1...i])[i]$ to Bob. 
\end{enumerate}
\end{Protocol}

We now analyze Protocol~\ref{pro:pi} and prove it resists up to $(1/18-\varepsilon)$-fraction of edit corruptions.
Let $N_A$ and $N_B$ be the counter~$i$ of Alice and Bob respectively, when one of them reaches the end of the protocol~$\pi$. 
Let $\tau_A = (\tau_1, \tau_2)$ be the string matching between $s_B[1..N_B]$ and $r_A[1..N_A]$ that is consistent with the protocol. Let $\tau_B = (\tau_3, \tau_4)$ be the string matching between $s_A[1..N_A]$ and $r_B[1..N_B]$. Recall that we use $sc(\tau)$ to denote the number of $*$'s in the string. By definition, $sc(\tau_1) + sc(\tau_3)  \leq 2\rho N$ and $sc(\tau_2) + sc(\tau_4) \leq 2\rho N$. 

In the analysis we count the number of rounds in which Alice
correctly decodes the entire (current) set of edges sent by Bob. We call each such round a \emph{good decoding}.
\begin{definition}[Good Decoding]
When a party decodes a message, 
we say it is a \emph{good decoding} if 
the decoded messages is exactly the one sent by the other side (i.e., $d_A = s_B[1..i]$ or $d_B = s_A[1..i]$, assuming the other side is at round~$i$), 
and 
the symbol just received is not an adversarial insertion. 
If a decoding is not good, we call it a bad decoding.
\end{definition}
In the following lemma we show relate the number of good decodings to the noise, and show that as long as noise is small enough, there will be many rounds with good decodings. The proof can be found in appendix \ref{app:poly}.
\begin{lemma}
\label{lem:gooddec}
Alice has at least 
$N_A + (1-\frac2{\alpha})sc(\tau_2) - (1+\frac2\alpha)sc(\tau_1)$
good decodings. Bob has at least 
$N_B+  (1-\frac2{\alpha})sc(\tau_4) - (1+\frac2\alpha)sc(\tau_3)$
good decodings. 
\end{lemma}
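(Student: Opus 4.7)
The plan is to prove Alice's bound; Bob's follows symmetrically by swapping $(\tau_1,\tau_2)$ with $(\tau_3,\tau_4)$. For each round $i \in \{1,\dots,N_A\}$, let $T_i$ denote the shortest prefix of $\tau_A$ containing exactly $i$ non-$*$ entries in its $\tau_2$-row; set $L_i = |T_i|$, and let $j_i$ be the number of non-$*$ entries in the $\tau_1$-row of $T_i$. Then $T_i$ is itself a valid matching between $s_B[1..j_i]$ (what Bob has effectively sent from Alice's viewpoint at round~$i$) and $r_A[1..i]$. I will classify bad rounds into two (not necessarily disjoint) types: \textbf{(a)} rounds where the received symbol is adversarially inserted (the last column of $T_i$ has $*$ in $\tau_1$); \textbf{(b)} rounds where the decoded message $d_A$ differs from $s_B[1..j_i]$. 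Type (a) rounds biject with insertions, so their count is at most $sc(\tau_1)$.

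For a type (b) round, Lemma~\ref{lem:unidec} forces $SD(s_B[1..j_i], r_A[1..i]) > \alpha/2$; since $SD$ is the minimum over matchings, the matching $T_i$ itself witnesses this, i.e., there is a position $k_i \le L_i$ such that the suffix $T_i[k_i..L_i]$, whose counts of $*$'s in $\tau_1$, $*$'s in $\tau_2$, and non-$*$'s in $\tau_1$ I denote $e_1, e_2, l$, satisfies $e_1 + e_2 > (\alpha/2)\,l$. The number of Alice rounds whose matching position lies in $[k_i, L_i]$ (equivalently, positions with non-$*$ in $\tau_2$) is exactly $l + e_1 - e_2$, of which exactly $e_1$ are the insertion (type (a)) rounds. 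Hence the suffix accounts for at most $l - e_2$ rounds that are type (b) but not type (a), and the witness inequality upgrades this to $l - e_2 < (2/\alpha)(e_1+e_2) - e_2 = (2/\alpha)\,e_1 + (2/\alpha - 1)\,e_2$.

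To assemble these per-suffix bounds into a global one, I will process type (b) rounds in decreasing order of $i$. At each step pick the largest uncovered $i$, choose any witness $k_i$, and declare \emph{covered} every type (b) round $i'$ with $L_{i'} \in [k_i, L_i]$. Because $L_i$ is strictly increasing in $i$, the next uncovered round has $L_{i'} < k_i$, so its witness suffix lies entirely to the left of $[k_i, L_i]$; the family of chosen witness suffixes is therefore pairwise disjoint as intervals in $\tau_A$. Summing the per-suffix $e_1, e_2$ counts is then bounded by $sc(\tau_1)$ and $sc(\tau_2)$ respectively, yielding at most $(2/\alpha)\,sc(\tau_1) + (2/\alpha - 1)\,sc(\tau_2)$ bad rounds in (b) but not (a). Adding the $sc(\tau_1)$ bound for (a) gives a total of $(1+2/\alpha)\,sc(\tau_1) + (2/\alpha - 1)\,sc(\tau_2)$ bad rounds, which rearranges to the claimed lower bound on good decodings.

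The main obstacle is the combinatorial bookkeeping for the greedy partitioning. One must verify that the number of Alice rounds whose matching position lies in a suffix is exactly $l + e_1 - e_2$ (using that valid matchings contain no $(*,*)$ column), that every type (b) round falls into exactly one chosen bucket (which follows from strict monotonicity of $L_i$ together with the disjointness of buckets), and that subtracting the $e_1$ insertion rounds to excise type (a) does not undercount. These checks are straightforward once the framework is set, and the overall structure closely parallels the suffix-distance analyses of~\cite{BE14,FGOS15}, adapted here to insertions and deletions.
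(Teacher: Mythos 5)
Your argument is correct and matches the paper's proof in essence: both extract a family of pairwise disjoint suffix-intervals of the matching $\tau_A$ (the paper via a right-to-left marking procedure, you via a greedy covering of bad rounds in decreasing round order), each witnessing error density at least $\alpha/2$, charge the bad decodings to the $*$'s in those intervals, and invoke Lemma~\ref{lem:unidec} to certify the remaining rounds as good decodings. The resulting bound $(1+\frac{2}{\alpha})sc(\tau_1)+(\frac{2}{\alpha}-1)sc(\tau_2)$ on bad rounds is exactly the paper's, so nothing further is needed.
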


After we have established that Alice and Bob will have many good decodings, we show that this implies they will have good progress in constructing their joint path in the underlying $PJP(T)$.
%
Consider the $N_A + N_B$ decodings that happen during the protocol, and sort them in a natural order; we say that these decodings occur at ``times'' $t=1,2,\ldots ,N_A + N_B$. Note that the decodings need not be alternating --- Eve's insertions and deletions may cause one party to perform several consecutive decodings while the other party does not receive any symbol, and performs no decoding.
We also assume that for each decoding, 
the sending of the next symbol happens at the same ``time'' as the decoding.
Let $e_A(t)$ be the set of edges Alice has sent at time $t$ and $e_B(t)$ be the set of edges Bob has sent at time~$t$. Let $P$ be the correct path of length $T$ of~$PJP(T)$. Define $l(t)$ to be the length of the longest path from the root using edges in $P \cap (e_A(t) \cup e_B(t))$. Basically $l(t)$ measures how much progress Alice and Bob have made. Let us also define $m(i)$ to be the first time $t$ such that $l(t) \geq i$.  For notation convenience, let $m(0) = 0$. 

The following lemma shows that if the parties do not make progress, many bad decodings (and thus, many errors) must have occurred.  The proof can be found in appendix \ref{app:poly}.
\begin{lemma}
\label{lem:prog}
For $i = 0,...,T-1$, if $m(i + 1) \neq m(i) +1$, then during time $m(i)+1,...,m(i+1) -1$, the following is true.
\begin{enumerate}
\item If $i$ is odd, then there are no good decodings of Bob. The number of good decodings of Alice is at most the number of bad decodings of Bob.
\item If $i$ is even, then there are no good decodings of Alice. The number of good decodings of Bob is at most the number of bad decodings of Alice.
\end{enumerate}
\end{lemma}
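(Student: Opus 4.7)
My plan is to prove Part~2 (the case $i=2k$ even); Part~1 follows by a symmetric argument with the roles of Alice and Bob (and of $X$ and $Y$) exchanged.

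First I would pin down the state at the start of the blocked interval. Since $l(t)$ is non-decreasing in $t$ and can rise by at most one per round (each party sends at most one edge per round), we have $l(t) = i$ throughout $[m(i), m(i+1)-1]$. Combining this with the send rule of Protocol~\ref{pro:pi} and an inductive ``no premature $P$-edge sent'' invariant (discussed below), the $P$-edges Alice has already sent are exactly $\{P_1, P_3, \ldots, P_{i-1}\}$ and Bob's are exactly $\{P_2, P_4, \ldots, P_i\}$. Crucially, $P_{i+1}$ -- which is Alice's $X$-edge out of the depth-$i$ vertex on the true path -- has not yet been sent.

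Next I would rule out any good Alice decoding in the open interval $(m(i), m(i+1))$. Suppose one occurs at time~$t$. By the definition of a good decoding, $d_A = s_B[1..j]$ exactly, so $E(d_A)$ equals the set of $P$-edges Bob has actually sent, namely $\{P_2, P_4, \ldots, P_i\}$. Then the unique root-path in $E(d_A) \cup X$ traces $P_1 \to P_2 \to \cdots \to P_i \to P_{i+1}$ and stops at depth $i+1$, since $E(d_A)$ has no out-edge from that vertex. The deepest edge is $P_{i+1}\in X$, and its grandparent $P_{i-1}$ has already been sent by Alice, so Alice sets $s_A[t]=\enc(P_{i+1})$, placing $P_{i+1}$ into $e_A(t)$ and pushing $l(t)\geq i+1$. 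This contradicts $t < m(i+1)$.

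For the bound on the number of good Bob decodings in the interval, I would count on the Alice-to-Bob channel. Inside the interval, each Alice time-step consists of one decoding followed by one send, and by the previous step every such Alice decoding is bad; therefore the number of Alice-sends inside the interval equals the number of bad Alice decodings there. A good Bob decoding requires its triggering received symbol to be a non-inserted, genuine Alice-send; pairing each good Bob decoding with its originating Alice-send via the natural FIFO ordering on the $A\to B$ channel (and noting that Alice-sends predating the interval have already been consumed by pre-interval Bob-receives or deleted by Eve) gives an injection from good Bob decodings in the interval into bad Alice decodings in the interval, which is exactly the desired inequality. The main obstacle will be justifying the invariant used in the second step, namely ruling out that some earlier bad decoding caused a party to send a $P$-edge at position $\geq i+2$ prematurely, which would extend the unique path past $P_{i+1}$ and (via the grandparent rule) make Alice fall back to an empty send; the cleanest remedy is to fold this ``no premature $P$-edges'' conclusion into the inductive statement itself, so that when handling index $i$ we may assume it for all smaller indices.
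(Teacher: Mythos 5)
Your strategy is the paper's: a good decoding by the party responsible for $P_{i+1}$ would force that party to transmit $P_{i+1}$, contradicting $t<m(i+1)$; and each good decoding of the other party is triggered by a genuine (non-inserted) send of the first party, hence preceded by one of that party's decodings, all of which are bad in the interval. The paper proves the odd case and appeals to symmetry for the even case; your counting step is essentially its counting step.

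The gap is exactly the obstacle you flag at the end, and your proposed remedy does not close it: the ``no premature $P$-edge'' invariant cannot be folded into the induction because it is simply false. Concretely, in your even-$i$ case: by time $m(i)$ Bob has sent $P_i$, so the grandparent condition in step 4(b) of Protocol~\ref{pro:pi} is already satisfied for $P_{i+2}$ (whose grandparent is $P_i$, both Bob's edges). A single \emph{bad} decoding of Bob in which $E(d_B)$ spuriously contains $P_{i+1}$ (together with the genuinely sent $P_1,P_3,\dots,P_{i-1}$) makes the unique path in $E(d_B)\cup Y$ end at $P_{i+2}$, and Bob then legitimately transmits $P_{i+2}$ before anyone has sent $P_{i+1}$. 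Assuming the invariant for smaller indices is of no help here, since the enabling condition for this premature send only becomes true at $m(i)$ and the send is triggered by a bad decoding, over which the induction has no control. Once this happens, a subsequent \emph{good} decoding of Alice sees $P_{i+2}\in E(d_A)$, the unique path in $E(d_A)\cup X$ runs on to $P_{i+3}$, whose grandparent $P_{i+1}$ Alice has not sent, and she falls back to the empty edge---precisely the failure mode your step 2 needs to exclude. (The same scenario also breaks the ``exactly'' in your step 1, and $l(t)$ can in fact jump by more than one per round once an edge has been pre-sent.) For what it is worth, the paper's own one-line justification---``otherwise Bob would decode correctly and send the $(i+1)$-th edge''---silently assumes the decoded path stops at depth $i+1$ and never argues it; a genuine repair must either rule out the one possible premature edge ($P_{i+2}$, one step ahead) in the relevant interval, or restructure the accounting so that this edge is charged to the bad decoding that produced it.
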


Combining the above lemmas, we get the main theorem for protocols with polynomial size alphabet. 
\begin{theorem}
\label{thm:main1}
For any $\varepsilon>0$,
the protocl~$\pi$ of Protocol~\ref{pro:pi}
with 
$N=\lceil\frac{T}{16\varepsilon}\rceil$, 
and a $(1-\eps)$-edit tree code, 
solves $PJP(T)$ and 
is resilient to a $(1/18-\eps)$-fraction of edit corruptions.
\end{theorem}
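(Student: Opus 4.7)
The plan is to combine Lemmas~\ref{lem:gooddec} and~\ref{lem:prog} to track how the parties' joint knowledge of the correct root-to-leaf path in $\T$ evolves under up to $2\rho N$ edit corruptions, and then to argue that at round $N(1-2\rho)$ the majority-vote counter $s(\cdot)$ picks out the correct leaf~$v^*$. We instantiate Theorem~\ref{thm:edtc} with $\alpha = 1-\varepsilon$; since $|\Sigma_{in}| = \mathrm{poly}(N) = \mathrm{poly}(T)$ by the encoding of edges as $(n,s)$ pairs, the resulting $|\Sigma_{out}|$ is also polynomial in~$T$, as promised.

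First I would derive a joint lower bound on good decodings. Adding the two inequalities of Lemma~\ref{lem:gooddec} and using the noise constraints $sc(\tau_1)+sc(\tau_3)\le 2\rho N$ and $sc(\tau_2)+sc(\tau_4)\le 2\rho N$ (each edit corruption contributes one insertion and one deletion), I get $G_A+G_B \ge N_A+N_B-8\rho N/\alpha$, or equivalently $B_A+B_B \le 8\rho N/\alpha$. I would then note that the gap $|N_A-N_B|$ is itself bounded by the number of out-of-sync corruptions, so at the moment Alice reaches her output round $N_A=N(1-2\rho)$, Bob has performed $N_B \ge N(1-4\rho)$ iterations, and the total elapsed time $N_A+N_B$ is at least $2N(1-3\rho)$. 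Next, Lemma~\ref{lem:prog} is applied to bound the time needed to reach $l(t)=T$: summed across all non-progress intervals, good decodings in non-progress slots are bounded by total bad decodings, so the non-progress time is at most $2(B_A+B_B) \le 16\rho N/\alpha$, giving $t^* \le T+16\rho N/\alpha$.

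Finally, correctness of the majority vote reduces to a counting argument on $s_A(\cdot)$. Any increment of $s_A(v)$ for a wrong leaf $v\neq v^*$ must come from a bad decoding of Alice, while every good decoding of Alice after time $t^*$ produces the unique root-to-$v^*$ path (since $X$ together with a correctly decoded $d_A$ covers $P$, and both sides' sent-edge sets are consistent). So correctness reduces to $G_A^{\text{post}}>B_A$ at Alice's output moment. Bounding $G_A^{\text{pre}}$ by Alice's share of progress-slot decodings ($\le T$) plus her non-progress good decodings ($\le B_B$ by Lemma~\ref{lem:prog}), the condition becomes $N(1-2\rho) > T+2B_A+B_B$; replacing $2B_A+B_B\le 2(B_A+B_B)\le 16\rho N/\alpha$ and substituting $\alpha=1-\varepsilon$, $\rho=1/18-\varepsilon$ leaves an $\Omega(\varepsilon)$-sized slack on the left hand side, and the choice $N=\lceil T/(16\varepsilon)\rceil$ turns this into a strict inequality. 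The same argument by symmetry handles Bob.

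The main obstacle is tracking Alice's view at her round $N(1-2\rho)$ in the presence of adversarial desynchronization: Eve can concentrate the $2\rho N$ corruptions asymmetrically across the two channels and can freely skew the parties' clocks via out-of-sync attacks, so bounds on $B_A$ or $G_B^{\text{np}}$ individually are not available---only joint bounds. The constant $1/18$ is not accidental but emerges from balancing three separate losses: the factor~$2$ from outputting at round $N(1-2\rho)$ instead of $N$, the factor~$8/\alpha$ per edit corruption in Lemma~\ref{lem:gooddec}, and the factor~$2$ inflation from non-progress good decodings to non-progress time in Lemma~\ref{lem:prog}, combining to the denominator $2+16=18$. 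A subtle side-point that requires care is justifying that once $l(t)=T$ holds, a good decoding of Alice really does produce a \emph{unique} root-to-leaf path, which uses the consistency of $X$ and the fact that Bob's sent edges are drawn from a single evolving unique path in his own view.
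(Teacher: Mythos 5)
Your proposal is correct and follows essentially the same route as the paper's proof: sum the two inequalities of Lemma~\ref{lem:gooddec} to get $b_A+b_B\le 8\rho N/\alpha$, use Lemma~\ref{lem:prog} to bound Alice's good decodings before time $m(T)$ by $T+b_B$, and then run the same vote-counting argument (wrong leaves get at most $b_A$ votes, the correct leaf gets strictly more by round $N(1-2\rho)$), with the identical parameter bookkeeping $1-2\rho-16\rho/\alpha\ge 16\varepsilon$ justifying $N=\lceil T/(16\varepsilon)\rceil$. The side remarks about $N_B\ge N(1-4\rho)$ and the total non-progress time are not needed by (and do not appear in) the paper's argument, but they do not affect correctness.
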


\begin{proof}
Set $\rho= \frac{1}{18} - \varepsilon$ and $\alpha=1-\eps$. 
Let $g_A$ be the number of good decodings of Alice, $b_A=N_A-g_A$ be the number of bad decodings of Alice. Similarly, let $g_B$ be the number of good decodings of Bob, and let $b_B=N_B-g_b$.
Recall that $sc(\tau_1)+sc(\tau_3)=sc(\tau_2)+sc(\tau_4)\le 2\rho N$,
then by Lemma~\ref{lem:gooddec}, we have 
\begin{align*}
b_A + b_B 
%
&\le 
	\frac{2}{\alpha}(sc(\tau_1) + sc(\tau_2)) + sc(\tau_1) - sc(\tau_2) +
 	\frac{2}{\alpha}(sc(\tau_3) + sc(\tau_4))  + sc(\tau_3) - sc(\tau_4) \\
&\le 
	\frac{8 \rho N}{\alpha}.
\end{align*}
Then we have, 
\begin{align*}
g_A &= N_A - b_A  \ge N_A -  \frac{8 \rho N}{\alpha} \geq  (N_A -N(1-2\rho)) + N(1-2\rho) -  \frac{8 \rho N}{\alpha} \\ \displaybreak[0]
&\geq  b_A+b_B -  \frac{8 \rho N}{\alpha} - (N_A -N(1-2\rho)) + N(1-2\rho) +   \frac{8 \rho N}{\alpha} \\ \displaybreak[0]
&=  b_A + b_B + (N_A -N(1-2\rho))   + N(1 - {16 \rho }/{\alpha} - 2 \rho) \\ \displaybreak[0]
&> b_A + b_B +  (N_A -N(1-2\rho))  + N(1 - (1-18\varepsilon)(1+2\varepsilon)  ) \\
&\ge b_A + b_B +  (N_A -N(1-2\rho))  + 16 \varepsilon N \ge  b_A + b_B +  (N_A -N(1-2\rho))  + T. 
\end{align*}
Similarly we have $g_B > b_A +b_B + (N_B -N(1-2\rho)) + T$.

Using Lemma~\ref{lem:prog} we deduce that by time~$m(T)$ the number of good decodings Alice may have is bounded by $T+b_B$. From this point and on, every good decoding at Alice's side adds one vote for the correct leaf, making at least $g_A -  (T + b_B) > b_A +   (N_A -N(1-2\rho)) $ votes for that node by the end of the protocol, and at least $b_A+1$ votes until Alice reaches round $N(1-2\rho)$ when she gives her output.
On the other hand, any wrong output can get at most~$b_A$ votes, 
thus Alice outputs the correct leaf node at round~$N(1-2\rho)$. 
By a similar reasoning, Bob also outputs the correct leaf node when he reaches round~$N(1-2\rho)$. 
\end{proof}

\section{A coding scheme with a constant alphabet size}
\label{sec:constant}
Based on the protocol in Section \ref{sec:poly}, with some modifications, we obtain a protocol that has a constant size alphabet and a constant rate. To this end, we show how to encode each edge using varying-length encoding over a constant size alphabet. Although substantially more technically involved, this protocol is quite a straightforward extension of the protocol presented above. We thus defer the detailed analysis of this protocol to Appendix~\ref{sec:constant}.

\begin{theorem}
\label{thm:main2}
For any $\varepsilon>0$,
the simulation~$\pi'$ of Protocol~\ref{pro:pi'}
with 
$N = \lceil\frac{T}{\varepsilon^2}\rceil$, 
and a $(1-\varepsilon)$-edit distance tree code, 
solves $PJP(T)$ and 
is resilient to a $(1/18-\eps)$-fraction of edit corruptions.
\end{theorem}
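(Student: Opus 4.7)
The plan is to obtain $\pi'$ from Protocol~\ref{pro:pi} by replacing each large-alphabet symbol in $\Sigma_{in}$ (which encodes a pair $(n,s)$ with $n\in[0,N]$, $s\in\{1,2,3,4\}$) with a variable-length block of constant-alphabet symbols, and to re-use an $(1-\varepsilon)$-edit-distance tree code from Theorem~\ref{thm:edtc} directly over this constant alphabet. Each round Alice/Bob now transmits a single constant-alphabet symbol, so $N=\lceil T/\varepsilon^2\rceil$ refers to the total number of transmitted symbols. The extra $1/\varepsilon$ blow-up (compared to the $T/(16\varepsilon)$ of Theorem~\ref{thm:main1}) is precisely the slack we spend to amortize the variable-length encoding overhead.

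First I would fix a prefix-free encoding of $\Sigma_{in}$ into constant-alphabet blocks whose length grows like $O(\log(n{+}2))$, so that in any valid execution the total length of all edges a party would send is at most $N$. The tree code is applied to the concatenated symbol stream. I would then modify the decoding step so that the receiver still picks the message $d$ minimizing $SD(d,r[1..i])$, but now parses the decoded tree path into logical edges according to the prefix-free code. Lemma~\ref{lem:unidec} applies verbatim to this concatenated stream and gives unique decoding whenever the suffix distance is at most $\alpha/2=(1-\varepsilon)/2$.

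Next, I would prove the block-level analogue of Lemma~\ref{lem:gooddec}: the number of good logical-edge decodings is at least the number of received blocks minus an $O(1/\alpha)$ factor times the symbol-level edit count. The subtlety here is that a single adversarial symbol edit can spoil an entire block that happens to straddle the edit, but because Lemma~\ref{lem:unidec} is a statement about suffix distance of the whole stream (not of individual blocks), this block-amplification only appears in the analysis as a constant factor that is absorbed into $\alpha=1-\varepsilon$. The progress lemma (Lemma~\ref{lem:prog}) transfers without change, since it is purely a combinatorial statement about when good decodings can occur between consecutive advances of $l(t)$.

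Finally, I would replay the arithmetic of Theorem~\ref{thm:main1} using $\alpha=1-\varepsilon$ and $N=T/\varepsilon^2$: the bound $b_A+b_B\le 8\rho N/\alpha$ still holds (at the block level), and with $\rho=1/18-\varepsilon$ there remain at least $b_A+b_B+(N_A-N(1-2\rho))+\Omega(\varepsilon^2 N)\ge b_A+b_B+(N_A-N(1-2\rho))+T$ good decodings, which by the voting argument forces both parties to output the correct leaf by their local round $N(1-2\rho)$. The main obstacle, and the reason the detailed proof is deferred to the appendix, is establishing the amortized encoding bound in the presence of adversarial desynchronization: one must ensure that the variable-length parses that the receiver considers during suffix-distance minimization do not blow up the effective search space, and that the overhead of variable-length encoding really does contribute only a constant (in $\varepsilon$) factor to the round budget.
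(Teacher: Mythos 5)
There is a genuine gap: your plan transmits each variable-length block \emph{sequentially} (concatenating the prefix-free encodings into one stream), and you correctly identify the amortization of this overhead under adversarial desynchronization as ``the main obstacle,'' but you offer no mechanism to resolve it --- and without one the approach fails. The problem is that a single bad decoding can convince a party that it must send an edge whose encoding has length $\Theta(\log N)$ (e.g.\ an edge referring back to round $0$ in the middle of the protocol). If the party commits to finishing that block before sending anything else, each of the $\Theta(\rho N)$ corruptions can cost $\Theta(\log N)$ wasted rounds, blowing the $O(N)$ round budget; if the party aborts mid-block, the receiver's parse of the stream into logical edges breaks down. The paper's proof resolves exactly this with two devices your proposal lacks: (i) all pending edges are kept in an \emph{EdgeTable} and transmitted \emph{in parallel}, one bit of each per ``cycle,'' so a long spurious encoding cannot delay the correct edge; and (ii) each edge carries a decaying ``Live Points'' counter, so a spurious edge survives only if the adversary keeps paying fresh corruptions to reconfirm it. The amortized length bound (Lemma~\ref{lem:desc}, $\sum_i l_i \le T(\log(2N/T)+4)$ by concavity of $\log$) is then only needed for the edges on the \emph{correct} path $P$; wrong edges are charged to the adversary via the liveliness mechanism, not via an encoding-length bound.

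A second, related error: you assert that the progress lemma (Lemma~\ref{lem:prog}) ``transfers without change.'' It does not. Its analogue in the paper (Lemma~\ref{lem:prog2}) has a different form, $g_B[m(i)+1,m(i+1)] \le l_{i+1} + \varepsilon\, b_B[m(i)+1,m(i+1)]$, in which the encoding length $l_{i+1}$ and the factor $\varepsilon$ arising from the Live Points accounting both appear; its proof requires a new case analysis of whether $e_{i+1}$ is present in the table, gets evicted before being fully sent, or is completed. Likewise the good-decoding count picks up an extra $\varepsilon N$ loss from rounds whose page is full (Lemma~\ref{lem:page}), so the inequality $b_A+b_B\le 8\rho N/\alpha$ becomes $b_A+b_B\le 8\rho N/\alpha+2\varepsilon N$. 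These are not cosmetic: they are where the $1/\varepsilon^2$ blow-up in $N$ is actually consumed in the final arithmetic.
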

\noindent Since $PJP(T)$ is complete for interactive communication, the above theorem proves  Theorem~\ref{thm:main}.


\bibliography{coding}


\appendix

\section{More details about the noise model} 
\label{app:model}
Here we give some justifications that explains our modeling choices, 
and show it is weak enough to be a reasonable (i.e., non-trivial) model, yet strong enough to generalize other natural models.

As explained above, there are two main paradigms for distributed protocols in the asynchronous setting.
\begin{enumerate}
\item \textbf{Message-driven model:} In this model, each party wakes up and replies only when received a message. If several event occur ``simultaneously'', we assume a worst-case scenario in which the adversary determines the order of events.
We show that allowing arbitrary noise patterns make the protocol too strong. Either it halts in the middle, or the noise pattern limits the amount of interaction between the two parties.
\begin{enumerate}
\item If a deletion occurs (without being followed by a insertion, as in our model), then both parties clearly get stuck. More generally, it follows that at any point of the protocol, the number of insertions must exceed the number of deletions. We now show that this restriction still yields a too strong noise model which doesn't allow any resilient-protocols.
\item Assume the adversary is allowed to make up to $N/c$ insertions anywhere during the protocol. We show that this limits the protocol to performing $c+1$ interactions between Alice and Bob (where each ``interaction'' is sending a message of arbitrary length, which is depends on messages received in prior interactions). It then follows that one cannot obtain a constant-rate resilient protocol that withstands a constant fraction of noise $c$, as limiting the number of interactions may cause an exponential increase in the communication~\cite{KN97}.
\begin{claim}
Suppose Alice and Bob each send $N$ symbols in total and the adversary can make up to $N/c$ insertion/deletion errors. Then the adversary can make Alice and Bob have at most $c+1$ interactive rounds (alternations). 
\end{claim}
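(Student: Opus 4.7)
The plan is to exhibit an explicit adversary strategy. The key idea is to use the insertion budget to bootstrap a long run of consecutive sends from one party, and then rely on the message-driven rule to naturally produce equally long runs from the other party without further insertions.

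Concretely, I would choose a target run size $b$. The adversary injects $b-1$ fake incoming messages to the initiating party (say Alice) at the start of the protocol. Together with Alice's un-triggered protocol-start send, these cause Alice to send $b$ consecutive messages in a row. The adversary holds all of Alice's outgoing messages rather than delivering them to Bob, using the asynchronous model's provision that the adversary determines the order of simultaneous events (so it can delay deliveries arbitrarily, keeping Bob silent throughout Alice's run). Once Alice's run is complete, the adversary delivers all $b$ held messages to Bob en bloc; by the message-driven rule each received message triggers exactly one send, so Bob naturally produces a run of $b$ messages at \emph{no} additional insertion cost. The adversary continues this pattern---holding each party's outgoing messages until the current run completes, then delivering them en bloc---so that every subsequent batch is of size~$b$ triggered entirely by real messages from the previous batch.

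The counting then proceeds as follows: the total of $2N$ sent symbols, split into runs of size $b$, gives $2N/b$ runs, corresponding to $2N/b$ interactive rounds. Setting $b \approx 2N/(c+1)$ yields $c+1$ runs, and the only insertions used are the $b-1$ bootstrap fakes, which should fit within the budget $N/c$. The key take-away is that the bootstrap cost is the only cost, because natural message-driven propagation preserves run length indefinitely.

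The main subtle point is justifying that the adversary can delay deliveries indefinitely to enforce run contiguity; this relies on the asynchronous model's event-ordering clause, which effectively grants the adversary unbounded delivery delay (without spending from its error budget). A secondary subtle point is matching the exact constants: my straightforward strategy yields $O(c)$ interactive rounds with the $N/c$ budget (and one must carefully account for the single un-triggered protocol-start send to pin down the precise constant), and the stated $c+1$ bound follows from choosing $b$ tightly, given the conventions the claim adopts for what constitutes an ``interactive round''.
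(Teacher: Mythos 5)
Your strategy is essentially the paper's own proof: spend the insertion budget once to bootstrap a single long run, then let the message-driven trigger rule replicate that run length for free, with the adversary's control over the ordering of simultaneous events keeping each run contiguous (the paper bootstraps Bob's first block by inserting $N/c$ spoofed symbols from Alice, rather than faking incoming messages to Alice, but this is immaterial). One arithmetic correction: the bootstrap run can have size at most $b = N/c+1$ (one un-triggered send plus $N/c$ insertions), so your choice $b\approx 2N/(c+1)$ exceeds the budget for $c\ge 2$ and large $N$; with the feasible $b=N/c+1$ each party produces about $c$ blocks of size $N/c$, which is exactly the paper's attack and yields the claimed $c+1$ under its convention that one ``interactive round'' is a bidirectional exchange of such blocks.
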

\begin{proof}
Without the loss of generality, let's assume Alice speaks first. The adversary makes the following attack:
\begin{enumerate}
\item After Alice sends Bob a symbol, the adversary inserts $N/c$ symbols from Alice to Bob. 
\item Bob receives all the $1+N/c$ symbols and replies all of them (recall, each incoming message triggers Bob to a single reply). The adversary guarantees that the order of event is such that first Bob answers on all messages and only then Alice receives them (these events are simultaneous, so Eve gets to decide their internal order). 
\item Alice receives all the $N/c+1$ bits and replies all of them, one by one. Again, Eve sets the internal order of events so that first Alice replies all messages before Bob start replying.  
\item Keep doing this until $N$ symbols are sent by both parties. 
\end{enumerate}
It's easy to see that the protocol has at most $c+1$ interactive rounds, where each interaction contains $N/c$ symbols which can be dependent on previous blocks.
\end{proof}
\end{enumerate}

\item \textbf{Clock-driven model:} In this model, each party has a clock and the party wakes up only when its clock ticks. The parties' clocks are not assumed to be synchronized or correlated in any way. We need to carefully describe what happens when the clock ticks is not alternating:
\begin{enumerate}
\item If a party wakes up and there's no incoming message, then we take a worst case assumption and allow the adversary to set the message that party sees.
\item If a party wakes up, and more than a single message was sent to him during that time, we will take a worst-case assumption that the party only sees the last incoming message.\footnote{We show a positive result for this setting, i.e., a resilient protocol. Thus it is better to take worst-case assumptions, as relaxing them may only get better resilience.}
\end{enumerate}
The following claim shows that the above model can be reduced to the model of edit-corruptions we use in this paper. 
\begin{claim}
The above clock-driven model can be reduced to a message-driven setting with edit corruptions. That is, any protocol in our model that is resilient to $\eps N$ edit corruptions, is also a resilient protocol in the clock-driven setting, which is resilient to  up to $\eps N$ ``non-alternating'' clock-ticks. A clock tick is considered not alternating if the previous clock tick belongs to the same same party.
\end{claim}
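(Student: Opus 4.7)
The plan is to give an explicit reduction from the clock-driven model to the message-driven one: I will show that every clock-driven execution with at most $\eps N$ non-alternating ticks can be emulated by a message-driven execution with at most $\eps N$ edit corruptions, in such a way that each party sees the exact same sequence of incoming and outgoing messages in both executions. Once this is established, the hypothesized resilience in the message-driven model transfers immediately to the clock-driven setting, since the parties' local states, and hence their outputs, depend only on their views.

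To carry out the construction, I would process the tick sequence $p_1, p_2, \ldots$ of the clock-driven execution from left to right and simultaneously build up a matching message-driven execution, adversary included. When tick $i$ is alternating ($p_i \neq p_{i-1}$), the alternating schedule of the message-driven model already has $p_i$ speaking next, so the event lines up with the natural flow and no adversarial intervention is required. When tick $i$ is non-alternating ($p_i = p_{i-1}$), the alternating schedule would instead call on the other party; to re-route control to $p_i$, the simulated message-driven adversary performs exactly one edit corruption, namely it deletes the symbol $p_i$ sent on its previous tick (so that the other party never processes it) and inserts, aimed back at $p_i$, the symbol that $p_i$ reads at tick $i$ in the clock-driven execution. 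That inserted symbol is well-defined by property~(a) of the clock-driven model, since the other party has not ticked between ticks $i-1$ and $i$; and a deletion followed by an insertion in the opposite direction is precisely an out-of-sync edit corruption in the sense of Definition of Edit Corruption.

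Correctness then follows by induction on the tick index: after each step, the views of both parties in the simulated execution coincide with their views in the original clock-driven execution. The only subtle point is property~(b), which stipulates that a party sees only the \emph{last} message sent to it since its previous tick; this is faithfully realized by the construction, because for a maximal run of consecutive non-alternating ticks by one party, the chained edit corruptions delete every intermediate outgoing symbol, so that the other party's next tick naturally picks up exactly the final one. Since exactly one edit corruption is introduced per non-alternating tick, the total edit-corruption count is at most $\eps N$; by assumption the protocol terminates correctly in the message-driven execution, hence also in the clock-driven one. The main obstacle to watch for is the bookkeeping at the boundary between maximal runs of non-alternating ticks on the two sides (and at the very first tick, where ``previous tick'' is undefined), to make sure the sequence of deletions and insertions reproduces the ``latest message wins'' semantics of~(b) and does not accidentally duplicate or drop a message that a party actually reads.
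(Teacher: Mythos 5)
Your proposal is correct and follows essentially the same route as the paper: each non-alternating tick of party $p_i$ is simulated by one out-of-sync edit corruption that deletes the symbol $p_i$ sent at its previous tick (realizing the ``only the last message is seen'' semantics) and inserts toward $p_i$ the adversarially chosen symbol it reads on waking with an empty queue. Your additional care about chaining corruptions across maximal runs of non-alternating ticks is exactly the verification the paper dismisses as easy, so the two arguments coincide.
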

\begin{proof}
If there are two clock ticks made by the same party. Without loss of the generality, let's assume they are made by Alice. Denote these two clock ticks as $A(1)$ and $A(2)$, and assume that Bob's clock ticks after $A(2)$. From Bob's view, he sees only the message Alice sends at $A(2)$, so Alice's first message is deleted. When Alice wakes up at $A(2)$, which follows the time $A(2)$ (i.e., there is no other events in between), her incoming message queue is empty, so the adversary determines the symbol she sees; this is exactly an insertion. 
It follows that the above non-alternating clock-tick causes  exactly the same effect as of a deletion followed by an insertion. It is easy to verify that multiple non-alternating ticks have the same affect as multiple edit-corruptions.
\end{proof}
\end{enumerate}



\begin{figure}[p]
\centering
\begin{tikzpicture}[decoration=penciline]

\node (e) at (0.9,0.2) [token,fill=black,label=left:E] {};
\node (d) at (1.75,0.5) [token,fill=black,label=right:D] {};
\node (b) at (1.4,1) [token,fill=black,label=above right:B] {};
\node (a) at (0.95,1.85) [token,fill=black,label=left:A] {};

\draw[decorate,thick] (a) -- (b);
\draw[decorate,thick] (b) -- (d);
\draw[decorate,thick] (b) -- (e);

\draw (-0.7,0) --  (1,3)  -- (2.7,0);
\node (root) at (1,3.13) {};

\draw[decorate] (root) -- (a);
\end{tikzpicture}
\caption{An illustration of lambda structure}
\label{fig:lambda}
\end{figure}
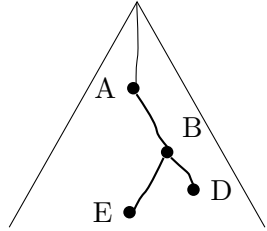

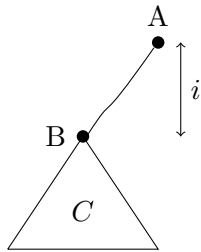
\begin{figure}[p]
\centering
\begin{tikzpicture}[decoration=penciline]

\draw (0,0) -- (1,1.5) -- (2,0) -- (0,0); 
\node at (1,0.5) {$C$};

\node (b) at (1,1.5) [token,fill=black,label=left:B]  {};
\node (a) at (2,2.75) [token,fill=black,label=above:A] {} ;
\draw[<->] (2.3,2.75)-- node[right]{$i$} (2.3,1.5);
\draw[decorate] (a)--(b);
\end{tikzpicture}
\caption{An illustration of the new tree $C'$ for defining the potential probability $P_i(C)$}
\label{fig:potential}
\end{figure}

\section{Missing proofs of Section~\ref{sec:edtc}} 
\label{app:treecode}

\begin{proof}[Proof of Lemnma \ref{lem:disuni}]
By the way we construct $\tilde S_n$ based on of $S_{n-1}$, and using the fact that for $i<n$, $\frac{|S_i|}{|\tilde{S}_i|} \geq c_2$, we have
\begin{eqnarray*}
\Pr_{C \uniform \tilde{S}_n} [ C(y)[1..j] = x]  &=& s^{-1} \cdot \Pr_{C \uniform S_{n-1}} [C(y)[2...j] = x[2..j]]\\
&\leq&  s^{-1}\cdot c_2^{-1} \cdot \Pr_{C \uniform \tilde{S}_{n-1}} [C(y)[2...j] = x[2..j]] \\
&\vdots& \\
&\leq& s^{-j+1} c_2^{1-j}\Pr_{C \uniform \tilde{S}_{n-j+1}} [ C(y)[j] = x[j]] \\
&=& s^{-j} c_2^{1-j}.
\end{eqnarray*}
\end{proof}

\begin{proof}[Proof of Lemnma \ref{lem:unidec}]

We prove the lemma by contradiction. Suppose there exist two messages $sm,sm' \in \cup_{i=1}^n C(\sum^n_{in})[1..i] $ such that both $SD(sm,rm)\leq \frac{\alpha}{2}$ and $SD(sm',rm)\leq \frac{\alpha}{2}$. We are going to show that this results in an $\alpha$-bad lambda. 

Consider the tree of the edit-distance tree code $C$. Let $D$ and $E$ be the tree nodes such that the path from root to $D$ denotes message $sm$ and the path from root to $E$ denotes message $sm'$. Let $B$ be the common ancestor of $D$ and $E$ in the tree.

Let $\tau$ and $\tau'$ be the string matching chosen in $SD(sm,rm)$ and $SD(sm',rm)$. Let $l = |\tau_1|$ and $l' = |\tau'_1|$. Choose $i, i'$ such that $del(\tau_1[i..l]) = BD$ and $del(\tau'
_1[i'..l']) = BE$. 
Next, consider the strings $del(\tau_2[i..l])$ and $del(\tau'_2[i'...l'])$. Both of them are suffixes of $rm$. Without loss of generality, let's assume $|del(\tau_2[i..l])| \leq |del(\tau'_2[i'..l'])|$. Therefore  $del(\tau_2[i..l])$ is also a suffix of $ del(\tau'_2[i'...l'])$ and there exists $j \leq i$ such that $del(\tau_2[j..l]) =del(\tau'_2[i'...l'])$. Now consider $del(\tau_1[j...l])$. Since $j \leq i$, $del(\tau_1[i...l])$ is a suffix of $del(\tau_1[j...l])$. So there is a node $A$ which is $B$'s ancestor or $B$ itself such that $AD$ corresponds to $del(\tau_1[j...l])$. 

Now we have
\begin{eqnarray*}
ED(AD, del(\tau_2[j..l])) &\leq& sc(\tau_1[j..l]) + sc(\tau_2[j...l]) \\
 &=& \frac{sc(\tau_1[j..l]) + sc(\tau_2[j..l])}{l - j + 1 -  sc(\tau_1[j..l])} \cdot (l - j + 1 -  sc(\tau_1[j..l])) \\
 &\leq& SD(sm,rm) \cdot |AD| \\ 
 &\leq& \frac{\alpha}{2} \cdot |AD|.\\
\end{eqnarray*}
Similarly, we have
\[
ED(BE, del(\tau'_2[i'..l'])) \leq \frac{\alpha}{2} \cdot |BE|.
\]
Since $del(\tau_2[j..l]) = del(\tau'_2[i'..l'])$ and by Fact \ref{fact:ed+}, we have
\begin{eqnarray*}
ED(AD, BE) &\leq& ED(AD, del(\tau_2[j..l])) + ED(BE, del(\tau'_2[i'..l'])) \\
&\leq& \frac{\alpha (|AD| + |BE|)}{2} \leq \alpha \cdot \max (|AD|, |BE|). 
\end{eqnarray*}
This shows that there's an $\alpha$-bad lambda in the tree code and therefore we get a contradiction.
\end{proof}


\section{Missing proofs of Section~\ref{sec:poly}} 
\label{app:poly}

\begin{proof}[Proof of Lemnma \ref{lem:gooddec}]
We prove the lemma for Alice, the lemma holds for Bob for the same reason. 
Consider the following procedure on the string matching $\tau_A = (\tau_1, \tau_2)$:
\begin{framed}
{\small
\begin{enumerate}[nosep]
\item Set $t = |\tau_2|$. 
\item While $t \neq 0$ do
\begin{enumerate}
\item If $\tau_2[t] = *$, then mark $t$ as bad and set $t \leftarrow t -1$. 
\item Elseif $\max_{i=1}^{t} \frac{sc(\tau_1[i..t]) + sc(\tau_2[i..t])}{t - i + 1 -  sc(\tau_1[i..t])} \geq \frac{\alpha}{2}$, then \\
\phantom{iiii} Let $t' = \arg\max_{i=1}^{t} \frac{sc(\tau_1[i..t]) + sc(\tau_2[i..t])}{t - i + 1 -  sc(\tau_1[i..t])}$. Mark $t'...t$ as bad, and set $t \leftarrow t' -1$. 
\item Else mark $t$ as good and set $t \leftarrow t - 1$. 
\end{enumerate}
\end{enumerate}
}
\end{framed}
First, we claim that the number of bad $t$'s is bounded by $ \frac{2}{\alpha}(sc(\tau_1) + sc(\tau_2)) +sc(\tau_1) $. The reason is the following:
In the above procedure, only step (a) and step (b) can mark some $t$'s as bad. 
Each time we mark $t'..t$ as bad (even if $t'=t$, e.g., by step (a)), then
we know that 
$
\frac{sc(\tau_1[t'..t]) + sc(\tau_2[t'..t])}{t - t' + 1 -  sc(\tau_1[t'..t])} \ge \frac{\alpha}{2}.
$
So the number of bad $t$'s among $t'..t$ is bounded by $\frac{2}{\alpha} (sc(\tau_1[t'..t]) + sc(\tau_2[t'..t])) + sc(\tau_1[t'..t])$.
Therefore, summing over all intervals $t'..t$ that were marked bad, 
the total amount of bad $t$'s  is bounded by
$
\tfrac{2}{\alpha}(sc(\tau_1) + sc(\tau_2)) + sc(\tau_1) .
$

Next, we claim that the number of good decodings is at least the number of good $t$'s. This implies that for Alice, the number of good 
decodings is $N_A + sc(\tau_2) - \frac{2}{\alpha}(sc(\tau_1) + sc(\tau_2)) -sc(\tau_1)$ since the number of good t's is equal to the length of $\tau_1$, $N_A+sc(\tau_2)$, minus the number of bad  $t$'s, $ \frac{2}{\alpha}(sc(\tau_1) + sc(\tau_2)) -sc(\tau_1)$. 
The claim holds  since each good~$t$ corresponds to a time that Alice receives some message from Bob and the suffix distance between the message Bob sent and the message Alice received is at most  
$\max_{i=1}^{t} \frac{sc(\tau_1[i..t]) + sc(\tau_2[i..t])}{t - i + 1 -  sc(\tau_1[i..t])} < \frac{\alpha}{2}$. 
By Lemma \ref{lem:unidec}, we know at that time the message Bob sent is the only one that has suffix distance less than $\frac{\alpha}{2}$ with respect to Alice's received word. So at that time, Alice decodes correctly. 
\end{proof}

\begin{proof}[Proof of Lemnma \ref{lem:prog}]

We will only prove the lemma for odd $i$'s. The lemma for even $i$'s will follow for the same reason. At time $m(i)$, Alice sends the $i$-th edge on $P$.  Then during times $m(i)+1,\ldots, m(i+1) -1$, there are no good decodings of Bob. Otherwise Bob would decode correctly and send the $i+1$-th edge on $P$. So during times $m(i)+1,...,m(i+1)-1$, all the good decodings are Alice's. Notice that for each good decoding that happens at Alice's side, at that very same time Alice receives a symbol that was sent by Bob (and not a symbol inserted by the adversary). 
Recall that Bob sends a symbol only after he performs a decoding, hence,
prior to each good decoding of Alice, there must be a decoding at Bob's side. 
However, there are no good decodings of Bob during this time period.  Also at time $m(i)$, the decoding is a decoding of Alice. So in the time period $m(i)+1,...,m(i+1) -1$, the number of good decodings of Alice  is at most the number of bad decodings of Bob. 
\end{proof}


\section{Detailed protocol and analysis for Theorem \ref{thm:main2}}
\label{sec:constant}

In this section we give a protocol~$\pi'$ that solves the $PJP(T)$ task, takes $N=O_\eps(T)$ rounds, 
and is resilient to $(1/18-\eps)$-fraction of edit corruptions. In contrast to Protocol~\ref{pro:pi} presented above, here $\pi'$ communicates symbols form a constant-size alphabet, and thus has communication complexity of
$CC(\pi')=O_\eps(T)$, that is, it has constant rate. By this we complete the proof of Theorem~\ref{thm:main}.
 \\
 \\
\textbf{The Outline.}
Generally speaking, $\pi'$ (Protocol~\ref{pro:pi'}) follows the same high-level ideas as Protocol~\ref{pro:pi},
except for encoding each edge $e$ transmitted in $\pi$ via a varying-length binary encoding.
More specifically, 
each edge $e = (n,s) \in [T]\times[4]$ is encoded as the binary description of $(i-n, s)$, where $i$ is the \emph{current round number} at the sender's side. The values $n$ and $s$ have the same meaning as in Protocol~\ref{pro:pi}: $n$ is a pointer to a previous round number in which the sender started to send an edge~$e'$, and $s$ indicates that the new edge $e$ is the $s$-th grandchild of the edge~$e'$. 
An empty edge is encoded as $(0,0)$, that is we assume $s$  take values in $\{0\} \cup [4]$.
Hence, the above binary description has length at most $\log(i-n) + 4$ bits. 
It can be shown that the amortized encoding length, is $O(1)$ and the proof is similar to the proof in~\cite{BR14}.

Several difficulties arise due the above varying-length encoding. 
First, since the channel's alphabet has a fixed size, it may take several rounds of communication to transmit a single edge. Furthermore, during these rounds where $e$ is being transmitted, new symbols are received from the other side. These symbols may cause the party to understand it needs to send a different edge $\tilde e$, instead of $e$ that is still in the process of being communicated. 

In the case explained above, the party will just add $\tilde e$ to a list of edges to be communicated. Recall that (some of) these edges may be added to the list due to adversarial noise. Moreover, such ``wrong'' edges may have a very large encoding (e.g., when the adversary causes a party to believe it needs to resend the first edge at the middle of the protocol; this has an encoding of length $\log (N)$). To prevent the adversary from delaying the progress of the protocol by these effects, we make the party send all the edges in the list \emph{in parallel}. That is, we cycle through the list of edges and send one bit from the encoding of any edge of the list. That way, long encodings do not delay the transmission of other edges. Furthermore, we attach to each edge a ``liveliness'' counter which indicates how likely it is for a specific edge to be part of the correct path of the underlying $PJP(T)$: each time a new symbol is received and the decoding of the incoming message implies some edge~$e$ should be sent, we increase the liveliness of that edge.
 This way, if an incorrect edge with a very long binary encdoing is added to the list, the noise must keep
 indicating this edge is needed in order to keep it ``alive'' in the list. 
\\ 
\\
\textbf{The Fine Details.}
We assume alphabet size of~$|\Sigma_{in}|=O(1/\eps^2)$.
Each party maintains a table (the EdgeTable), 
that stores all the edges this party is currently communicating. As mentioned above, the party cycles through the EdgeTable and sends one bit of every edge there. Thus, if the table holds $<1/\eps^2$ edges, a single round of communication suffices to send one bit of all the edges. Otherwise, several rounds of communication may be needed. 
Assume that the table holds $E$ edges, we name the process of sending each bit of all the $E$ edges, \emph{a cycle}. Note that a cycle takes $E/(1/\eps^2)$ rounds of communication; each such round is called \emph{a page}, and we say that a page is \emph{full} if it contains $1/\eps^2$ edges. For simplicity, we assume that new edges are added to the table only when a cycle is completed. Similarly, edges that were fully communicated are removed from the table at the end of a cycle.

%
The EdgeTable 
contains the  
following fields:
\begin{itemize}
\item \textbf{Edge:} contains the edge $e$ in the underlying $PJP(T)$ tree~$\T$.
\item \textbf{Edge Binary Description:} contains the varying-length binary description of $e$, as described above.
\item \textbf{Current Sent Length:} means how many bits of the binary description were already communicated. 
\item\textbf{Live Points:} A number describing the liveliness of the edge. When $e$ is added to the table it gets $1/\eps$ live points. At any future round where $e$ is to be added to the table, if it is already in the table it's live points increase by $1/\eps$. Every round we communicate a bit of $e$ its live points decrease by~1. If the live points of some edge reach $0$ at the end of a cycle, the edge is removed from the table.
\end{itemize}

Let us now formally define the process of a cycle: transmitting one bit from each edge in EdgeTable.
In fact, along with one bit per edge, we also send some meta-data, which is described
in Procedure~\ref{proc:cycle} below. Each symbol in the our encoding
$\Sigma_{in}$ needs to hold $4$ bits for up to~$1/\eps^2$ edges, thus
it is clear that $|\Sigma_{in}|=O(1/\eps^2)$ suffices.
We note that the information sent at each cycle suffices to exactly recover the   EdgeTable table
at the other side.
\begin{Procedure}[H]
\small
repeat $\lceil E/(1/\eps^2)\rceil $ times ($E$ is the number of edges in EdgeTable):

The next symbol to communicate consists of the following information: 
\begin{enumerate}
\item One bit to indicate whether this page is the last page in the cycle or not.
\item For each $e$ of the next $1/\eps^2$ edges in $E$ include:
\begin{enumerate}
\item the next bit of $e$'s ``Edge Binary Description'' to be transmitted
\item One bit to indicate whether $e$ has ``Live Points'' = 0. 
\item One bit to indicate whether $e$ has ``Current Sent Length'' = the length of its ``Edge Binary Description''. 
\item One bit to indicate whether $e$ was added to the table at the end of the last cycle.
\end{enumerate}
\end{enumerate}
\caption{Cycle(EdgeTable)}
\label{proc:cycle}
\end{Procedure}

We now describe the UpdateTable($e$) procedure. This procedure is called after every cycle, adds the edge $e$ to the cycle, and removes edges that either were fully transmitted or their ``Live points'' has reached 0.
The parties add edges to their EdgeTable using the 
update process UpdateTable($e$) described in Procedure~\ref{proc:updateTable}. 
\begin{Procedure}[H]
\caption{UpdateTable($e$)}
\label{proc:updateTable}\small
\begin{enumerate}
\item Add $e$:
\begin{enumerate}
\item If $e$ is not an empty edge and $e$ is not in the table, then insert $e$ to the table. Compute the ``Edge Binary Description''. Set ``Current Sent Length'' = 0 and ``Live Points'' {= $\frac{1}{\varepsilon}$}. \looseness=-1
\item If $e$ is not an empty edge and $e$ is in the table, increase its ``Live Points'' by $\frac{1}{\varepsilon}$.
\end{enumerate}
\item Maintain liveliness, and remove dead edges:
\begin{enumerate}
\item Decrease ``Live Points'' of each edge in the table by 1. Increase ``Current Sent Length'' of each edge in the table by 1. 
\item Remove all the edges with ``Live Points'' = 0 or ``Current Sent Length'' = the length of its ``Edge Binary Description''. 
\end{enumerate}
\end{enumerate}
\end{Procedure}

We are now ready to describe the Protocol~$\pi'$,  given in Protocol~\ref{pro:pi'}. Similar as in Section \ref{sec:poly}, we say that $d_A$ is valid if it is the encoding of a set of edges.
\begin{Protocol}[H]
\caption{The protocol $\pi'$}
\label{pro:pi'}
\small
Let $\T$ be given by $PJP(T)$. Recall that Alice's input is $X$. Assume the parties share some fixed $\alpha$-edit-distance tree code $C:\Sigma_{in}^{N} \rightarrow \Sigma_{out}^{N}$. 

\medskip
Initialize $i = 0$. \\
Repeat for $N=\lceil T/\eps^2\rceil$ times: 
\begin{enumerate}[itemsep=1pt]
\item $ i \leftarrow i  + 1$. 
\item Receive a symbol $r_A[i]$ from the other party. (For Alice, if $i = 1$, skip this step)
\item Find $d_A \in \Sigma_{in}^*$ which minimizes $SD(d_A,r_A[1...i])$. 
\item If $E(d_A) \cup X$ has a unique path from the root in $PJP(T)$, do the following. Here $E(d_A)$ is a set of edges indicated by $d_A$, if $d_A$ is not a valid message, $E(d_A) = \emptyset$. 
\begin{enumerate}
\item If this path already reach the leaf node $v$, then $s(v) \leftarrow s(v) + 1$. 
\item Let $e$ be the deepest edge on the the unique path from root. If $e \not \in X$ , or $e$ is not an edge in the first or second level of $\T$ or $e$'s grandparent has been sent, set $e$ to be an empty edge.
\end{enumerate} 
\item If $E(d_A) \cup X$ does not have a unique path from the root in $PJP(T)$, set $e$ to be an empty edge.
\item If a cycle is completed (or if $i=1$), call UpdateTable($e$). 
\item $s_A[i] \leftarrow $ the next page of Cycle(EdgeTable).
\item Send $C(s_A[1...i])[i]$ to Bob. 
\item If $i = N (1- 2\rho)$, output the leaf node $v$ with the largest $s(v)$. 
\end{enumerate}
\end{Protocol}

We now analyze the coding scheme of Protocol~\ref{pro:pi'}, and prove our main theorem (Theorem~\ref{thm:main}, obtained via Theorem~\ref{thm:main2}). 
First, we claim that Protocol~\ref{pro:pi'} has a constant rate. 
Indeed, it communicates $2N \log |\Sigma_{out}|$ bits throughout. We use a tree code with
$|\Sigma_{in}|=O_\eps(1)$ and $\alpha=1-\eps$ and thus by 
Theorem~\ref{thm:edtc} we get that $|\Sigma_{out}|=O_\eps(1)$ as well. 
Hence, the total communication of $\pi$ is given by 
$$CC(\pi') = 2N\log |\Sigma_{out}| = T/\eps^2 \cdot (1/\eps^2)^{O(1/\eps)}= O_\eps(T).$$

The correctness analysis is quite similar to the one of Section~\ref{sec:poly}. 
As above, 
let $N_A$ and $N_B$ be the counter~$i$ of Alice and Bob when one of them reaches the end of the protocol. Let $\tau_A = (\tau_1, \tau_2)$ be the string matching between $s_B[1..N_B]$ and $r_A[1..N_A]$ that is consistent with the protocol. Let $\tau_B = (\tau_3, \tau_4)$ be the string matching between $s_A[1..N_A]$ and $r_B[1..N_B]$. Recall that we use $sc(\tau)$ to denote the number of $*$'s in the string. By definition,  $sc(\tau_1) + sc(\tau_3)  \leq 2\rho N$ and $sc(\tau_2) + sc(\tau_4) \leq 2\rho N$.

Let's first prove the following lemma which shows that the case where the number of edges in the table exceeds $\frac{1}{\varepsilon^2}$ is very rare. 
\begin{lemma}
\label{lem:page}
Alice sends at most $\varepsilon \cdot  N_A$ full pages, and Bob sends at most $\varepsilon \cdot  N_B$ full pages (recall that a full page is a page that contains $\frac{1}{\varepsilon^2}$~edges).
\end{lemma}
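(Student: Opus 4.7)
My plan is a charging/amortization argument on ``edge-cycle-slots'', by which I mean pairs $(e,c)$ such that edge $e$ is present in the EdgeTable during cycle $c$. The argument is symmetric for Alice and Bob, so I would describe it only for Alice. Let $C_A$ denote the total number of cycles Alice completes within her $N_A$ pages, and let $E_c$ be the number of edges in Alice's EdgeTable at the start of cycle~$c$. Since a cycle with $E_c$ edges is transmitted over $\lceil E_c \varepsilon^2\rceil$ pages, of which at most $\lfloor E_c \varepsilon^2 \rfloor \le E_c \varepsilon^2$ are full (only the last page of a cycle may carry fewer than $1/\varepsilon^2$ edges), the number of full pages Alice sends is bounded by $\varepsilon^2 \sum_{c=1}^{C_A} E_c$. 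The rest of the plan reduces to bounding this aggregate.

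The key step is to observe that $\sum_c E_c$ equals the cumulative count of edge-cycle-slots, and that this quantity is controlled by the live-points mechanism in Procedure~\ref{proc:updateTable}. Every edge is inserted with $1/\varepsilon$ live points and loses one live point per cycle (step~2a of UpdateTable); a refresh of an edge already in the table injects another $1/\varepsilon$ live points. In either situation, a single call to UpdateTable can add at most $1/\varepsilon$ new edge-cycle-slots to the aggregate, and early removal through the ``fully transmitted'' condition can only make this bound tighter. Since UpdateTable is invoked exactly once per cycle (step~6 of Protocol~\ref{pro:pi'}), there are exactly $C_A$ such calls, and therefore $\sum_{c=1}^{C_A} E_c \le C_A/\varepsilon$.

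To conclude, I would use the trivial bound $C_A \le N_A$, which holds because every cycle consumes at least one of Alice's pages. Substituting into the previous inequalities yields
\[
\#\{\text{full pages sent by Alice}\} \;\le\; \varepsilon^2 \cdot \frac{C_A}{\varepsilon} \;\le\; \varepsilon N_A,
\]
as claimed; the bound for Bob follows by the identical argument with the roles swapped.

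The main delicate point is bookkeeping rather than a genuine technical hurdle: one must carefully verify from the pseudocode that live points are injected \emph{only} through UpdateTable, and only in the two situations above, each time adding exactly $1/\varepsilon$. Once the invariant ``total live points ever injected $\le (1/\varepsilon)\cdot\#\{\text{UpdateTable calls with a non-empty edge}\}$'' is stated precisely, so that early removals via the ``Current Sent Length'' condition are handled as only shortening an edge's stay, the above chain of inequalities is immediate.
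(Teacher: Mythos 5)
Your proof is correct and is essentially the paper's own argument: both rest on the invariant that the total of ``Live Points'' is non-negative, is injected at rate at most $1/\varepsilon$ per UpdateTable call (hence at most $N_A/\varepsilon$ overall), and is consumed at rate $1/\varepsilon^2$ per full page. The only cosmetic difference is that you group the live-point decrements by cycle (via the intermediate quantities $\sum_c E_c$ and $C_A$) whereas the paper charges them directly to pages, yielding $N_A\cdot\frac{1}{\varepsilon} - p_A\cdot\frac{1}{\varepsilon^2}\ge 0$ in one step.
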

\begin{proof}
We prove the lemma for Alice, and the same holds for Bob. Notice that the sum of ``Live Points'' of all edges in the table is always non-negative. And in each round of Alice, if the UpdateTable procedure is called, it increases the sum of ``Live Points'' by at most $\frac{1}{\varepsilon}$.  Whenever Alice sends a page of $\frac{1}{\varepsilon^2}$ edges, the sum of ``Live Points'' is decreased by at least $\frac{1}{\varepsilon^2}$. Alice has $N_A$ rounds in total. Let $p_A$ be the number of full pages sent by Alice. By a simple counting argument, we have
\[
N_A \cdot \frac{1}{\varepsilon} - p_A \cdot \frac{1}{\varepsilon^2} \geq 0.
\]
Therefore $p_A \leq \varepsilon\cdot  N_A$. 
\end{proof}

We again define good decoding, which is slightly different from the one of Section~\ref{sec:poly}---we only care about decodings that happen at the end of a cycle. 
\begin{definition}[Good Decoding]
\label{def:gooddec2}
When a party decodes a message, we say that it is a good decoding if the followings hold:
\begin{enumerate}[itemsep=2pt]
\item The decoding outputs the entire message sent so far by the other side ($d_A = s_B[1..i]$ or $d_B = s_A[1..i]$)
\item The symbol just received was not inserted by the adversary. 
\item The party has finished a cycles (i.e., UpdateEdge(e) is called at that round.)
\end{enumerate}
Otherwise we call it a bad decoding.
\end{definition}

\begin{lemma}
\label{lem:gooddec2}
Alice has at least 
$N_A(1-\varepsilon) + (1-\frac2\alpha)sc(\tau_2) -(1+\frac2\alpha) sc(\tau_1)$ 
good decodings. Bob has at least 
$N_B(1-\varepsilon) + (1-\frac2\alpha)sc(\tau_4) -(1+\frac2\alpha)sc(\tau_3)$ good decodings. 
\end{lemma}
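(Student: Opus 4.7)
My plan is to reduce Lemma \ref{lem:gooddec2} to Lemma \ref{lem:gooddec} plus a correction accounting for the third requirement in the new definition of good decoding (being at the end of a cycle).

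First I will re-run verbatim the marking procedure from the proof of Lemma \ref{lem:gooddec} on the string matching $\tau_A = (\tau_1, \tau_2)$. That procedure shows that at least $N_A + sc(\tau_2) - \tfrac{2}{\alpha}(sc(\tau_1)+sc(\tau_2)) - sc(\tau_1) = N_A + (1-\tfrac{2}{\alpha}) sc(\tau_2) - (1+\tfrac{2}{\alpha}) sc(\tau_1)$ of Alice's rounds correspond to ``good $t$'s'', each of which has the property that (i) the just-received symbol was genuinely sent by Bob (not an inserted one), and (ii) the suffix distance from what Bob sent up to that point to what Alice has received is below $\alpha/2$, so by Lemma \ref{lem:unidec} Alice decodes the entire prefix correctly. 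Thus each such round already satisfies conditions 1 and 2 of Definition \ref{def:gooddec2}; the only thing missing is condition 3 (that UpdateTable is invoked, i.e.\ the current page closes a cycle).

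Next I argue that the number of Alice-rounds failing condition 3 is at most $\varepsilon N_A$. A round of Alice fails condition 3 precisely when the page being sent is not the last page of its cycle, and every such non-final page is necessarily full (carries $1/\varepsilon^2$ edges), since cycles split into pages only when the EdgeTable holds more than $1/\varepsilon^2$ edges. Hence the count of rounds not ending a cycle is bounded by the total number of full pages Alice transmits, which by Lemma \ref{lem:page} is at most $\varepsilon N_A$. Subtracting this from the previous bound yields at least
\[
N_A(1-\varepsilon) + \Bigl(1-\tfrac{2}{\alpha}\Bigr) sc(\tau_2) - \Bigl(1+\tfrac{2}{\alpha}\Bigr) sc(\tau_1)
\]
good decodings at Alice's side, as claimed. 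The identical argument, applied to $\tau_B$ and the bound on Bob's full pages from Lemma \ref{lem:page}, gives the corresponding statement for Bob.

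The only subtle point, which I would state carefully, is the justification that every non-final page in a cycle is necessarily full: this is immediate from the definition of Cycle(EdgeTable), since the splitting of a cycle into pages walks the EdgeTable in blocks of exactly $1/\varepsilon^2$ edges (only the final block in a cycle may be partial). No other obstacle arises; the remainder is bookkeeping that mirrors the polynomial-alphabet analysis.
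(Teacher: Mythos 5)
Your proposal is correct and follows essentially the same route as the paper: apply Lemma \ref{lem:gooddec} to count the decodings satisfying the first two conditions of Definition \ref{def:gooddec2}, then use Lemma \ref{lem:page} to bound by $\varepsilon N_A$ (resp.\ $\varepsilon N_B$) the rounds failing the cycle-completion condition, and subtract. Your explicit justification that every non-final page of a cycle is full (so that non-cycle-ending rounds are counted by full pages) is a detail the paper leaves implicit, but the argument is the same.
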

\begin{proof}
We prove the lemma for Alice; the lemma for Bob holds for the same reason. By Lemma~\ref{lem:gooddec}, the number of decodings of Alice which satisfy the first two constraints in Definition~\ref{def:gooddec2} is at least $N_A +(1-\frac2\alpha)sc(\tau_2) -(1+\frac2\alpha) sc(\tau_1)$. By Lemma \ref{lem:page}, the number of decodings of Alice which don't satisfy the third constraint in Definition \ref{def:gooddec2} is at most $\varepsilon\cdot  N_A$. So Alice has at least $N_A + (1-\frac2\alpha)sc(\tau_2) -(1+\frac2\alpha) sc(\tau_1) - \varepsilon \cdot N_A$ good decodings. 
\end{proof}

As in the analysis of Protocol~\ref{pro:pi}, we now sort all the good and bad decodings of both Alice and Bob by the time these decodings happen. Since in total we have $N_A + N_B$ decodings, we can assume these decodings happen at times $t=1,2,...,N_A + N_B$. 
Again, we assume that for each decoding, the sending procedure right after it happens at the same time as the decoding.  Let $e_A(t)$ be the set of edges Alice has fully communicated up to time~$t$ and $e_B(t)$ be the set of edges Bob has fully communicated up to~$t$. Let $P$ be the correct path of length $T$ defined by the underlying~$PJP(T)$. 
Let~$l(t)$  be the length of the longest path starting from root using edges from~$P \cap (e_A(t) \cup e_B(t))$. Basically $l(t)$ measures how much progress Alice and Bob have made. Let's also define $m(i)$ to be the first time $t$ such that $l(t) \geq i$.  

The following lemma is similar to Lemma~\ref{lem:prog} and shows that many bad decodings are needed to slow down the progress. For notation convenience, we need the following definition.
\begin{definition}
\label{def:region}
Let $g_A[t_1,t_2]$ and $g_B[t_1,t_2]$ be the number of good decodings of Alice and the number of good decodings of Bob during times $t_1,...,t_2$. Let $b_A[t_1,t_2]$ and $b_B[t_1,t_2]$ be the number of bad decodings of Alice and  the number of bad decodings of Bob during times $t_1,...,t_2$.
\end{definition}

\begin{lemma}
\label{lem:prog2}
For $i = 0,...,T-1$, let $l_{i+1}$ be the length of the binary description of the $(i+1)$-th edge on~$P$ during the first transmission where this edge is fully communicated.  During time $m(i)+1,...,m(i+1)$, the following is true. 
\begin{enumerate}
\item If $i$ is odd, then 
\begin{align*}
g_B[m(i)+1,m(i+1)] &\leq l_{i+1} + \varepsilon\cdot  b_B[m(i)+1,m(i+1)], \text{ and }\\ g_A[m(i)+1,m(i+1)] &\leq  l_{i+1} + (1+\varepsilon)\cdot  b_B[m(i)+1,m(i+1)].
\end{align*}
\item If $i$ is even, then 
\begin{align*}
g_A[m(i)+1,m(i+1)] &\leq l_{i+1} + \varepsilon \cdot b_A[m(i)+1,m(i+1)], \text{ and } \\g_B[m(i)+1,m(i+1)] &\leq  l_{i+1} + (1+\varepsilon)\cdot b_A[m(i)+1,m(i+1)].
\end{align*}
\end{enumerate}
\end{lemma}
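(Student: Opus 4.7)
The plan is to handle only the odd-$i$ case and observe that the even case follows by swapping the roles of Alice and Bob throughout. Fix an odd $i$ and let $e^*$ denote the $(i+1)$-th edge on $P$, which is Bob's; by definition of $m(i+1)$, the window $[m(i)+1,m(i+1)]$ closes exactly when Bob first completes the full transmission of $e^*$. The key observation is that $l(t)=i$ throughout the interior of the window, so whenever Bob has a good decoding (at a cycle-end, after correctly reconstructing Alice's entire sent message so far) the unique path determined by $E(d_B)\cup Y$ reaches depth $i$ and its next edge is $e^*\in Y$; thus Bob invokes UpdateTable($e^*$), either inserting $e^*$ with $1/\varepsilon$ Live points or refreshing its Live points by $1/\varepsilon$.

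The core of the proof is a bookkeeping of the cycles of Bob that fall inside the window. I would partition them into (a) the $\le l_{i+1}$ cycles of the first successful transmission of $e^*$ (during which the edge remains in the table and contributes one bit per cycle), and (b) all other ``unsuccessful'' cycles, which group into finitely many failed attempts, each opened by an insertion of $e^*$ and closed by an eviction when Live reaches~$0$, possibly interleaved with pre-insertion and between-attempt cycles whose decodings are all bad. Tracking Live points (starting at $1/\varepsilon$, decreasing by $1$ per cycle, and jumping by $1/\varepsilon$ at every good-decoding cycle) shows that a failed attempt with $r$ refreshes lasts about $(r+1)/\varepsilon$ cycles and therefore contains $r+1$ good cycles and about $(r+1)(1-\varepsilon)/\varepsilon$ bad cycles. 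Summing over all failed attempts, the good cycles in the unsuccessful portion are at most a $\tfrac{\varepsilon}{1-\varepsilon}$ fraction of the corresponding bad cycles; since every bad cycle contributes at least one bad-decoding page, this total is at most $\tfrac{\varepsilon}{1-\varepsilon}\cdot b_B[m(i)+1,m(i+1)]$, and absorbing the factor $1/(1-\varepsilon)$ into the $\varepsilon$ of the statement yields
\[
g_B[m(i)+1,m(i+1)] \;\le\; l_{i+1} + \varepsilon\cdot b_B[m(i)+1,m(i+1)].
\]

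For the second inequality I would relate $g_A$ to the number of pages Bob transmits in the window. Each good decoding of Alice consumes a non-inserted symbol originating at a previous Bob decoding, and since Bob sends exactly one symbol per round the number of good decodings of Alice in the window is at most the total number of Bob's rounds in (or just preceding) the window, which is at most $g_B+b_B$. Chaining with the first inequality gives $g_A\le l_{i+1}+(1+\varepsilon)\cdot b_B$, as required.

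The hard part will be the Live-point bookkeeping and its interaction with the page/cycle distinction. I must carefully (i) check that pre-insertion and between-attempt cycles only strengthen the good-to-bad ratio in our favor; (ii) invoke Lemma~\ref{lem:page} to guarantee that $b_B$ truly upper-bounds the number of bad cycles, even though a single full-page cycle may span multiple bad-decoding pages; and (iii) verify that the ``first successful transmission'' of $e^*$ is well-defined and contributes at most $l_{i+1}$ good cycles to the window, even when $e^*$ happens to already reside in Bob's table at time $m(i)$.
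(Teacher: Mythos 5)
Your proposal follows essentially the same route as the paper: it establishes $g_A[t_1,t_2]\le g_B[t_1,t_2]+b_B[t_1,t_2]$ from the fact that every good decoding of Alice consumes a symbol actually sent at one of Bob's decodings, and it bounds $g_B$ over the window by partitioning time according to whether $e_{i+1}$ is absent from, evicted from, or finally fully sent from Bob's EdgeTable, with the same Live-points accounting (the paper states the eviction bound directly as $g_B\le\varepsilon\, b_B$, glossing over the same $1/(1-\varepsilon)$ factor you propose to absorb). The proof is correct and matches the paper's argument.
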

\begin{proof}
We will only prove the lemma for odd $i$'s; the case for even $i$'s follow for the same reason.
First, for any time period $t_1,...,t_2$, it holds that $g_A[t_1,t_2] \leq g_B[t_1,t_2] + b_B[t_1,t_2]$. This follows since every good decoding of Alice stems from a symbol that was actually sent by Bob (rather than inserted by Eve), which implies a decoding at Bob's side (either bad or good).

Since we assume $i$ is odd, 
,at time $m(i)$, Alice finishes sending the $i$-th edge on path $P$. From the above claim we have $g_A[m(i)+1,m(i+1)] \leq g_B[m(i)+1,m(i+1)] + b_B[m(i)+1,m(i+1)]$. Then to prove Lemma \ref{lem:prog2}, it suffices to show that $g_B[m(i)+1,m(i+1)] \leq l_{i+1} + \varepsilon \cdot b_B[m(i)+1,m(i+1)]$ because the upper bound on $g_B[m(i)+1,m(i+1)]$ will directly give us the upper bound on  $g_A[m(i)+1,m(i+1)]$.

Let's divide time period $m(i)+1,...,m(i+1)$ into three different types of time periods and bound the number of good decodings of Bob. Let $e_{i+1}$ be the $(i+1)$-th edge of~$P$.
\begin{enumerate}[itemsep=2pt]
\item Intervals $[t_1,t_2]$ in which $e_{i+1}$ is not in Bob's EdgeTable. In this interval $g_B[t_1,t_2] = 0$ because if Bob has a good decoding, he adds the ($i+1$)-th edge of~$P$ into his EdgesTable.

\item Intervals $[t_1,t_2]$ where $e_{i+1}$ was inserted into the Bob's EdgeTable at~$t_1$, but removed at~$t_2$ before Bob has finished to fully send its encoding.
In this time period, the ``Live Points'' of $e_{i+1}$ is increased for at least $g_B[t_1,t_2]$ times, and each time it is increased by $\frac{1}{\varepsilon}$. 
At the same time the ``Live Points'' of  $e_{i+1}$ decreases by 1 for at most $b_B[t_1,t_2]$ times, until they reach $0$ at time $t_2$.
We thus have $g_B[t_1,t_2] \leq \varepsilon \cdot b_B[t_1,t_2]$.

\item Interval $[t_1,t_2]$ where $e_{i+1}$  was inserted into the Bob's EdgeTable at $t_1$, and finally this edge's sending procedure is finished at time $t_2$.
It is clear that $t_2 = m(i+1)$, and  $g_B[t_1,t_2] \leq l_{i+1}$ since assuming $e_{i+1}$ wasn't removed from the table,  after $l_{i+1}$ times of sending all the pages in the table, Bob will have finished sending the encoding of $e_{i+1}$. Recall that a good decoding implies Bob completed sending the last page in his table. 
\end{enumerate}
To sum up, we know that in case 1 and 2, we have $g_B[t_1,t_2] \leq \varepsilon \cdot b_B[t_1,t_2]$. And case 3 will happen at most once. Therefore we have $g_B[m(i)+1,m(i+1)] \leq l_{i+1} + \varepsilon \cdot b_B[m(i)+1,m(i+1)]$ which completes the proof.
\end{proof}

In the following lemma, 
we bound the sum of the length of encodings of $P$'s edges, regardless of the progress of the protocol. 

\begin{lemma}
\label{lem:desc}
Given any instance of $\pi'$, that succeeds to fully communicate the first
$T'\le T$ edged in~$P$, 
for $i=1,...,T'$ let $l_i$ be the length of the binary encoding of 
$i$-th edge on $P$ 
as communicated in that instance. For $i>T'$, define $l_i=0$.
Then,
\[
\sum_{i=1}^T l_i \leq T \left(\log\left(\frac{N_A+N_B}{T}\right) + 4\right)
\]
\end{lemma}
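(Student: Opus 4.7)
The plan is to charge each edge's binary encoding against a gap in the sender's round counter, apply Jensen's inequality to the resulting logs, and then use monotonicity to replace $T'$ by $T$. For each $i\in\{1,\dots,T'\}$, let $n_i$ be the round---on Alice's clock when $i$ is odd, on Bob's when $i$ is even---at which the sender adds the $i$-th edge of $P$ to its EdgeTable in the instance that eventually finishes its transmission; set $n_{-1}=n_0=0$. Because this edge's binary description is the pair $(n_i-n_{i-2},s)$ with $s\in\{0,1,2,3,4\}$ (the sender's current round minus the grandparent's starting round, together with the grandchild index), the total number of bits used---including a prefix-free length tag and the constantly many bits for $s$---satisfies $l_i\le \log(n_i-n_{i-2})+4$.

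Next, the odd-indexed $n_i$'s are distinct Alice rounds listed in increasing order, so the telescoping sum $\sum_{i\text{ odd},\,i\le T'}(n_i-n_{i-2})$ is bounded by the largest such $n_i$, and hence by $N_A$; symmetrically, the even-indexed piece is at most $N_B$. Therefore $\sum_{i=1}^{T'}(n_i-n_{i-2})\le N_A+N_B$, and concavity of $\log$ (Jensen's inequality) gives
\[
\sum_{i=1}^{T'}\log(n_i-n_{i-2})\;\le\;T'\log\!\left(\frac{N_A+N_B}{T'}\right),
\]
so $\sum_{i=1}^T l_i=\sum_{i=1}^{T'}l_i\le T'\log((N_A+N_B)/T')+4T'$.

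Finally, to pass from $T'$ to $T$, note that $4T'\le 4T$ is trivial and that the map $x\mapsto x\log((N_A+N_B)/x)$ is increasing on $(0,(N_A+N_B)/e]$. Since $\pi'$ runs for $N=\lceil T/\varepsilon^2\rceil$ rounds and at least one party reaches round $N$, we have $N_A+N_B\ge N\ge T/\varepsilon^2\ge eT$ for the range of $\varepsilon$ under consideration; hence both $T'\le T$ lie in the increasing regime, and the bound at $T'$ is at most the bound at $T$. I anticipate this last step being the main obstacle: a naive Jensen application only yields $T'\log((N_A+N_B)/T')+4T'$, and it is precisely the parameter slack $N=\Theta_\varepsilon(T)$ built into $\pi'$ that lets one pass back to the nominal $T$ without degrading the bound, especially in the worst case where the protocol is aborted with $T'$ strictly less than $T$.
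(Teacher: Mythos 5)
Your proof is correct and follows essentially the same route as the paper's: bound each $l_i$ by $\log(n_i-n_{i-2})+4$, telescope the odd and even subsequences against $N_A$ and $N_B$ respectively, apply concavity of $\log$, and then use monotonicity of $x\mapsto x\log((N_A+N_B)/x)$ on $(0,(N_A+N_B)/e]$ to pass from $T'$ to $T$. If anything, your justification of the final monotonicity step (via $N_A+N_B\ge N\ge T/\varepsilon^2\ge eT$) is spelled out more explicitly than in the paper.
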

\begin{proof}
For any edge $l_i$ let $(\delta_i,s_i)$ be the encoding that was used in that instance of $\pi'$. By definition, $\delta_i$~is the difference between the round $n_i$ where we start sending the encoding of $l_i$ and round $n_{i-2}$, where the first bit of $l_{i-2}$ was communicated.
Clearly, 
for any $i > 2$ we have $l_i \leq \log(n_i - n_{i-2}) + 4$, while for the first two edges $i = 1,2$ we have  $l_i \leq \log(n_i -0) + 4$. 

Then we have
$$n_1 + \sum_{\substack{3 \leq i \leq T',\\ i \text{ odd}}}(n_i - n_{i-2}) \leq N_A
\quad\text{ and } \quad
n_2 + \sum_{\substack{4 \leq i \leq T',\\ i \text{ even}}}(n_i - n_{i-2}) \leq N_B,$$ 
and by the concavity of logarithm, we have
\[
\frac1{T'}\sum_{i=1}^{T'} l_i 
\leq 
	\log 	\left (
	  	\frac1{T'}\Big( n_1 +n_2 + \sum_{3\leq i \leq T} (n_i - n_{i-2}) \Big)
	        \right)
	+ 4
\leq \log\left(\frac{N_A+N_B}{T'}\right) + 4.
\]
Finally, note that $x\log(N/x)$ is monotonic increasing in $x\in(0,N/e)$,
thus the claim holds for any~$T$ as long as $T'\le T\le 2N/e$. 
\end{proof}

With the above lemmas we can complete the proof of the main theorem.
\begin{theorem}[Restate of Theorem \ref{thm:main2}]
For any $\varepsilon>0$,
the simulation~$\pi'$ of Protocol~\ref{pro:pi'}
with 
$N = \lceil\frac{T}{\varepsilon^2}\rceil$, 
and a $(1-\varepsilon)$-edit distance tree code, 
solves $PJP(T)$ and 
is resilient to a $(1/18-\eps)$-fraction of edit corruptions.
\end{theorem}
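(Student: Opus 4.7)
The plan is to follow the same high-level structure as the proof of Theorem~\ref{thm:main1}, but replacing the basic decoding and progress lemmas by their constant-alphabet counterparts (Lemmas~\ref{lem:gooddec2}, \ref{lem:prog2}, \ref{lem:desc}) that account for the varying-length binary encoding and the cycle-based batching. Fix $\rho = 1/18-\varepsilon$ and $\alpha = 1-\varepsilon$, and let $g_A, b_A, g_B, b_B$ denote the good and bad decoding counts of each party as before.

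First I would bound the total number of bad decodings using Lemma~\ref{lem:gooddec2}. Writing $b_A = N_A - g_A$ and symmetrically for Bob, using $sc(\tau_1)+sc(\tau_3)\le 2\rho N$ and $sc(\tau_2)+sc(\tau_4)\le 2\rho N$, and $N_A,N_B\le N$, I obtain $b_A + b_B \le 2\varepsilon N + 8\rho N/\alpha$; the extra $2\varepsilon N$ compared with the polynomial-alphabet proof is exactly the non-final-page overhead produced by Lemma~\ref{lem:page}. Next, summing Lemma~\ref{lem:prog2} over $i=0,\ldots,T-1$, the good decodings of Alice used before reaching~$m(T)$ satisfy $g_A[1,m(T)] \le \sum_{i=1}^T l_i + (1+\varepsilon)(b_A+b_B)$, and symmetrically for Bob. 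By Lemma~\ref{lem:desc} combined with $N_A+N_B\le 2N = 2T/\varepsilon^2$, the summed encoding length is $\sum l_i \le T\bigl(\log(2/\varepsilon^2)+4\bigr) = O_\varepsilon(T)$.

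With these pieces in hand I would argue correctness of the majority vote exactly as in Theorem~\ref{thm:main1}. Every good decoding of Alice after time~$m(T)$ contributes a vote for the correct leaf, while any incorrect leaf collects at most $b_A$ votes. It suffices to show that by round $N(1-2\rho)$ Alice has strictly more than $b_A$ votes for the correct leaf, i.e.\ that
\[
N(1-2\rho) > b_A + (1+\varepsilon)(b_A+b_B) + \sum_{i=1}^T l_i.
\]
Substituting the bounds above reduces this to $1-2\rho > (2+\varepsilon)(2\varepsilon + 8\rho/\alpha) + \varepsilon^2\bigl(\log(2/\varepsilon^2)+4\bigr)$; with $\rho=1/18-\varepsilon$ and $\alpha=1-\varepsilon$ this becomes, up to $O(\varepsilon^2)$ terms, $14\varepsilon > \varepsilon^2\log(1/\varepsilon)$, which holds for all sufficiently small $\varepsilon>0$. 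The argument for Bob is symmetric, so both parties output the correct leaf by round $N(1-2\rho)$.

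The main obstacle is the bookkeeping that combines three distinct overhead sources into a single feasibility inequality: the cycle loss $\varepsilon N$ from Lemma~\ref{lem:page}, the summed encoding length $O_\varepsilon(T)$ from Lemma~\ref{lem:desc}, and the noise contribution $8\rho N/\alpha$ from Lemma~\ref{lem:gooddec2}. This is precisely where the choice $N = \lceil T/\varepsilon^2\rceil$ is essential, as opposed to the $\Theta(T/\varepsilon)$ used in Theorem~\ref{thm:main1}: the extra factor of $1/\varepsilon$ absorbs the logarithmic blow-up $\log(1/\varepsilon)$ coming from the varying-length edge encoding, leaving just enough slack in the $N(1-2\rho)$ budget to dominate the bad-decoding count.
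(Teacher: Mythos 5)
Your proposal follows the paper's own proof essentially step for step: the same three ingredients (Lemma~\ref{lem:gooddec2} for the bad-decoding count, Lemma~\ref{lem:prog2} summed over $i$ for the progress-vs-noise tradeoff, Lemma~\ref{lem:desc} for $\sum_i l_i$), the same parameter choices $\rho=1/18-\eps$, $\alpha=1-\eps$, and the same final feasibility computation (the paper's Claim~\ref{clm:main2} reduces to exactly the inequality you check, namely $N(1-2\rho) > (2+\eps)(b_A+b_B) + \sum_i l_i$ with $b_A+b_B \le 2\eps N + 8\rho N/\alpha$).

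There is, however, one accounting slip in the vote-counting step that you should repair. You bound $g_A[1,m(T)] \le \sum_i l_i + (1+\eps)(b_A+b_B)$, i.e.\ you symmetrize the coefficients coming from Lemma~\ref{lem:prog2}. With that bound, the number of votes for the correct leaf that Alice accumulates by her round $N(1-2\rho)$ is at least $N(1-2\rho) - b_A - g_A[1,m(T)]$, so requiring strictly more than $b_A$ votes yields the condition $N(1-2\rho) > 2b_A + (1+\eps)(b_A+b_B) + \sum_i l_i$, not $b_A + (1+\eps)(b_A+b_B) + \sum_i l_i$ as you wrote. In the worst case $b_B=0$ the stronger condition requires $(3+\eps)(b_A+b_B) \lesssim (1-2\rho)N$, and since $(3+\eps)\cdot 8\rho/\alpha \approx 24\rho = 4/3 > 8/9 = 1-2\rho$, the symmetrized route does not close. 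The fix is to keep the asymmetric coefficients of Lemma~\ref{lem:prog2}: $g_A[1,m(T)] \le \eps\, b_A[1,m(T)] + (1+\eps)\, b_B[1,m(T)] + \sum_i l_i$. Then the needed condition becomes $N(1-2\rho) > (2+\eps)b_A + (1+\eps)b_B + \sum_i l_i$, which \emph{is} implied by $(2+\eps)(b_A+b_B)+\sum_i l_i < N(1-2\rho)$ --- precisely the inequality you (and the paper) verify. With that correction, and a word confirming that the same count forces $m(T)$ to actually be reached, your argument coincides with the paper's.
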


\begin{proof}
For notation convenience, let $g_A  = g_A[1,N_A+N_B]$, $g_B  = g_B[1,N_A+N_B]$, $b_A  = b_A[1,N_A+N_B]$ and $b_B  = b_B[1,N_A+N_B]$. 
Also let $\rho=1/18-\eps$ and $\alpha=1-\varepsilon$.
The proof is quite similar to the one of Theorem \ref{thm:main1} with some small difference in the parameters, specifically,
we show that both Alice and Bob have a large number of good decodings (compared to their bad decodings) which implies they both output the correct value.

\begin{claim}
\label{clm:main2}
\[
g_A -  (\sum_{i=1}^T l_i) \ge  (N_A - (1-2\rho) N) -(1+\varepsilon)(b_A + b_B). 
\]
\end{claim}
\begin{proof}
Recall  that 
both $N_A$ and $N_B$ are in the range~$[N(1-2\rho), N]$
because there are at most $2\rho N$ insertion/deletion errors. By Lemma \ref{lem:gooddec2}, we have 
\begin{align*}
&b_A + b_B  \\
& \le N_A -g_A  + N_B-g_B \\
& \le 
	\varepsilon N_A +\tfrac{2}{\alpha}(sc(\tau_1) + sc(\tau_2)) + sc(\tau_1) - sc(\tau_2) + 
	\varepsilon N_B + \tfrac{2}{\alpha}(sc(\tau_3) + sc(\tau_4)) + sc(\tau_3) -sc(\tau_4) \\
& \le \frac{8 \rho N}{\alpha}   + 2\varepsilon N
\end{align*}
Using Lemma~\ref{lem:desc} to bound $\sum_i l_i$, and using the fact that $\rho+ \varepsilon \leq \frac{1}{18}$, and that $N\ge T/\varepsilon^2$, we have 
\begin{align*} \displaybreak[2]
&g_A - (1+\varepsilon)(b_A + b_B) - (\sum_{i=1}^T l_i) -(N_A - (1-2\rho) N)\\*
&\geq N_A - b_A - (1+\varepsilon)(b_A + b_B) - T (\log(\frac{N_A+N_B}{T}) + 4) - (N_A - (1-2\rho) N) \\
&\geq N_A -  (2+\varepsilon)(b_A + b_B) - T (\log(\frac{N_A+N_B}{T}) + 4) - (N_A - (1-2\rho) N) \\
&\geq (1-2\rho) N - (2+\varepsilon)( \frac{8 \rho N}{\alpha}  + 2\varepsilon N ) - T (\log(\frac{2N}{T})+4)\\
&\ge 11\varepsilon N  - T (\log(\frac{2N}{T})+4) \\
&\ge T( \frac{11}{\varepsilon} - 4  -\log(\frac{2}{\varepsilon^2})) \\
&> 0
\end{align*}
\end{proof}
Similarly, we have $g_B - (1+\varepsilon)(b_A + b_B) - (\sum_{i=1}^T l_i) - (N_B - (1-2\rho) N) > 0$. 

With the above claim and Lemma \ref{lem:prog2}, it follows that both Alice and Bob will finish sending all the edges on $P$. 
Also by Lemma \ref{lem:prog2}, we have 
\[
g_A[1,m(T)] \leq \varepsilon \cdot b_A[1,m(T)]  + (1+\varepsilon)b_B[1,m(T)] + (\sum_{i=1}^T l_i ).
\]
Therefore, we have
\[
g_A[m(T)+1,N_A+N_B] =  g_A - g_A[1,m(T)] > b_A + (N_A - (1-2\rho) N).
\]
As Alice could have at most $(N_A - (1-2\rho) N)$ good decodings after she gives her output, the number of good decodings in the period between $m(T)$ and the point where Alice gives an output, is larger than $b_A$, suggesting the correct leaf will get at least that many votes.
On the other hand, any bad leaf will get at most $b_A$ votes, thus Alice outputs the correct leaf. A similar reasoning applies for Bob.
\end{proof}


\section{Potent edit-distance tree code}
\label{sec:potent}
In this section we show that it is simple to construct a relaxed notion of \EDTC which is very useful for most applications, namely potent edit-distance tree codes (\PEDTC). 
This follows by extending the techniques of Gelles, Moitra and Sahai for constructing a relaxed notion of standard tree codes~\cite{GMS11}, to the edit-distance case.
\begin{definition}(Bad Interval)
For a given $d$-ary prefix code of depth $n$  
let (A,B,D,E) be an $\alpha$-bad lambda (Definition \ref{def:lambda}). 
Let $h$ be the depth of $A$ and $\ell=\max(|AD|,|AE|)$. 
We say that the  interval $[h, h+\ell]$ is \emph{$\alpha$-bad}.
\end{definition}


\begin{definition}[$(\delta,\alpha)$-\PEDTC]
\label{def:PEDTC}
An \emph{$(\delta,\alpha)$-bad tree}
is a prefix code of depth $n$
that has a path $Q$ for which the union of all bad intervals along $Q$ (i.e., all bad intervals for which either the point $E$ or $D$ of the bad-alpha inducing that bad interval is a node on~$Q$)
has a total length of at least~$\delta n$.

If the tree is not an {$(\delta,\alpha)$-bad tree}, then we will call it 
a $(\delta,\alpha)$-potent edit-distance tree-code, or $(\delta,\alpha)$-\PEDTC for short.
\end{definition}

In other words, in a \PEDTC, for any given root-to-leaf path $Q$ in the tree, 
if we sum up all the nodes that belong to some bad lambda, their number would be less than~$\delta n$. 
Next we show that a tree whose labels are randomly chosen is a \PEDTC with high probability.

\begin{proposition}
For any constants $\delta,\alpha \in (0,1)$ and for any $d \geq 2$,
and for any $|\Sigma_{out}| \ge (32d^3)^{\frac{8}{(1-\alpha)\delta}}$, 
a $d$-ary tree whose labels are independently and uniformly  selected from~$\Sigma_{out}$ is a  ($\delta,\alpha$)-\PEDTC with probability~$\ge 1 - 2^{-n}$.
\end{proposition}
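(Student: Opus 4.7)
The plan is to union-bound over all $d^n$ root-to-leaf paths $Q$ in the tree, and for each fixed $Q$ show that the probability that the union of the bad intervals touching $Q$ has total length at least $\delta n$ is exponentially small, so that summing over paths still gives at most $2^{-n}$.

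The first key step is a selection lemma: if this union has length at least $\delta n$, then by a Vitali-type greedy (longest-first) argument one can extract a pairwise disjoint sub-collection $\{L_1,\ldots,L_k\}$ of the bad lambdas on $Q$---meaning the intervals $[h_{A_i},h_{A_i}+\ell_i]$ are disjoint subsets of $[0,n]$---whose total length $\sum_i \ell_i$ is at least $\delta n/3$. Passing to disjoint lambdas is essential because it provides \emph{independence}: the $A_iD_i$-labels live on $Q$ strictly inside the $i$-th interval, while the $B_iE_i$-labels live on an off-$Q$ subtree hanging off a node $B_i\in Q$; disjoint intervals along $Q$ force all these label sets to be mutually disjoint random variables, so the events ``$L_i$ is bad'' are mutually independent.

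For a single lambda $L$ with $\ell=\max(|AD|,|BE|)$, conditioning on the longer of $AD,BE$ and applying Lemma~\ref{lem:edpr} to the shorter one (which is a uniform random string of length at most $\ell$) yields $\Pr[L \text{ bad}]\le s^{-(1-\alpha)\ell/2}$, writing $s=|\Sigma_{out}|$; by independence, $\Pr[\text{all } L_i \text{ bad}]\le s^{-(1-\alpha)L/2}$ for $L=\sum_i \ell_i$. On the counting side, the number of lambda structures anchored at a given $h_A$ with length $\ell$ is at most $2\ell^3 d^\ell$ (choices of $h_B,h_D,h_E$ inside the interval, and the off-$Q$ descending path for $BE$). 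Using $\ell^3\le 3^\ell$ and summing over the number $k$ of lambdas, compositions of $L$ into $k$ positive parts, and placements of $k$ disjoint intervals in $[0,n]$, the number of disjoint collections of total length exactly $L$ is bounded by $2^n (Cd)^L$ for an absolute constant $C$. Combining and union-bounding over $d^n$ paths gives
\[
\Pr[\text{tree is an }(\delta,\alpha)\text{-bad tree}]\;\le\; d^n\cdot 2^n\cdot\sum_{L\ge\delta n/3}\bigl(Cd\cdot s^{-(1-\alpha)/2}\bigr)^{L},
\]
and for the stated value of $s$ the ratio inside the parentheses is extremely small, so the geometric series is dominated by its leading term at $L\approx\delta n/3$, and a direct calculation shows the whole expression is at most $2^{-n}$.

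The main difficulty I expect is the selection step: without passing to a disjoint sub-collection I would lose the independence needed to multiply single-lambda probabilities, and a plain Markov/first-moment bound on the total bad-interval length yields only constant-probability control, whereas I need exponentially small failure. A secondary technical point is absorbing the polynomial-in-$\ell$ structure-counting factors into the exponential decay $s^{-(1-\alpha)\ell/2}$, which is precisely why the threshold on $|\Sigma_{out}|$ has $(1-\alpha)\delta$ appearing in the denominator of the exponent.
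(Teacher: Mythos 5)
Your proposal is correct and follows essentially the same route as the paper's proof: union-bound over root-to-leaf paths, pass to a disjoint sub-collection of bad intervals (the paper invokes Schulman's interval-selection lemma to keep half the total length, versus your greedy giving a third), exploit the resulting label-disjointness for independence, bound each lambda via Lemma~\ref{lem:edpr}, and absorb the $d^{O(\ell)}$ structure counts into the $|\Sigma_{out}|^{-\Omega((1-\alpha)\ell)}$ decay. The only differences are bookkeeping constants, which the generous threshold on $|\Sigma_{out}|$ absorbs either way.
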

\begin{proof}
Assume we construct a tree by randomly picking each label uniformly from~$\Sigma_{out}$. 
Assume the obtained tree code is not ($\delta,\alpha$)-potent. Then, it means there exist a path $Q$ and a set of nodes $\{x_1, \ldots, x_d\}$ along it, so that for each $x_i$ there exists points $\lambda_{x_i}=(A,B,D,E)_{x_i}$ where $x_i=E$ or $x_i=D$ such that 
$\lambda_{x_i}$
is a $\alpha$-bad lambda that induces the interval~$l_i$, and that 
$|\bigcup_i l_i| \ge\delta n$.

 The following technical lemma  says that there exists a subset of $\{x_1, \ldots, x_k\}$ whose nodes induce bad intervals of length at least $\delta n/2$ and the intervals are \emph{disjoint}.
Also, note that there are at most $2^n \times 2^n = 4^{n}$
ways to distribute these disjoint
intervals along the path~$Q$. This is because, in order to figure out the configuration of all intervals, we only have to determine which nodes are in some interval and which nodes are ends of some interval.

\begin{lemma}[\cite{schulman96}]\label{lem:intervals}
Let $\ell_1,\ell_2,\ldots,\ell_k$ be intervals on $\mathbb{N}$, and their union has length~$X$.
Then there exists a set of indices $I\subseteq \{1, 2, \ldots, k\}$ such that
the intervals indexed by $I$ are disjoint, and their
total length is at least $X/2$. That is, for any $i,j\in I$, $|\ell_i \cap \ell_j|=0$, and
$
\sum_{i\in I}\lvert\ell_i\rvert \ge X/2
$.
\end{lemma}
\noindent A proof is given in~\cite{schulman96}.

Now let's first consider the probability that some interval~$l_i$ is bad when the path~$Q$ is given. Let's first see which lambda ($A,B,D,E)$ can make the interval bad. We count these lambdas by figuring out points $A,B,D,E$ one by one. As $Q$ and $l_i$ are given, the point $A$ is fixed. 
Next let's figure out the positions of $D$ and $E$. We know that one of $D$ and $E$ should be on $Q$ and $\max (|AE|, |AD|) = |l_i|$. It follows that
there are at most $2|l_i| \cdot d^{|l_i|+1}$ ways to pick $E$ and $D$:
there are $\le |l_i|$ ways to pick the point which lies on~$Q$, and $\le d^{|l_i|+1}$ ways to pick the point which is not on~$Q$; another factor of~2 is for choosing whether $D$ or $E$ is the one that lies on~$Q$.
Also note that
each such choice determines the position of $B$ along~$Q$. 
So in total, there are at most $2|l_i| \cdot d^{|l_i|+1}$ lambdas that could induce the bad interval~$l_i$ on~$Q$. 

Given any lambda $\lambda_{x_i}$ 
Lemma~\ref{lem:edpr} bounds that the probability that $\lambda_{x_i}$ is $\alpha$-bad by 
\[
|\Sigma_{out}|^{-\frac{(1-\alpha)\max(|AD|,|BE|)}{2}} \leq  |\Sigma_{out}|^{-\frac{(1-\alpha)|l_i|}{4}}.
\] 
So the probability that $l_i$ on $Q$ is $\alpha$-bad is at most $2|l_i|d^{|l_i|+1}\cdot |\Sigma_{out}|^{-\frac{(1-\alpha)|l_i|}{4}}$. 

Using Lemma~\ref{lem:intervals} above, there exists a set $I$ so that $\{\lambda_{x_i}\}_{i\in I}$ induce disjoint intervals with a total length of at least~$\sum_{i\in I} |l_i| \ge \delta n/2$.
Since the intervals are disjoint, their probabilities to occur
are independent, and
the probability that a specific pattern of intervals happens is bounded by their product.
By setting $|\Sigma_{out}|\ge (32d^3)^{\frac{8}{(1-\alpha)\delta}}$
we can bound the probability 
that a \PEDTC is $(\delta,\alpha)$-bad by
\begin{align*}
\Pr[\text{ \PEDTC is }(\delta,\alpha)\text{-bad }]  
&\leq \sum_Q \sum_{\substack{ l_1,...,l_k: \\ \sum_{i=1}^k |l_i| \geq \delta n /2}} \prod_{i=1}^k Pr[l_i \text{ is a bad interval}]\\
&\leq \sum_Q \sum_{\substack{ l_1,...,l_k: \\ \sum_{i=1}^k |l_i| \geq \delta n /2} }\prod_{i=1}^k 2|l_i|d^{|l_i|+1}\cdot |\Sigma_{out}|^{-\frac{(1-\alpha)|l_i|}{4}}\\
&\leq d^n \cdot 4^n  \cdot 2^n\cdot 2^n \cdot  d^{2n} \cdot |\Sigma_{out}|^{-\frac{(1-\alpha)\delta n}{8}} \\
&\le (16d^3)^n \cdot (32d^3)^{-n} \\ 
&= 2^{-n}. \qedhere
\end{align*}
\end{proof}
Similar to~\cite{GMS11,GMS14} we can further derandomize the above construction, by making the random choice of each label coming out of a small-biased sample space~(e.g.,~\cite{AGHP92}). This immediately leads to an efficient (randomized) construction of \PEDTC with succinct description that succeeds with overwhelming probability. We omit the details and refer the reader to~\cite{GMS11,GMS14}.


\section{A coding scheme with a constant alphabet size and \PEDTC }
\label{sec:PEDTCpro}

In this section, we show that our scheme work the same when the edit-distance tree code is replaced with a potent edit-distance tree. Specifically,
we prove that  Protocol~\ref{pro:pi'} still solves the pointer jumping problem given a \PEDTC. 
 
First, we show an analogue of Lemma~\ref{lem:unidec} used for decoding the tree, to the case of potent trees.
\begin{lemma}
\label{lem:unidec2}
Let $C$: $\Sigma_{in}^n \rightarrow \Sigma_{out}^n$ be a $(\delta,\alpha)$-\PEDTC, and let $rm \in \Sigma_{out}^m$ ($m$ can be different from~$n$). If there exist $sm \in \cup_{i=1}^n C(\Sigma_{in}^n)[1..i]$ such that $SD(sm,rm) \leq \frac{\alpha}{2}$ and the end node of $sm$ on the tree is not in any bad interval of any path $Q$ that contains $sm$, then $sm$ will be the unique message $\in \cup_{i=1}^n C(\Sigma_{in}^n)[1..i]$ such that $SD(sm,rm) \leq \frac{\alpha}{2}$.
\end{lemma}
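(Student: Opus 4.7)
The plan is to mimic the contradiction argument in the proof of Lemma~\ref{lem:unidec}, then use the resulting $\alpha$-bad lambda to exhibit a bad interval containing the end node of $sm$ along a root-to-leaf path through $sm$. I will suppose towards contradiction that there is a second message $sm' \neq sm$ in $\cup_{i=1}^n C(\Sigma_{in}^n)[1..i]$ with $SD(sm',rm) \leq \alpha/2$; let $D$ and $E$ be the tree nodes at the ends of the root-to-node paths spelling $sm$ and $sm'$ respectively, and let $B$ be their common ancestor.

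First, I would rerun the derivation from the proof of Lemma~\ref{lem:unidec} essentially verbatim: take matchings $\tau,\tau'$ realizing $SD(sm,rm)$ and $SD(sm',rm)$, pick indices $i,i'$ with $del(\tau_1[i..|\tau_1|])=BD$ and $del(\tau'_1[i'..|\tau'_1|])=BE$, WLOG assume $del(\tau_2[i..|\tau_1|])$ is the shorter suffix of $rm$, and identify an ancestor $A$ of $B$ with $del(\tau_1[j..|\tau_1|])=AD$ for some $j\le i$. Fact~\ref{fact:ed+} then yields $ED(AD,BE)\le \alpha\cdot\max(|AD|,|BE|)$, so by Definition~\ref{def:lambda} the quadruple $(A,B,D,E)$ is an $\alpha$-bad lambda, and it induces a bad interval $[h,h+\ell]$ with $h$ the depth of $A$ and $\ell=\max(|AD|,|AE|)$.

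Next I would check that the end node of $sm$ lies in $[h,h+\ell]$ and that $[h,h+\ell]$ is a bad interval along some root-to-leaf path $Q$ containing $sm$. The WLOG above determines which of $sm, sm'$ plays the ``shorter suffix'' role; if it is $sm$, then $D$ is the end of $sm$ at depth $h+|AD|\in[h,h+\ell]$, while if it is $sm'$, the symmetric version of the argument yields a lambda whose $E$ is the end of $sm$ at depth $h+|AE|\in[h,h+\ell]$. In either case, one of $D, E$ of the induced $\alpha$-bad lambda coincides with the end of $sm$ and therefore lies on every root-to-leaf path $Q$ that contains $sm$, so by Definition~\ref{def:PEDTC} the interval $[h,h+\ell]$ is a bad interval along $Q$ and the end of $sm$ lies inside it, contradicting the hypothesis on $sm$.

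The main bookkeeping to watch is the WLOG case split: one has to confirm that no matter which of $sm, sm'$ plays the shorter-suffix role in the construction, the end node of $sm$ itself (rather than of $sm'$) is the one that lands inside the induced bad interval, since the non-bad-interval hypothesis is imposed only on $sm$. Otherwise the argument is a direct transcription of the Lemma~\ref{lem:unidec} proof with the extra observation that the $\alpha$-bad lambda produced is ``located'' in the tree precisely at $sm$'s terminal node.
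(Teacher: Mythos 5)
Your proof is correct and takes essentially the same route as the paper's: rerun the contradiction argument of Lemma~\ref{lem:unidec} to produce an $\alpha$-bad lambda whose two leaf points are the end nodes of $sm$ and $sm'$, and then observe that the end node of $sm$, being one of the lambda's $D$/$E$ points and lying at depth in $[h,h+\ell]$, sits inside the induced bad interval along any path $Q$ through $sm$, contradicting the hypothesis. Your explicit handling of the WLOG (which of $sm,sm'$ plays the shorter-suffix role) is exactly the detail the paper compresses into ``either $D$ or $E$ would be the end point of $sm$,'' and you correctly verify that both resolutions place the end of $sm$ inside the interval.
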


\begin{proof}
We prove this lemma by contradiction. Suppose there exist two messages $sm,sm' \in \cup_{i=1}^n C(\sum^n_{in})[1..i] $ such that both $SD(sm,rm)\leq \frac{\alpha}{2}$ and $SD(sm',rm)\leq \frac{\alpha}{2}$, and also the end node of $sm$ on the tree is not in any bad interval of any path $Q$ that contains $sm$. By the same argument in Lemma \ref{lem:unidec}, we know there is an $\alpha$-bad lambda in $C$. Let this $\alpha$-bad lambda to be $(A,B,D,E)$, then either $D$ or $E$ would be the end point of $sm$. Therefore, the end point of $sm$ would be inside some bad interval of some path $Q$. Now we get a contradiction.
\end{proof}

We can now prove the main theorem of this section, namely, the existence of a coding scheme that assumes $\PEDTC$.
\begin{theorem}
\label{thm:main3}
For any $\varepsilon>0$,
the simulation~$\pi'$ of Protocol~\ref{pro:pi'}
with 
$N = \lceil\frac{T}{\varepsilon^2}\rceil$, 
and an $(\varepsilon, 1-\varepsilon)$-\PEDTC instead of an edit distance tree code, 
solves $PJP(T)$ and 
is resilient to a $(1/18-\eps)$-fraction of edit corruptions.
\end{theorem}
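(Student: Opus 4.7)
The plan is to follow the proof of Theorem~\ref{thm:main2} essentially verbatim, with the only nontrivial change being how we book-keep the failures introduced by the weaker tree-code guarantee. The relevant replacement for the unique-decoding tool is Lemma~\ref{lem:unidec2}: in a $(\varepsilon,1-\varepsilon)$-\PEDTC, decoding by suffix distance still returns a unique preimage provided the end node of the sent prefix lies outside every $\alpha$-bad interval of the path through it. The whole scheme is structurally unchanged, so we only need to argue that the set of rounds for which this extra geometric condition fails is small.

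First, I would strengthen Definition~\ref{def:gooddec2} by adding a fourth requirement: the end node of the sender's current message prefix (which is what the receiver is trying to decode) is not contained in any $\alpha$-bad interval of the root-to-leaf path passing through it. By Lemma~\ref{lem:unidec2}, a decoding that is good in this strengthened sense is necessarily correct, so the progress lemma Lemma~\ref{lem:prog2}, which depends only on correctness of decodings (and on the table update rules), carries over with no change.

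Second, I would re-prove the analog of Lemma~\ref{lem:gooddec2}. Within a single run, Alice is always trying to decode a prefix of the single stream $s_B[1..N_B]$, which traces one root-to-leaf path $Q_B$ in the tree code. By the defining property of a $(\varepsilon,1-\varepsilon)$-\PEDTC, the union of bad intervals along $Q_B$ has total length less than $\varepsilon n \le \varepsilon N$, so at most $\varepsilon N$ of the end-nodes $s_B[1..i']$ lie in bad intervals. Since the ``good $t$'s'' produced by the procedure in the proof of Lemma~\ref{lem:gooddec} correspond injectively (via the string matching) to distinct prefixes of Bob's stream, at most $\varepsilon N$ of them can be invalidated by the new condition. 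Combined with Lemma~\ref{lem:page}, this yields
\[
g_A \;\ge\; N_A(1-\varepsilon) + (1-\tfrac{2}{\alpha})\,sc(\tau_2) - (1+\tfrac{2}{\alpha})\,sc(\tau_1) - \varepsilon N,
\]
and symmetrically for $g_B$, i.e.\ an extra additive loss of $\varepsilon N$ per party compared to Lemma~\ref{lem:gooddec2}.

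Finally, I would rerun the calculation in Claim~\ref{clm:main2}. The bad-decoding bound becomes $b_A + b_B \le \tfrac{8\rho N}{\alpha} + 4\varepsilon N$ (instead of $+2\varepsilon N$), giving an additional $(2+\varepsilon)\cdot 2\varepsilon N$ on the negative side of the inequality; since the original proof already exhibited a slack of roughly $11\varepsilon N$, this extra $\Theta(\varepsilon N)$ is absorbed (at worst by adjusting the constant in the requirement $N \ge T/\varepsilon^2$). The edge-encoding bound Lemma~\ref{lem:desc} and the progress lemma Lemma~\ref{lem:prog2} are unchanged, so the same two-step conclusion---Alice and Bob both finish transmitting the edges of~$P$, and then each accumulates more correct votes than incorrect ones before round $N(1-2\rho)$---goes through. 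The main subtlety to verify is the injection from good $t$'s to positions on $Q_B$: one has to confirm that the ``end node'' invoked in Lemma~\ref{lem:unidec2} is indeed the node at depth $j(t)$ on the single path $Q_B$, so that the $(\varepsilon,1-\varepsilon)$-\PEDTC guarantee---which is stated per root-to-leaf path---can be applied once and for all to that specific $Q_B$ rather than reinvoked round-by-round.
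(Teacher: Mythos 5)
Your proposal is correct and follows essentially the same route as the paper: invoke Lemma~\ref{lem:unidec2} in place of Lemma~\ref{lem:unidec}, observe that Lemmas~\ref{lem:page}, \ref{lem:prog2} and \ref{lem:desc} are unaffected, charge the newly-failing decodings to distinct end-nodes on the single path $Q_B$ traced by $s_B[1..N_B]$ (so the $(\varepsilon,1-\varepsilon)$-\PEDTC property bounds them by $\varepsilon N_B\le\varepsilon N$), and absorb the resulting extra $2\varepsilon N$ in the bound $b_A+b_B\le \frac{8\rho N}{\alpha}+4\varepsilon N$ into the remaining slack. The paper's version of your final step shows the slack shrinks from $11\varepsilon N$ to $6\varepsilon N$ but remains positive with the stated $N=\lceil T/\varepsilon^2\rceil$, so no constant adjustment is actually needed.
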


\begin{proof}
The proof is very similar to the proof of Theorem \ref{thm:main2}. We are going to use exactly the same notations like Definition \ref{def:gooddec2} and Definition \ref{def:region}. And it is easy to check that Lemma \ref{lem:page}, Lemma \ref{lem:prog2} and Lemma \ref{lem:desc} still hold for Protocol \ref{pro:pi'} with \PEDTC. The only difference is that we need to make a slight change in Lemma \ref{lem:gooddec2}. We prove the following lemma as the analogy of Lemma \ref{lem:gooddec2}.

\begin{lemma}
\label{lem:gooddec3}
Alice has at least 
$N_A(1-\varepsilon) + (1-\frac2\alpha)sc(\tau_2) -(1+\frac2\alpha) sc(\tau_1) - \varepsilon N_B$ 
good decodings. Bob has at least 
$N_B(1-\varepsilon) + (1-\frac2\alpha)sc(\tau_4) -(1+\frac2\alpha)sc(\tau_3) - \varepsilon N_A$ good decodings. 
\end{lemma}
\begin{proof}
We prove the lemma for Alice; the lemma for Bob holds for the same reason. Compare this lemma with Lemma \ref{lem:gooddec2}, it suffices to show that the number of new bad decodings of Alice caused by changing edit distance tree code to PEDTC is at most $\varepsilon N_B$. For an instance of running Protocol \ref{pro:pi'} with \PEDTC, let $Q$ be the path of $s_B[1..N_B]$(all the symbols sent by Bob) on the tree of \PEDTC $C$. By comparing Lemma \ref{lem:unidec} and Lemma \ref{lem:unidec2} we know that the corresponding sending message of each new bad decoding has end point in some bad interval of $Q$. We also know that each new bad decoding satisfies the constraint that the symbol just received in this decoding was not inserted by the adversary. So the corresponding sending messages of new bad decodings have different end points on the tree of \PEDTC. By the definition of $(\varepsilon,1-\varepsilon)$-\PEDTC, we know that there are at most $\varepsilon N_B$ nodes on $Q$ which are in bad intervals. Therefore, the number of new bad decodings is at most $\varepsilon N_B$. 
\end{proof}

Next we are going to prove an analogue of Claim \ref{clm:main2} for Protocol \ref{pro:pi'} with \PEDTC. 
\begin{claim}
\label{clm:main3}
\[
g_A -  (\sum_{i=1}^T l_i) \ge  (N_A - (1-2\rho) N) -(1+\varepsilon)(b_A + b_B) 
\]
\end{claim}
\begin{proof}
By Lemma \ref{lem:gooddec3}, we have
\begin{align*}
&b_A + b_B  \\
& \le N_A -g_A  + N_B-g_B \\
& \le 
	2\varepsilon N_A +\tfrac{2}{\alpha}(sc(\tau_1) + sc(\tau_2)) + sc(\tau_1) - sc(\tau_2) + \\
&~~~~~	2\varepsilon N_B + \tfrac{2}{\alpha}(sc(\tau_3) + sc(\tau_4)) + sc(\tau_3) -sc(\tau_4) \\
& \le \frac{8 \rho N}{\alpha}   + 4\varepsilon N
\end{align*}
Using Lemma~\ref{lem:desc} to bound $\sum_i l_i$, and using the fact that $\rho+ \varepsilon \leq \frac{1}{18}$, and that $N\ge T/\varepsilon^2$, we have 
\begin{align*}
&g_A - (1+\varepsilon)(b_A + b_B) - (\sum_{i=1}^T l_i) -(N_A - (1-2\rho) N)\\
&\geq N_A - b_A - (1+\varepsilon)(b_A + b_B) - T (\log(\frac{N_A+N_B}{T}) + 4) - (N_A - (1-2\rho) N) \\ \displaybreak[0]
&\geq N_A -  (2+\varepsilon)(b_A + b_B) - T (\log(\frac{N_A+N_B}{T}) + 4) - (N_A - (1-2\rho) N) \\ \displaybreak[0]
&\geq (1-2\rho) N - (2+\varepsilon)( \frac{8 \rho N}{\alpha}  + 4\varepsilon N ) - T (\log(\frac{2N}{T})+4)\\ \displaybreak[0]
&\ge 6\varepsilon N  - T (\log(\frac{2N}{T})+4) \\
&\ge T( \frac{6}{\varepsilon} - 4  -\log(\frac{2}{\varepsilon^2})) \\
&> 0
\end{align*}
\end{proof}
After proving Claim \ref{clm:main3}, in order to finish the proof, we just need to exactly follow the argument after Claim \ref{clm:main2} in Theorem \ref{thm:main2}.
\end{proof}


\section{Upper bound on the noise fraction}
\label{sec:upperbound}

In this section we show that no protocol can tolerate edit-corruption noise fraction
of $\rho=1/6$ or more, assuming that the parties are required to be correct after receiving $(1-2\rho)N$ symbols. Intuitively, since the effective protocol is shorter, 
the noise bound of~$1/4$ (that stems from the ability to perform substitutions) becomes higher. Indeed, $1/4 \cdot (1-2\rho)N$ corruptions out of a total communication of~$2N$ amounts to a fraction of errors of
\[
\rho = \frac{1/4 \cdot (1-2\rho)N}{2N} \quad \Longrightarrow \quad \rho = \frac16. 
\]

We remind that the requirement to give an output comes from Eve's possibility to delete~$2\rho N$ at one party, making this party receive at most $(1-2\rho)N$ symbols throughout the protocol. Below we give a formal attack that demonstrate the bound of~$1/6$, under this stringent requirement. 
\begin{theorem}
Let $\pi$ be an interactive protocol for $PJP(T)$ that is resilient to 
a $\rho$-fraction of edit corruptions, and assume $|\pi|=N$.
If both parties are required to output the correct output at round $(1-2\rho)N$, 
and assuming an adversarial edit-corruption rate of~$\rho=1/6$, 
then no protocol outputs the correct output with probability $>1/2$.
\end{theorem}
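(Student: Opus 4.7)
The plan is to reduce this statement to the classical impossibility of interactive coding against a $1/4$-fraction of substitution errors, which already traces back to arguments in the style of Braverman--Rao. Two observations, both implicit in the paper's intuition preceding the theorem, drive the reduction.

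First, since both parties must output by round $M = (1-2\rho)N$, only the first $2M = 2(1-2\rho)N$ symbols of total communication are ever consulted by the output rule. Moreover, because a single edit-corruption is a deletion followed by an insertion, Eve can realize any substitution in the same direction at unit cost. Hence the adversary with budget $2\rho N$ edit-corruptions may, in particular, apply any substitution pattern of size $2\rho N$ to the first $2M$ symbols. At $\rho=1/6$ the resulting relative substitution rate is exactly $\frac{2\rho N}{2M} = \frac{\rho}{1-2\rho} = \frac14$, i.e., the classical threshold.

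Second, I would instantiate the standard confusion attack for substitution noise at rate $1/4$. Fix inputs $(x,y_0)$ and $(x,y_1)$ with $f(x,y_0)\neq f(x,y_1)$, and let $B_0,B_1$ denote the sequences of Bob's first $M$ outgoing symbols in the two noiseless executions. The goal of the attack is to make Alice's view identical in the two scenarios: Eve picks a ``midway'' view $B^\ast$ with $d_H(B_b,B^\ast)\leq d_H(B_0,B_1)/2 \leq M/2$ for $b\in\{0,1\}$, then in scenario $b$ substitutes Bob's outgoing symbols into $B^\ast$. Since both executions deliver the same $B^\ast$ to an Alice with the same input $x$, Alice outputs the same value in both, and must be wrong in at least one, driving her success probability down to $1/2$. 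The total substitution cost in each scenario is at most $M/2 = 2\rho N$, matching the budget exactly at $\rho = 1/6$.

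The main technical hurdle is that the ``midway'' view $B^\ast$ must be interactively consistent: Bob's outgoing symbols depend on Alice's responses, which in turn depend on $B^\ast$ through Alice's deterministic strategy. The classical resolution is a hybrid argument, parameterized by a switch round $t\in[0,M]$, that defines an execution $H_t$ in which Bob plays according to $y_0$ for rounds $\leq t$ and according to $y_1$ afterwards; a telescoping/pigeonhole step then extracts a $t^\ast$ for which Eve's substitution cost in both real scenarios is at most $M/2$. Porting this argument to the edit-corruption model requires some care with the asynchronous round counting, since one party may advance faster than the other, but the substitution cost is naturally charged per-direction to each party's receive queue, so the reduction goes through and the classical count of $M/2$ substitutions per scenario is inherited unchanged.
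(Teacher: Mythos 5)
Your proposal is correct in its essentials, but it takes a genuinely different route from the paper's formal proof. The paper's attack is native to the indel model: for the instance $(x,y)=(0,0)$, Eve spends her entire budget of $2\rho N=N/3$ edit-corruptions on out-of-sync attacks, deleting each of Alice's first $N/3$ outgoing symbols and inserting to Alice a spoofed reply simulating Bob on $y=1$; she then stops, so the real Bob (on $y=0$) starts the protocol fresh from Alice's round-$(N/3{+}1)$ symbol. Alice's first $2N/3$ received symbols are then identical to those in a symmetric attack on the instance $(0,1)$ (where Eve is honest for the first $N/3$ rounds and spoofs a fresh Bob-on-$y=0$ for the next $N/3$), so Alice cannot distinguish the two inputs at the round where she must output. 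Your argument instead restricts Eve to same-direction substitutions (each costing one edit-corruption, keeping the parties synchronized) and invokes the classical $1/4$ substitution impossibility on the truncated $2M$-symbol prefix; this is precisely the heuristic the paper itself offers in the paragraph preceding the theorem, and the budget accounting $2\rho N = M/2$ at $\rho=1/6$ checks out. Two small caveats on your write-up: the ``midway view $B^\ast$ at Hamming distance $d_H(B_0,B_1)/2$'' is the non-interactive codeword argument and is not well-defined here; the correct instantiation is the crossover/hybrid execution with the switch fixed at $t=M/2$ (Alice hears Bob-on-$y_0$ for the first half and Bob-on-$y_1$ for the second), which costs at most $M/2$ substitutions in each real scenario with no pigeonhole needed. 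Also, your closing worry about asynchronous round counting is moot in your own attack, since substitutions never desynchronize the parties. The trade-off between the two approaches: yours cleanly isolates why the threshold is $\frac{\rho}{1-2\rho}=\frac14$ and reuses known machinery, while the paper's attack illustrates that the out-of-sync capability alone (without any substitution) already saturates the same bound.
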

\begin{proof}
Assume the parties run an instance of $\pi$ on inputs $(x,y)=(0,0)$.
Eve performs the following attack (wlog, attacking Alice): 
For the first $N/3$ symbols sent by Alice, Eve deletes those symbols and inserts back to Alice symbols that simulate Bob on input $y=1$. Then (at round $N/3$ and beyond), Eve does nothing.

Recall that Alice must output the correct value after receiving $(1-1/3)N=2N/3$ symbols. However her view at this time is indistinguishable from an instance of $\pi$ on inputs $(x,y)=(0,1)$ in which Eve corrupts Alice rounds $[N/3, 2N/3]$ by deleting the symbols Alice is sending in these rounds, and inserting Bob's symbols assuming $y=0$, and assuming the $N/3$-th symbol sent by Alice is the first received by Bob. This implies that Alice cannot determine Bob's input at round\footnote{We mention that it may be possible for Alice to output the correct answer by round~$N$.} $2N/3$ under this attack, which proves the claim. 
\end{proof}

If we don't require the the parties to output the correct value already at round $(1-2\rho)N$, then it is possible to show an upper bound of $\rho=1/4$ on the fraction of noise, similar to the case of interactive protocols over standard noisy channels~\cite{BR14}.


\end{document}